\setlist[1]{itemsep=-5pt}
\newcommand{\setappendix}{Appendix~\thesection:~~}
\newcommand{\setsection}{\thesection~~}
\titleformat{\section}{\bfseries\LARGE}{%
	\ifnum\pdfstrcmp{\@currenvir}{appendices}=0
	\setappendix
	\else
	\setsection
\fi}{0em}{}
\newcommand{\Tr}{\text{Tr}}
\def \({\left(}
\def \){\right)}
\def \[{\left[}
\def \]{\right]}
\newcommand{\bb}{{\textbf {b}}}
\newcommand{\bmm}{{\textbf {m}}}
\newcommand{\br}{{\textbf {r}}}
\newcommand{\bomega}{{\boldsymbol{\omega}}}
\newcommand{\bF}{{\textbf {F}}}
\newcommand{\bY}{{\textbf {Y}}}
\newcommand{\bh}{{\textbf {h}}}
\newcommand{\bv}{{\textbf {v}}}
\newcommand{\bX}{{\textbf {X}}}
\newcommand{\bx}{{\textbf {x}}}
\newcommand{\blambda}{{\boldsymbol{\lambda}}}
\newcommand{\by}{{\textbf {y}}}
\newcommand{\bz}{{\textbf {z}}}
\newcommand{\bc}{{\textbf {c}}}
\newcommand{\be}{\begin{equation}}
\newcommand{\ee}{\end{equation}}
\newcommand\smallO{
  \mathchoice
    {{\scriptstyle\mathcal{O}}}
    {{\scriptstyle\mathcal{O}}}
    {{\scriptscriptstyle\mathcal{O}}}
    {\scalebox{.7}{$\scriptscriptstyle\mathcal{O}$}}
  }
\newcommand{\bea}{\begin{align}}
\newcommand{\eea}{\end{align}}
\newcommand{\norm}[1]{\left\lVert#1\right\rVert} 
\newtheorem{model}{\textbf{Model}}
  \newenvironment{custommodel}[1]
  {\model}
  {\endmodel}
\newtheorem{theorem}{Theorem}
\newtheorem{conjecture}{Conjecture}
\newtheorem{property}{\textbf{Property}}
\theoremstyle{definition}
\DeclareMathAlphabet{\varmathbb}{U}{bbold}{m}{n}
\newcommand{\EE}{\mathbb{E}}
\newcommand{\bbR}{\mathbb{R}}
\newcommand{\bbN}{\mathbb{N}}
\newcommand{\bbC}{\mathbb{C}}
\renewcommand{\Tr}{{\rm Tr}}
\begin{document}

\setcounter{tocdepth}{3}
\setcounter{secnumdepth}{3}

\title{High-temperature Expansions and Message Passing Algorithms}
\date{}
\author{Antoine Maillard$^{\star,\otimes}$, Laura Foini$^\dagger$, Alejandro Lage Castellanos$^\diamond$, \\
Florent Krzakala$^\star$, Marc M\'ezard$^\star$, Lenka Zdeborov\'a$^\dagger$}
\maketitle
{\let\thefootnote\relax\footnote{
\!\!\!\!\!\!\!\!\!\!
$\star$ Laboratoire de Physique de l'ENS, PSL University, 
CNRS, Sorbonne Universit\'es, Paris, France.\\
$\dagger$ Institut de Physique Th\'eorique, CNRS, CEA, Universit\'e Paris-Saclay, Saclay, France.\\
$\diamond$ University of Havana - Departamento de Física Te\'orica, Havana, Cuba.\\
$\otimes$ To whom correspondence shall be sent: \href{mailto:antoine.maillard@ens.fr}{antoine.maillard@ens.fr}
}}
\setcounter{footnote}{0}

\begin{abstract}	
        Improved mean-field technics are a central theme of
        statistical physics methods applied to inference and learning.
        We revisit here some of these methods using high-temperature
        expansions for disordered systems  initiated by Plefka,
        Georges and Yedidia.  We derive the Gibbs free entropy and the
        subsequent self-consistent equations 
        for a generic class of statistical models with correlated matrices and
        show in particular that many classical approximation schemes, such as
        adaptive TAP, Expectation-Consistency, or the approximations behind
        the Vector Approximate Message Passing algorithm all rely on the same
        assumptions, that are also at the heart of high-temperature expansions. 	
        We focus on the case of rotationally invariant random
        coupling matrices in the `high-dimensional' limit in which
        the number of samples and the dimension are both large, but
        with a fixed ratio. This encapsulates many widely studied
        models, such as Restricted Boltzmann Machines or Generalized
        Linear Models with correlated data matrices. In this general
        setting,  we show that all the approximation schemes
        described before are equivalent, and we conjecture that they
        are exact in the thermodynamic limit in the replica symmetric phases.
        We achieve this conclusion by resummation of the infinite
        perturbation series, which generalises a seminal result of
        Parisi and Potters. A rigorous derivation of this conjecture
        is an interesting mathematical challenge.
        On the way to these conclusions, we uncover several diagrammatical results
        in connection with free probability and random matrix theory,
        that are interesting independently of the rest of our work.
\end{abstract}

\begin{spacing}{1.09}
\tableofcontents
\end{spacing}
\renewcommand{\labelitemi}{$\bullet$}

\newpage

\section{Introduction}\label{sec:introduction}

   \subsection{Background and overview of related works}\label{subsec:background}

   Many inference and learning tasks can be formulated as a statistical physics problem, where one needs to compute or approximate the marginal distributions of single variables in an interacting model. This is, for instance, the basis behind the popular variational mean-field approach \cite{Jordan}. Going beyond the naive mean-field theory has been a constant goal in both physics and machine learning.  One approach, for instance, has been very effective on tree-like structures: the Bethe approximation, or Belief-Propagation. Its development in the statistical physics of disordered systems can be traced back to Thouless-Anderson-Palmer (TAP) \cite{thouless1977solution} and has seen many developments since then \cite{mezard1987spin,yedidia2003understanding,mezard2009information,zdeborova2016statistical}. Over the last decades, in particular, there has been many works on densely connected models, leading to a myriad of different approximation schemes. In many disordered problems with i.i.d.\ couplings, a classical approach has been to write the TAP equations as an iterative scheme. Iterative algorithms based on this scheme are often called \emph{Approximate Message Passing} (AMP) \cite{donoho2009message,krzakala2012probabilistic} in this context.

   AMP, or TAP, is an especially powerful approach when the coupling constants in the underlying statistical model
   are distributed as i.i.d.\ variables. 
   This is, of course, a strong limitation and many inference schemes have been  designed to improve on it: 
   the adaptive TAP (adaTAP) method \cite{opper2001adaptive,opper2001tractable}, approximation schemes such as
   \emph{Expectation-Consistency} (EC) \cite{minka2001expectation,opper2005expectation} and the recent improvements of AMP such as \emph{Vector approximate Message Passing} (VAMP) and its variants \cite{ma2017orthogonal,rangan2017vector,schniter2016vector,opper2016theory,ccakmak2016self}. 
   Given all these approaches, one may wonder how different they are, and when they actually lead to asymptotically exact inference. 
   In this paper, we wish to address this question using two main tools: high-temperature expansions and random matrix theory.

   High-temperature expansions at fixed order parameters (denoted in this paper as ``Plefka expansions'') are an important tool of the study of disordered systems. 
   In the context of spin glass models, they have been introduced by Plefka \cite{plefka1982convergence} for the Sherrington-Kirkpatrick (SK) model, and have been subsequently generalized, in particular by Georges-Yedidia \cite{georges1991expand}. 
   This latter paper provides a systematic way to compute high-temperature (or high-dimension) expansions of the Gibbs free entropy \emph{for a fixed value of the order parameters} (that is Plefka expansions).

   One aim of the present paper is to apply this method to a general class of inference problems with 
   pairwise interactions, in which the coupling constants are not i.i.d., but they can have strong correlations, while keeping a rotational invariance that will be made explicit below. 
   In particular, we generalize earlier and inspirational work by Parisi and Potters \cite{parisi1995mean}, who computed the self-consistent equations for the marginals in 
   Ising models with orthogonal couplings via a resummation of the infinite series given by the high-temperature expansion. We shall show that a similar resummation yields the EC, adaTAP and VAMP formalisms.

   \subsection{Structure of the paper, and summary of our contributions}\label{subsec:main_results}

   In this paper, we perform Plefka expansions for a generic class of models of pairwise interactions with correlated matrices.
   We provide a detailed derivation of the method, inspired by the work of Georges-Yedidia \cite{georges1991expand} for Ising models,
   and we include new results on the diagrammatics of the expansions, leveraging  rigorous results of random matrix theory. This yields a general framework that encapsulates many known properties of systems sharing this pairwise structure.
   The main message of this work is that the three successful approximation schemes that have been developed in the last two decades, Expectation-Consistency, adaTAP or Vector Approximate Message Passing, are equivalent and rely on the same hidden hypothesis. A careful analysis of the Plefka expansion reveals this hypothesis, as it identifies the class of  high-temperature expansion diagrams that are effectively kept in these three schemes. A diagrammatic analysis leads us to conjecture that all these methods are asymptotically exact 
   for rotationally-invariant models, in the high-temperature phase. It is also worth noting that although all four methods (Expectation-Consistency, adaTAP, Vector Approximate Message Passing, Plefka expansion) lead to the same mean-field equations, the (most recent) VAMP approach presents the advantage of generating a ``natural'' way to iterate these equations, which turns them into efficient algorithms.
   We now turn to a more precise description of the content of the paper.
   Throughout the paper, we will use two random matrix ensembles that we will both refer to as being \emph{rotationally invariant}.
   The first one is defined as a measure over the set $\mathcal{S}_N$ of symmetric matrices:

   \begin{custommodel}{S}[Symmetric rotationally invariant matrix]\label{model:sym_rot_inv}
      Let $N \geq 1$. $J \in {\cal S}_N$ is generated as $J = O D O^\intercal$,
      in which $O \in \mathcal{O}(N)$ is drawn uniformly from the (compact) orthogonal group $\mathcal{O}(N)$,
      and $D = \mathrm{Diag}(\{d_i\}_{i=1}^N)$ is a random diagonal matrix, such that its empirical spectral distribution 
      $\rho^{(N)}_D \equiv \frac{1}{N} \sum_{i=1}^N \delta_{d_i}$ converges (almost surely) as $N \to \infty$ to a probability distribution $\rho_D$ with compact support.
      The smallest and largest eigenvalue of $D$ are assumed to converge almost surely to the infimum and supremum of the support of $\rho_D$.
   \end{custommodel}
   In a similar way, we define an ensemble of rectangular rotationally invariant matrices:
   \begin{custommodel}{R}[Rectangular rotationally invariant matrix]\label{model:nsym_rot_inv}
      Let $N \geq 1$, and $M = M(N) \geq 1$ such that $M/N \to \alpha >0$ as $N \to \infty$. $L \in \bbR^{M \times N}$ is generated via its SVD decomposition $L = U \Sigma V^\intercal$, in which $U \in \mathcal{O}(M)$ and $V \in \mathcal{O}(N)$ are
      drawn uniformly from their respective orthogonal group. $D \equiv \Sigma^\intercal \Sigma = \mathrm{Diag}(\{d_i\}_{i=1}^{N})$ is a diagonal matrix, such that its empirical spectral distribution 
      $\rho^{(N)}_D \equiv \frac{1}{N} \sum_{i=1}^N \delta_{d_i}$ converges (almost surely) as $N \to \infty$ to a probability distribution $\rho_D$, which has compact support.
      The smallest and largest eigenvalue of $D$ are assumed to converge almost surely to the infimum and supremum of the support of $\rho_D$.
   \end{custommodel}
   \paragraph{Examples} Examples of such random matrix ensembles include matrices generated via a \emph{potential} $V(x)$: one can generate 
   $J \in \mathcal{S}_N$ with a probability density proportional to $e^{-\frac{N}{2} \, \mathrm{Tr}  \, V(J)}$, and this kind of matrix satisfies the 
   hypotheses of Model~\ref{model:sym_rot_inv}. These ensembles also include the following well-known examples:
   \begin{itemize} 
      \item The Gaussian Orthogonal Ensemble (GOE), in the case of Model~\ref{model:sym_rot_inv} with a potential $V(x) = x^2/2$.
      \item The Wishart ensemble with a ratio $\psi \geq 1$. This corresponds to a random matrix $W = X X^\intercal / m$, with $X \in \bbR^{n \times m}$ an i.i.d.\ standard Gaussian
      matrix, and $n,m \to \infty$ with $m / n \to \psi$.
      This ensemble satisfies Model~\ref{model:sym_rot_inv}, with a potential $V(x) = x - (\psi-1) \log x$.
      \item Standard Gaussian i.i.d.\ rectangular matrices, for Model~\ref{model:nsym_rot_inv}. One can also think of them as generated via a potential, 
      as the probability density of such a matrix is $\mathbb{P}(L) \propto e^{-\frac{1}{2}\mathrm{Tr}\, L^\intercal L}$.
      \item Generically, consider a random matrix $L$ from Model~\ref{model:nsym_rot_inv}. Then, both $J_1 \equiv L^\intercal L$ and $J_2 \equiv L L^\intercal$ 
      satisfy the hypotheses of Model~\ref{model:sym_rot_inv}.
   \end{itemize}
    The structure of our work is as follows:
\begin{itemize}[itemsep=-2pt]
   \item \textbf{Spherical models with rotationally invariant couplings} 
   In Sec.~\ref{sec:spherical_bipartite}, we focus on spherical models and we generalize the seminal works of \cite{marinari1994replica,marinari1994replica2, parisi1995mean}.
   While they studied Ising models with orthogonal couplings,
   we consider spherical models, just assuming the coupling matrix to be rotationally invariant. We consider two types of models: ``symmetric'' models with an interaction 
   of the type $\bx^\intercal J \bx$ , in which $J$ follows Model~\ref{model:sym_rot_inv}, and ``bipartite'' models with interactions of the type $\bh^\intercal F \bx$, 
   in which $F$ follows Model~\ref{model:nsym_rot_inv}.
    This encapsulates orthogonal couplings, but can also be applied to 
   other random matrix ensembles such as the Gaussian Orthogonal Ensemble (GOE), the Wishart ensemble, and many others.
    Using diagrammatic results that we derive with
   random matrix theory, we  conjecture a resummation of the Plefka expansion giving the Gibbs free entropy in these models. 
   Our results are in particular consistent with the findings of classical works for Gaussian couplings \cite{plefka1982convergence} and orthogonal couplings \cite{parisi1995mean}.
   \item \textbf{Plefka expansion for statistical models with correlated couplings} 
   Sec.~\ref{sec:stat_models} is devoted to 
   the description of the Plefka expansion for different statistical models and inference problems which possess a coupling or data matrix that has rotation invariance properties. 
   We consider models similar to the spherical models of Sec.~\ref{sec:spherical_bipartite}, but with generic prior distributions
   on the underlying variables. 
   In Sec.~\ref{subsec:ep_adatap}, we recall the Expectation-Consistency (EC), adaTAP and VAMP approximations
   and comment briefly on their respective history, before showing that they are equivalent. 
   As a consequence, we will generically refer to these approximations as the \emph{Expectation-Consistency approximations} (EC).
   We hope that our paper will help providing a unifying presentation of these works, generalizing them by leveraging random matrix theory.
   Our main conjecture for this part can be stated as the following:
   \begin{conjecture}\label{conj:main_conj_approximation}[Informal]
    For statistical models of symmetric or bipartite interactions with coupling matrices that satisfy respectively Model~\ref{model:sym_rot_inv} or
      Model~\ref{model:nsym_rot_inv}, the three equivalent approximations, Expectation-Consistency, adaTAP and VAMP (generically denoted EC approximations), are exact in the large size limit in the high temperature phase. 
   \end{conjecture}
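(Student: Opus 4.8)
The plan is to prove Conjecture~\ref{conj:main_conj_approximation} by showing that the replica-symmetric free entropy obtained by resumming the Plefka expansion — denote it $\Phi_{\rm RS}$, as constructed in the body of the paper — equals the limiting free entropy density $\varphi \equiv \lim_{N\to\infty}\frac1N\log Z_N$ of the interacting model throughout the high-temperature (replica-symmetric) phase, and then to transfer this identity to the single-site marginals, which will be seen to solve exactly the EC/adaTAP/VAMP self-consistent equations (these three being already identified in the body). I would organise the argument in three blocks: (i) concentration of $\frac1N\log Z_N$ about its mean; (ii) matching upper and lower bounds identifying $\mE[\frac1N\log Z_N]$ with $\Phi_{\rm RS}$; (iii) a perturbation argument promoting the free-entropy identity to an identity of marginals.

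Step (i) should be comparatively routine. Writing $J = ODO^\intercal$ (Model~\ref{model:sym_rot_inv}) or $L = U\Sigma V^\intercal$ (Model~\ref{model:nsym_rot_inv}), the disorder lives on compact orthogonal groups carrying a curvature lower bound, so $\log Z_N$ is a Lipschitz function of the Haar variables with constant $O(\sqrt N)$ — the convergence and compact-support assumptions on $\rho_D$ in the two models being precisely what make this constant uniform — and a concentration-of-measure inequality on $\mathcal{O}(N)$ then yields sub-Gaussian fluctuations of the free entropy density of order $O(N^{-1/2})$. Step (iii) follows once $\varphi$ is pinned down: one adjoins to the Hamiltonian an infinitesimal external field conjugate to $x_i$, differentiates $\Phi_{\rm RS}$, and uses its analyticity in that field on the RS phase, recovering the adaTAP/EC single-site distribution and its susceptibility obeying the EC self-consistency.

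Step (ii) is the crux, and I would attack it from two complementary sides. For the lower bound $\varphi \geq \Phi_{\rm RS}$ I would run a Guerra-type interpolation between the interacting model and a decoupled single-site model carrying self-consistent Gaussian effective fields, handling the average over the orthogonal group through the large-deviation asymptotics of spherical (Harish-Chandra--Itzykson--Zuber) integrals; in the replica-symmetric regime the derivative of the interpolation should have a definite sign. For the matching upper bound $\varphi \leq \Phi_{\rm RS}$ I would either carry out an Aizenman--Sims--Starr cavity computation or, in the algorithmic spirit of the paper, exhibit a VAMP-type iteration whose state evolution is governed by the $R$-transform of $\rho_D$ (symmetric case) or its rectangular analogue (bipartite case) and whose free-entropy functional equals $\Phi_{\rm RS}$ at the fixed point. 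Either way the decisive structural input is rotational invariance: a Haar-distributed orthogonal matrix conditioned on a bounded number of linear relations remains Haar-distributed to leading order, so cavity fields are asymptotically Gaussian with a variance pinned self-consistently by the same transform — which is exactly the content of the diagrammatic resummation established in the body, now to be controlled quantitatively via Weingarten-calculus estimates on moments of polynomials in $ODO^\intercal$.

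The main obstacle sits in Step (ii) and is twofold. First, ruling out replica-symmetry breaking \emph{uniformly over the whole high-temperature phase}, rather than only for weak couplings: the interpolation derivative must keep a fixed sign, equivalently the state-evolution map must be a contraction uniformly in $N$, and the relevant stability threshold involves the spectral edge of $\rho_D$ together with the solution of the self-consistent equations, so ``high temperature'' has to be made quantitative up to that boundary. Second, the rigorous control of the orthogonal-group average at the precision needed to recover $\Phi_{\rm RS}$ exactly: one needs spherical-integral asymptotics with diverging rank and a general limiting spectral measure (beyond the rank-one cases currently well understood), together with a proof that all ``non-planar'' diagrammatic contributions are $o(N)$ in the free entropy. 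Settling either point, even on a strict sub-region of the RS phase, would already be substantial progress toward the conjecture.
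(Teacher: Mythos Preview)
The paper does not contain a proof of this statement: it is explicitly presented as a \emph{conjecture}, and the authors write in the abstract that ``a rigorous derivation of this conjecture is an interesting mathematical challenge.'' What the paper supplies is diagrammatic \emph{evidence}: it performs the Plefka high-temperature expansion order by order (Sec.~\ref{sec:spherical_bipartite}, Sec.~\ref{sec:stat_models}), identifies the class of diagrams retained by the EC/adaTAP/VAMP approximations (Sec.~\ref{subsubsec:connection_plefka_ep_adatap}), and argues in Sec.~\ref{sec:diagrammatics} that for rotationally invariant couplings the discarded diagrams (non-Eulerian and strongly irreducible non-cycles) vanish in the large-$N$ limit, while the retained simple cycles resum to the $R$-transform formula. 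That argument is not a proof of the conjecture: it is a term-by-term statement about the formal series, with the $N\to\infty$ limit taken at each fixed perturbation order, and several steps rely on stated assumptions (e.g.\ \ref{assumption:non_eulerian}, \ref{assumption:irreducible}) rather than demonstrations.

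Your proposal is therefore not competing with a proof in the paper but with its heuristic, and it takes a genuinely different route. The paper's line is purely perturbative/diagrammatic and never controls the free entropy directly; you instead propose a global argument --- concentration on the orthogonal group, a Guerra-type interpolation lower bound, and a cavity or state-evolution upper bound --- which is the standard rigorous template from spin-glass mathematics. This is a sensible plan and you correctly flag the two real obstructions: sign control of the interpolation derivative throughout the whole RS phase (not just at small $\beta$), and finite-rank HCIZ asymptotics beyond the rank-one case. The paper in fact supplies a partial ingredient for the second point, since Theorem~\ref{thm:free_cum} and the results of Sec.~\ref{sec:diagrammatics} give the needed moment identities for finite-order diagrams, and Appendix~\ref{subsec:diverging} quotes the Collins--\'Sniady extension to sublinearly diverging rank; but neither the paper nor your sketch closes the gap between ``each diagram of fixed order is controlled'' and ``the full series is controlled uniformly up to the RS boundary''. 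That gap --- exchanging the $N\to\infty$ limit with the summation over perturbation orders --- is exactly where the paper stops and where your Step~(ii) would have to do the real work.
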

We believe that the validity of the above conjecture extends beyond the high temperature phase. In particular that it is correct for inference problems in the Bayes-optimal setting, and more generally anytime the system is in a replica symmetric phase as defined in \cite{mezard1987spin}.

The approximation behind EC approximations can be checked order by order using our high-temperature Plefka expansions technique
   and its resummation.
   We then derive Plefka expansions for these generic models, and we apply it to different 
   situations, namely:
   \begin{itemize}  
    \item In Sec.~\ref{subsubsec:sym_generic_prior} we perform a Plefka expansion for a generic symmetric rotationally invariant model with pairwise interactions.
      Using this method and our diagrammatic results, we show then in Sec.~\ref{subsubsec:connection_plefka_ep_adatap} that the EC approximations
      are exact for these models in the large size limit. 
    \item In Sec.~\ref{subsubsec:plefka_hopfield} we apply our general result to the TAP free energy of the Hopfield model \cite{hopfield1982neural}, an Ising spin model with a correlated matrix of the Wishart ensemble, used as a basic model of neural network.
    In particular, we find back straightforwardly the results of \cite{nakanishi1997mean} and \cite{mezard2017mean}.
    \item In Sec.~\ref{subsec:plefka_replicas} we extend our Plefka expansion and the corresponding diagrammatic techniques to the study of a
    \emph{replicated} system, in which we constraint the overlap between different replicas. The interest for such systems comes as a consequence of the celebrated replica method of theoretical physics \cite{mezard1987spin}.
    \item Finally, we show in Sec.~\ref{subsec:plefka_bipartite_models} how we can use these results to derive the Plefka-expanded free entropy 
    for a very broad class of bipartite models, which includes the Generalized Linear Models (GLMs) with correlated data matrices, and the Compressed Sensing problem.
   \end{itemize}
   We emphasize that we were able to derive the free entropy of all these models using very generic arguments relying only on the rotational invariance of the problem.
   \item \textbf{The TAP equations and message passing algorithms}
    Finally, we show in Sec.~\ref{sec:applications_algorithms} that the TAP (or EC) equations that we derived by maximizing the Gibbs free entropy of rotationally invariant models
   can strikingly be understood as the fixed point equations of message passing algorithms.
   In the converse way, many message-passing algorithms can be seen as an iteration scheme of the TAP equations. 
   This was known in many models
   in which the underlying data matrix was assumed to be i.i.d.\ For instance, the Generalized Approximate Message Passing (GAMP) algorithm \cite{rangan2011generalized} was shown in \cite{krzakala2012probabilistic}
   to be equivalent to the TAP equations, a result that we find back in Sec.~\ref{subsec:gamp}, while TAP equations were already iterated for Restricted Boltzmann Machines, see \cite{tramel2018deterministic}.
   In the Plefka expansion language, these results relied on the early stopping of the expansion at order $2$ (in powers of the couplings) as a consequence of the i.i.d.\ hypothesis. 
   Using our resummation results to deal with the series at infinite orders, we were able to generalize these correspondences to correlated models.
   We argue that the stationary limit of the Vector Approximate Message Passing (VAMP)
   algorithm \cite{rangan2017vector} (that is its fixed point equations) for compressed sensing with correlated matrices gives back our TAP equations derived via Plefka expansion, see Sec.~\ref{subsec:vamp}.
   Even more generally, the Generalized Vector Approximate Passing (G-VAMP) algorithm \cite{schniter2016vector}, defined for the very broad class of Generalized Linear Models with correlated matrices,
   yields fixed point equations that are equivalent to our Plefka-expanded TAP equations, see Sec.~\ref{subsec:gvamp}.
   Combined with the results of Sec.~\ref{sec:stat_models}, this indicates that the VAMP algorithm is an example of an approximation 
   scheme that follows conjecture~\ref{conj:main_conj_approximation}.
    \item \textbf{Diagrammatics of the expansion and random matrix theory} Our results are largely based on a better control on the diagrammatics of the Plefka expansions for rotationally invariant
    random matrices, which are presented in Sec.~\ref{sec:diagrammatics}. We leverage mathematically rigorous results on Harish-Chandra-Itzykson-Zuber (HCIZ) integrals \cite{harish1957differential,itzykson1980planar,guionnet2005fourier, collins2007new}, involving transforms of the 
    asymptotic spectrum of the coupling matrix, to argue that only a very specific class of diagrams contributes to the high-temperature expansion of a system with rotationally invariant couplings.
    These results are used throughout our study, and are detailed in Sec.~\ref{sec:diagrammatics}.
    Some generalizations are postponed to Appendix~\ref{sec:generalizations_expansions}.
\end{itemize}

\section{Symmetric and bipartite spherical models with rotationally-invariant couplings}\label{sec:spherical_bipartite}

In this section we consider two spherical models that will serve both as
guidelines and building blocks for our subsequent analysis. We show
in details how to perform the Plefka-Georges-Yedidia high-temperature expansion in this context, and the precise diagrammatic results
that allow us to resum the Plefka series for rotationally invariant couplings.
These results will be useful to clarify our subsequent derivation of the TAP equations in more involved models, 
and are also interesting by themselves from a random matrix theory point of view.

\subsection{Symmetric spherical model}\label{subsec:sym_spherical_model}

In this section $N \geq 1$, $\sigma > 0$, and we define the following pairwise interaction Hamiltonian 
on $\mathbb{S}^{N-1}(\sigma \sqrt{N})$, the $N$-th dimensional sphere of radius $\sigma \sqrt{N}$:
   \begin{align}\label{eq:hamiltonian_spherical}
      H_{J}(\bx) &= -\frac{1}{2} \bx^\intercal J \bx = -\frac{1}{2}\sum_{1 \leq i,j \leq N} J_{ij} x_i x_j, \qquad \bx \in \mathbb{S}^{N-1}(\sigma \sqrt{N}).
   \end{align}
   The coupling matrix $J$ is a $N \times N$ symmetric random matrix
 drawn from Model~\ref{model:sym_rot_inv}.

   \subsubsection{Direct free entropy computation}\label{subsubsec:direct_symmetric}
The Gibbs measure for our model at inverse temperature $\beta$ is defined as:
\begin{align}\label{eq:gibbs_measure_original}
   P_{\beta,J}(\mathrm{d}\bx) &\equiv \frac{1}{Z_{\beta,J}} e^{\frac{\beta}{2} \sum_{i,j} J_{ij} x_i x_j}\mathrm{d}\bx,
\end{align}
in which ${\rm d}\bx$ is the usual surface measure on the sphere $\mathbb{S}^{N-1}(\sigma\sqrt{N})$.
 We write the partition function of the model introducing a Lagrange multiplier $\gamma$ to enforce the condition $\norm{\bx}^2 = N \sigma^2$.
 We will write $A_N \simeq B_N$ to denote that $\frac{1}{N} \log A_N = \frac{1}{N} \log B_N + \smallO_N(1)$.
 At leading exponential order, one has:
\begin{align}
   Z_{\beta,J} &\equiv \int_{\mathbb{S}^{N-1}(\sigma \sqrt{N})} \, \mathrm{d}\bx\, e^{\frac{\beta}{2} \sum_{i,j} J_{ij} x_i x_j}, \\
   &\simeq \int {\rm d} \gamma \ \prod_{i=1}^N \int_{\bbR} {\rm d} x_i \ e^{\frac{\beta}{2} \sum_{i,j} J_{ij} x_i x_j + \frac{\gamma}{2} (N \sigma^2 - \sum_{i} x_i^2)},  \nonumber \\
   &\simeq \exp \left[\inf_\gamma \left\{ \log \left[ \int \prod_{i=1}^N {\rm d} x_i \ e^{\frac{\beta}{2} \sum_{i,j} J_{ij} x_i x_j + \frac{\gamma}{2} (N \sigma^2 - \sum_{i} x_i^2)} \right] \right\}\right].
   \label{eq:Z_spherical_lagrange}
\end{align} 
Denoting $\gamma(\beta)$ the solution to the saddle-point equation in eq.~(\ref{eq:Z_spherical_lagrange}), we have effectively defined a new Gibbs measure:
\begin{align}\label{eq:gibbs_measure}
   P_{\beta,J}(\mathrm{d}\bx) &\equiv \frac{1}{Z_{\beta,J}(\gamma)}e^{\frac{\beta}{2} \sum_{i,j} J_{ij} x_i x_j} e^{-\frac{\gamma(\beta)}{2} ||\bx||_2^2} \,\mathrm{d}\bx,
\end{align}
where now $\mathrm{d}\bx$ is the usual Euclidian measure on $\bbR^N$.
Following~\cite{KTJ} we diagonalize the Hamiltonian and we integrate over the spins in this new basis, which yields:
\begin{equation}\label{eq:introduction_gamma}
\displaystyle
Z_{\beta,J} \simeq \exp\left[\inf_\gamma \left\{\frac{N}{2} \left(\log 2 \pi +  \gamma \sigma^2  - \frac{1}{N} \sum_{\lambda} \log (\gamma - \beta \lambda)\right) \right\}\right],
\end{equation} 
in which the sum over $\lambda$ runs over the set of eigenvalues of $J$. Taking the $N \to \infty$ limit, the saddle point equation reads:
\begin{equation}
\lim_{N \to \infty} \frac{1}{N} \sum_{\lambda} \frac{1}{\gamma - \beta \lambda} = \sigma^2,
\end{equation}
which we can write as a function of the limiting spectral law $\rho_D$
of the matrix $J$ (defined in Model~\ref{model:sym_rot_inv}):
\begin{equation}\label{SP_toy}
\int \,  \frac{\rho_D(\mathrm{d}\lambda)}{\gamma - \beta \lambda} = \sigma^2.
\end{equation}
We assumed (see Model~\ref{model:sym_rot_inv}) that the support of $ \rho_D$ is compact so that we can define its maximum $\lambda_{\rm max} \in \bbR$.
Under these assumptions, eq.~(\ref{SP_toy}) has the solution:
\begin{equation}
\gamma = \beta {\cal R}_{\rho_D}(\beta \sigma^2) + \frac{1}{\sigma^2} = \beta {\cal S}_{\rho_D}^{-1}(-\beta \sigma^2),
\end{equation}
as long as $-{\cal S}_{\rho_D}(\lambda_{\rm max}) \geq \beta \sigma^2$, where ${\cal R}_{\rho_D}$ is the ${\cal R}$-transform of $\rho_D$ and ${\cal S}_{\rho_D}$ its Stieltjes transform (see Appendix~\ref{sec:appendix_rmt} for their definitions).
In the opposite case (if $\beta \sigma^2 > -{\cal S}_{\rho_D}(\lambda_{\rm max})$), $\gamma$ `sticks' to the solution $\gamma=\lambda_{\rm max}\beta$.
The intensive free entropy $\Phi_{J}(\beta)$ is defined as:
\begin{equation}
\Phi_{J}(\beta) \equiv \lim_{N \to \infty} \frac{1}{N} \log Z_{\beta,J}.
\end{equation}
In the end, we can compute the free entropy in the \emph{high-temperature} phase $\beta \leq \beta_c \equiv - \sigma^{-2}{\cal S}_{\rho_D}(\lambda_{\rm max})$:
\begin{align}
   \Phi_{J}(\beta) &=  \frac{1}{2} \left( 1 + \log 2 \pi \sigma^2\right) + \frac{\beta \sigma^2}{2} {\cal R}_{\rho_D}(\beta \sigma^2) - \frac{1}{2} \int \, \rho_D(\mathrm{d}\lambda) \log\left[\beta \sigma^2 {\cal R}_{\rho_D}(\beta \sigma^2) - \beta \sigma^2 \lambda + 1\right].
\end{align}
By taking the derivative of this expression with respect to $\beta$ it is easy to show that this simplifies to:
\begin{equation}\label{eq:spherical_hight}
\Phi_{J}(\beta) =  \frac{1}{2} \left( 1 + \log 2 \pi \sigma^2 \right)+ \frac12 \int_0^{\beta \sigma^2} {\cal R}_{\rho_D}(x) {\rm d} x.
\end{equation}
In the low temperature phase (for $\beta \geq \beta_c = -\sigma^{-2} {\cal S}_{\rho_D}(\lambda_{\rm max})$) one has
\begin{equation}\label{eq:spherical_lowt}
   \Phi_{J}(\beta) = \frac12  \left( \log 2 \pi +  \lambda_{\rm max}\beta \sigma^2 -  \log \beta  - \int \,  \rho_D(\mathrm{d}\lambda) \ \log (\lambda_{\rm max} - \lambda) \right).  
\end{equation}
Note that both in the high and low temperature phases the free entropy can formally be expressed as:
\begin{align}
   \Phi_{J}(\beta) = \frac{1}{2} \log 2 \pi + \frac{1}{2}\inf_{\gamma}\left[\gamma \sigma^2 - \int \, \rho_D(\mathrm{d}\lambda) \log(\gamma - \beta \lambda)\right],
\end{align}
a formulation which is both more compact and easier to implement algorithmically for generic matrices $J$.

\paragraph{Remark} The free entropy is usually defined as an average over the quenched disorder $J$, but here it is clear that
the free entropy is self-averaging as a function of $J$, so that taking this average is trivial. Moreover, $\Phi_J(\beta)$ only depends on $J$ via $\rho_D$, its
asymptotic eigenvalue distribution.
\paragraph{Remark} The derivation of the free entropy both in the high and low temperature phase has been made rigorous in \cite{guionnet2005fourier},
and the method of proof also essentially consists in fixing a Lagrange multiplier to enforce the condition $\sum_{i} s_i^2 =  \sigma^2 N$.

\subsubsection{Plefka expansion and the Georges-Yedidia formalism}\label{subsubsec:plefka_sym_spherical}
   
A more generic way to compute the free entropy is to follow the formalism of \cite{georges1991expand} to perform a high-temperature Plefka expansion \cite{plefka1982convergence}. The goal is to expand the free entropy at low $\beta$, in the high-temperature phase. 
In order to do so, we introduce the very useful $U$ operator defined in Appendix~A of \cite{georges1991expand}.
We will compute the free entropy given the constraints on the means $\braket{x_i}_\beta = m_i$ and on the variances $\braket{x_i^2}_\beta = v_i + m_i^2$.
The notation $\braket{\cdot}_\beta$ indicates an average over the Gibbs measure of our system at inverse temperature $\beta$, see eq.~(\ref{eq:gibbs_measure}).
A set of parameters $\{m_i,v_i\}$ will thus determine a free entropy value, and the comparison with the direct calculation of Sec.~\ref{subsubsec:direct_symmetric} will be made by 
maximizing the free entropy with respect to $\{m_i,v_i\}$. 
We can enforce the spherical constraint $\norm{\bx}_2^2 = \sigma^2 N$ by constraining our choice of parameters $\{m_i,v_i\}$ to satisfy the identity:
\begin{align}\label{eq:spherical_constraint_mv}
   \sigma^2 = \frac{1}{N} \sum_{i=1}^N \left[v_i + m_i^2\right].
\end{align}
The Lagrange parameters introduced to fix the magnetizations are denoted $\{\lambda_i\}$, and the ones used to fix the variances are denoted $\{\gamma_i\}$.
For clarity we will keep their dependency on $\beta$ explicit only when needed. 
For a given $\beta$ and a given $J$ one defines the operator $U$ of Georges-Yedidia:
\begin{align}\label{eq:def_U}
U(\beta,J) \equiv H_{J} - \braket{H_{J}}_\beta + \sum_{i=1}^N \partial_\beta\lambda_i(\beta) (x_i - m_i) + \frac{1}{2} \sum_{i=1}^N\partial_\beta \gamma_i(\beta) \left[x_i^2 - v_i - m_i^2 \right],
\end{align}
The derivation of $U$ as well as its (many) useful properties are briefly recalled in Appendix~\ref{sec:appendix_operator_U}.
We are now ready to compute the first orders of the expansion of the free entropy $\Phi_{J}(\beta)$ in terms of $\beta$. 
In this expansion the Lagrange parameters $\{\lambda_i(\beta), \gamma_i(\beta)\}$ are always considered at $\beta = 0$, so we drop their $\beta$-dependency.
We detail the first orders of the expansion, following Appendix~\ref{sec:appendix_operator_U} (cft. Appendix~A of \cite{georges1991expand}).
\paragraph{Order 0}
First of all, taking $\beta = 0$ one has easily: 
\begin{align*}
\Phi_{J}(\beta = 0) &= \frac{1}{2 N} \sum_{i=1}^N \gamma_i (v_i + m_i^2) + \frac{1}{N}\sum_{i=1}^N \lambda_i m_i + \frac{1}{N}\log \int_{\mathbb{R}^N} e^{ - \frac{1}{2} \sum_{i} \gamma_i x_i^2 - \sum_{i} \lambda_i x_i}  \mathrm{d} \bx, \\
&= \frac{1}{2} \log 2 \pi + \frac{1}{N} \sum_{i=1}^N \left[\frac{\gamma_i}{2}(v_i + m_i^2) - \frac{1}{2} \log \gamma_i + \lambda_i m_i + \frac{\lambda_i^2}{2 \gamma_i}\right] .
\end{align*}
This yields after extremization over $\{\lambda_i,\gamma_i\}$:  
\begin{align}\label{eq:spherical_order0}
\Phi_{J}(\beta = 0) = \frac{1}{2} \left[1 + \log 2 \pi \right] + \frac{1}{2N} \sum_{i=1}^N \log v_i.
\end{align}

\paragraph{Order 1}

At order $1$, one easily derives: 
\begin{align}\label{eq:spherical_order1}
\left(\frac{\partial \Phi_J}{\partial \beta}\right)_{\beta=0} =- \frac{1}{N} \braket{H_{J}}_{\beta = 0} = \frac{1}{2N} \sum_{i,j} J_{ij} m_i m_j + \frac{1}{2 N}\sum_{i=1}^N J_{ii} v_i.
\end{align}
We can now make use of the Maxwell-type relations which are valid at any $\beta$:  
\begin{align}
   \label{eq:maxwell_spherical_1}
   \gamma_i(\beta) &= 2 N\frac{\partial \Phi_{J}(\beta)}{\partial v_i}, \\
   \label{eq:maxwell_spherical_2}
   m_i \gamma_i(\beta) + \lambda_i(\beta) &= N\frac{\partial \Phi_{J}(\beta)}{\partial m_i}.
\end{align}
These relations plugged in eq.~(\ref{eq:spherical_order1}) lead to $\partial_\beta \gamma_i (\beta = 0) = J_{ii}$ and $\partial_\beta \lambda_i(\beta=0) = \sum_{j(\neq i)} J_{ij} m_j$. 
We then obtain the $U$ operator at $\beta = 0$ from eq.~(\ref{eq:def_U}): 
\begin{align}\label{eq:U_spherical}
U(\beta = 0,J) &= -\frac{1}{2}\sum_{i \neq j}  J_{ij} (x_i - m_i) (x_j - m_j).
\end{align}

\paragraph{Order 2}

Following eq.~(\ref{Eq_U2}) in Appendix~\ref{sec:appendix_operator_U}, we have the relation: 
\begin{align}\label{eq:spherical_order2}
\frac{1}{2} \left(\frac{\partial^2 \Phi_J}{\partial \beta^2}\right)_{\beta=0} &= \frac{1}{ 2N}\braket{U^2}_{\beta = 0} = \frac{1}{4 N} \sum_{i \neq j} J_{i j}^2 v_i v_j.
\end{align}

\paragraph{Order 3 and 4}

For the order $3$, we obtain:
\begin{align}\label{eq:spherical_order3}
\frac{1}{3!}\left(\frac{\partial^3 \Phi_J}{\partial \beta^3}\right)_{\beta=0} &= -\frac{1}{6 N}\braket{U^3}_{\beta = 0} = \frac{1}{6N} \sum_{i,j,k } J_{ij} J_{jk} J_{ki} v_i v_j v_k + \smallO_N(1),
\end{align}
in which the sum is made over pairwise distinct $i,j,k$ indices. 
Applying eq.~(\ref{Eq_U4}) we reach: 
\begin{align}\label{eq:spherical_order4}
\frac{1}{4!}\left(\frac{\partial^4 \Phi_J}{\partial \beta^4}\right)_{\beta=0} &= \frac{1}{8 N} \sum_{i,j,k,l} J_{ij} J_{jk} J_{kl} J_{li} v_i v_j v_k v_l + \smallO_N(1),
\end{align}
where again, $i,j,k,l$ are pairwise distinct indices. For pedagogical purposes (and since it will be useful for the following sections), we detail this calculation in Appendix~\ref{sec:appendix_order4_spherical}.
\paragraph{Larger orders}

By its very nature, the perturbative expansion of Georges-Yedidia \cite{georges1991expand} can not (somehow disappointingly) give an analytic result for an arbitrary perturbation order $n$. 
However, the results up to order $4$ of eqs.~(\ref{eq:spherical_order0}), (\ref{eq:spherical_order1}), (\ref{eq:spherical_order2}), (\ref{eq:spherical_order3}), (\ref{eq:spherical_order4}) lead to the following natural conjecture for the free entropy \emph{at a given realization of the disorder}: 
\begin{align}\label{eq:spherical_full}
\Phi_{J}(\beta) &= \frac{1}{2} \left[1 + \log 2 \pi \right] + \frac{1}{2N} \sum_{i=1}^N \log v_i  + \frac{\beta}{2N} \sum_{i \neq j} J_{ij} m_i m_j  \nonumber \\
&+ \frac{1}{N} \sum_{p=1}^\infty \frac{\beta^p}{2p} \sum_{\substack{i_1,\cdots,i_p \\ \text{pairwise distincts }}} J_{i_1 i_2} J_{i_2 i_3} \cdots J_{i_{p-1} i_p} J_{i_p i_{1}} \prod_{\alpha=1}^p v_{i_\alpha}     + \smallO_N(1) .  
\end{align}
Note that in order to obtain this formula, we took the $N \to \infty$ limit at every perturbation 
order in $\beta$, which is part of the implicit assumptions of the Plefka expansion.
The terms of this perturbative expansion can be represented diagrammatically as \emph{simple cycles} of order $p$, see Fig.~\ref{fig:simple_cycle}.
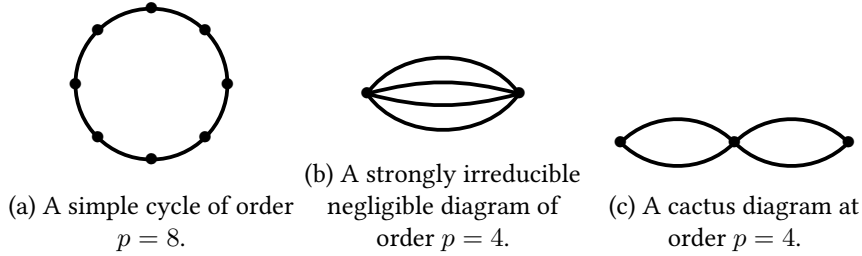
\begin{figure}[t]
 \centering
\captionsetup{justification=centering}
\begin{subfigure}[b]{0.22\textwidth}
   \centering
\begin{tikzpicture}[scale=1.]
\node (i0) at (-1,0) {$\bullet$};
\node (i1) at (1,0) {$\bullet$};
\node (i2) at (0,1) {$\bullet$};
\node (i3) at (0,-1) {$\bullet$};
\node (i4) at (0.707,0.707) {$\bullet$};
\node (i5) at (-0.707,0.707) {$\bullet$};
\node (i6) at (0.707,-0.707) {$\bullet$};
\node (i7) at (-0.707,-0.707) {$\bullet$};
\draw [line width = 0.5mm](0,0) circle (1cm);
\end{tikzpicture}
\caption{A simple cycle of order $p=8$.}\label{fig:simple_cycle}
\end{subfigure}
\begin{subfigure}[b]{0.22\textwidth}
   \centering
\begin{tikzpicture}[scale=1.]
\node (i0) at (0,2) {$\bullet$};
\node (i1) at (2,2) {$\bullet$};
\draw [line width = 0.5mm] (i0.center) to [out=55,in=125] (i1.center);
\draw [line width = 0.5mm] (i0.center) to [out=-55,in=-125] (i1.center);
\draw [line width = 0.5mm] (i0.center) to [out=15,in=165] (i1.center);
\draw [line width = 0.5mm] (i0.center) to [out=-15,in=-165] (i1.center);
\end{tikzpicture}
\caption{A strongly irreducible negligible diagram of order $p = 4$.}\label{fig:negligible_diagram}
\end{subfigure}
\begin{subfigure}[b]{0.22\textwidth}
   \centering
\begin{tikzpicture}[scale=1.]
\node (i0) at (0,2) {$\bullet$};
\node (i1) at (1.5,2) {$\bullet$};
\node (i2) at (3,2) {$\bullet$};
\draw [line width = 0.5mm] (i0.center) to [out=45,in=135] (i1.center);
\draw [line width = 0.5mm] (i0.center) to [out=-45,in=-135] (i1.center);
\draw [line width = 0.5mm] (i1.center) to [out=45,in=135] (i2.center);
\draw [line width = 0.5mm] (i1.center) to [out=-45,in=-135] (i2.center);
\end{tikzpicture}
\caption{A cactus diagram at order $p=4$.} \label{fig:nonnegligible_diagram}
\end{subfigure}
 \caption{Representation of the expansion eq.~(\ref{eq:spherical_full})
  with diagrams. Each vertex represents an index $i_\alpha$ and carries a factor $v_{i_\alpha}$, while each edge is a factor $J_{ij}$.}\label{fig:diagrams}
\end{figure}

In general, at any order in the expansion one can construct a
diagrammatic representation of the contributing terms, and one
expects that only strongly
irreducible diagrams contribute to the free entropy. Strongly
irreducible diagrams are those that cannot be split into two pieces by removing a vertex~\cite{georges1991expand}
(examples are given in Fig.~\ref{fig:simple_cycle} and \ref{fig:negligible_diagram}).
However we retain only simple cycles as the one depicted in Fig.~\ref{fig:simple_cycle} because
other diagrams as in Fig.~\ref{fig:negligible_diagram} are negligible when $N \to \infty$ for rotationally invariant models, 
as we argue in Section~\ref{subsec:generic_diagrams_expectation}. For
the case of orthogonal couplings, this dominance of simple cycles was already noted in \cite{parisi1995mean}. 
On the other hand, generic cactus diagrams like the one pictured in Fig.~\ref{fig:nonnegligible_diagram}
are not negligible, but they cancel out and do not appear in the final
form of the expansion 
(at order $4$, this is shown in Appendix ~\ref{sec:appendix_order4_spherical}).

We shall now prove the dominance of simple cycles, and the correctness
of eq.~(\ref{eq:spherical_full}), in the high-temperature phase. In this
phase, the solution to the maximization of
eq.~(\ref{eq:spherical_full}) under $\{m_i\}$ is the paramagnetic
solution
 $m_i= 0$. Furthermore, we expect that the $\{v_i\}$ that maximize the 
free entropy of eq.~(\ref{eq:spherical_full}) are \emph{homogeneous}, that is $\forall i, \, v_i = v$. 
The constraint of eq.~(\ref{eq:spherical_constraint_mv}) thus gives $v =
\sigma^2$.

We can compare the result of the resummation of simple cycles,
eq.~(\ref{eq:spherical_full})  with the exact results of
eq.~(\ref{eq:spherical_hight}) in the paramagnetic phase. For these two
results to agree, we need the generating function for simple
cycles to be related to the ${\cal R}$-transform of $\rho_D$ by:

\begin{align}\label{eq:correspondance_spherical_expectation}
  \EE \left[ \frac{1}{N} \sum_{p=1}^\infty \frac{\beta^p \sigma^{2p}}{2p} \sum_{\substack{i_1,\cdots,i_p \\ \text{pairwise distincts }}} J_{i_1 i_2} J_{i_2 i_3} \cdots J_{i_{p-1} i_p} J_{i_p i_{1}} \right] &= \frac12 \int_0^{\beta \sigma^2} {\cal R}_{\rho_D}(x) {\rm d} x,
\end{align}
in which the outer expectation is with respect to the distribution of $J$. 
In particular, an order-by-order comparison yields that the \emph{free cumulants} $\{c_p(\rho_D)\}_{p\in \bbN^\star}$ (see Appendix~\ref{sec:appendix_rmt} for their definition) must satisfy:
  
\begin{align}\label{eq:conjecture_freecumulants}
  \forall p \in \bbN^\star, \quad c_p(\rho_D) &= \lim_{N \to \infty} \EE \left[\frac{1}{N}\sum_{\substack{i_1,\cdots,i_p \\ \text{pairwise distincts }}} J_{i_1 i_2} J_{i_2 i_3} \cdots J_{i_{p-1} i_p} J_{i_p i_{1}} \right].
\end{align}
Using rigorous results of \cite{guionnet2005fourier}, we were able to prove a stronger version of eq.~(\ref{eq:conjecture_freecumulants}), namely 
convergence in $L^2$ norm, so we state it as a theorem:
\begin{theorem}\label{thm:free_cum}
   For a matrix $J \in {\cal S}_N$ generated by Model~\ref{model:sym_rot_inv}, one has for every $p \in \bbN^\star$:
   \begin{align*}
    \lim_{N \to \infty} \EE \left|\frac{1}{N} \sum_{\substack{i_1,\cdots,i_p \\ \text{pairwise distincts }}} J_{i_1 i_2} J_{i_2 i_3} \cdots J_{i_{p-1} i_p} J_{i_p i_{1}} -  c_p(\rho_D) \right|^2 & = 0.
   \end{align*}
\end{theorem}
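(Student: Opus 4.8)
The plan is to connect the quantity $\frac{1}{N}\sum_{i_1,\dots,i_p \text{ distinct}} J_{i_1 i_2}\cdots J_{i_p i_1}$ to a normalized trace, control the error made in passing from the full trace $\frac{1}{N}\Tr(J^p)$ to the restricted sum over pairwise distinct indices, and then invoke the known almost-sure convergence of moments together with a variance bound coming from the rigorous HCIZ/large-deviations analysis of \cite{guionnet2005fourier}. Concretely: write $\frac{1}{N}\Tr(J^p) = \frac{1}{N}\sum_{i_1,\dots,i_p} J_{i_1 i_2}\cdots J_{i_p i_1}$, and decompose the unrestricted sum according to the partition of $\{1,\dots,p\}$ induced by coincidences among the indices. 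The ``pairwise distinct'' term is the leading one; every other term has strictly fewer free indices. I would show each such subleading term is $\smallO_N(1)$ in $L^2$ — this uses that the entries of $J = ODO^\intercal$ have all moments bounded uniformly in $N$ (since the spectrum of $D$ has compact support and the extreme eigenvalues converge to the edges of $\rho_D$, so $\|J\|_{\rm op}$ is bounded), hence each $J_{ij}$ is $\mathcal{O}(N^{-1/2})$ in every $L^q$, and the Weingarten calculus for the Haar measure on $\mathcal{O}(N)$ gives the needed decay. The upshot is that $\frac{1}{N}\sum_{\text{distinct}} J_{i_1 i_2}\cdots J_{i_p i_1}$ and $\frac{1}{N}\Tr(J^p)$ differ by a quantity going to zero in $L^2$.

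Next I would establish that $\frac{1}{N}\Tr(J^p) \to \int \lambda^p \rho_D(\d\lambda) \eqdef m_p(\rho_D)$ in $L^2$. Since $J$ has the same eigenvalues as $D$, in fact $\frac{1}{N}\Tr(J^p) = \frac{1}{N}\Tr(D^p) = \int \lambda^p \rho_D^{(N)}(\d\lambda)$, which converges almost surely to $m_p(\rho_D)$ by the assumptions of Model~\ref{model:sym_rot_inv}; boundedness of the support (uniformly in $N$, via the edge assumption) upgrades this to $L^2$ convergence by dominated convergence. Then the combinatorial identity expressing moments in terms of free cumulants — equivalently, the relation $c_p(\rho_D)$ is precisely the coefficient extracted by the ``pairwise distinct'' restriction, which is exactly the non-crossing/connected part — must be reconciled with the claim. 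Here I should be careful: the naive statement ``$\frac{1}{N}\sum_{\text{distinct}} J_{i_1 i_2}\cdots J_{i_p i_1} \to c_p$'' is \emph{not} the same as ``$\to m_p$''. The resolution is that for a rotationally-invariant (asymptotically free-from-a-diagonal-pattern) matrix the loop of $J$'s visiting $p$ distinct sites picks out exactly the free cumulant, a fact that is the combinatorial heart of the Parisi–Potters resummation and can be derived from the HCIZ asymptotics in \cite{guionnet2005fourier}: the generating function of these restricted cyclic sums is the integral of the ${\cal R}$-transform (this is eq.~(\ref{eq:correspondance_spherical_expectation}), whose order-by-order reading is exactly the claimed identity for $c_p$). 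So the correct route is not via $\Tr(J^p)$ directly but via the HCIZ integral identity that already identifies the expectation of the cyclic sum as $c_p(\rho_D) + \smallO_N(1)$.

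Accordingly, the cleanest plan is: (i) from the rigorous HCIZ asymptotics of \cite{guionnet2005fourier}, extract that $\EE\big[\frac{1}{N}\sum_{\text{distinct}} J_{i_1 i_2}\cdots J_{i_p i_1}\big] \to c_p(\rho_D)$ — this gives convergence in expectation, i.e.\ eq.~(\ref{eq:conjecture_freecumulants}); (ii) prove concentration, $\mathrm{Var}\big(\frac{1}{N}\sum_{\text{distinct}} J_{i_1 i_2}\cdots J_{i_p i_1}\big) \to 0$, by a second-moment computation: the product of two such cyclic loops is a sum over $O(N^{2p})$ index configurations, the ``diagonal'' pairing of the two loops contributes $\EE[\cdots]^2 + \smallO(1)$, and all other pairings are subleading by the same Weingarten power-counting as in step (i) — the operator-norm bound on $J$ keeps every term controlled; (iii) combine (i) and (ii) via $\EE|X_N - c_p|^2 = \mathrm{Var}(X_N) + |\EE X_N - c_p|^2 \to 0$.

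The main obstacle is step (ii), the variance bound. Unlike the first-moment statement, which \cite{guionnet2005fourier} essentially hands us through the HCIZ free-energy formula, the covariance of two cyclic sums requires a genuine Weingarten/graphical expansion on $\mathcal{O}(N)$ and a careful argument that every non-factorizing pairing loses at least one factor of $N$. One must use both the compact-support hypothesis on $\rho_D$ and the convergence of the extreme eigenvalues (to rule out rare large-norm events that could spoil uniform integrability of the squared cyclic sum), and then organize the $O(1)$ graphs so the leading ones reassemble into $(\EE X_N)^2$. I expect this to be where the real work lies; the rest is bookkeeping with non-crossing partitions and standard moment bounds.
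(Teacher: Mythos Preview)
Your plan is correct and matches the paper's approach: step (i) is exactly the content of Theorem~\ref{thm:free_cum_expectation} (Sec.~\ref{subsec:freecum_expectation}), proved via the finite-rank HCIZ asymptotics of \cite{guionnet2005fourier}, and steps (ii)--(iii) are the content of Sec.~\ref{subsec:concentration_J}. The one place where the paper is simpler than you anticipate is the variance bound. Rather than a separate Weingarten power-counting, the paper observes that $\EE X_N^2$ splits into (a) the term where the two cyclic loops sit on fully disjoint index sets, which the \emph{same} HCIZ argument as in step (i) shows equals $(\EE X_N)^2 + \smallO_N(1)$, and (b) a finite collection of ``glued'' connected diagrams obtained by identifying one or more vertices between the two copies; each of these is a single connected diagram carrying one global $1/N$ prefactor, and each has $\mathcal{O}(1)$ expectation by the first-moment diagram analysis already established in Sec.~\ref{subsec:generic_diagrams_expectation}. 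So $\mathrm{Var}(X_N)=\mathcal{O}(1/N)$ falls out immediately from step (i), and the part you flagged as ``where the real work lies'' is in fact nearly free.
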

We postpone the proof to Sec.~\ref{sec:diagrammatics}. We assume that we can invert the summation over $p$ and the $N \to \infty$ limit in eq.~(\ref{eq:spherical_full}),
so Theorem~\ref{thm:free_cum} implies that eq.~(\ref{eq:correspondance_spherical_expectation}) is true not only in expectation but that we can write:
\begin{align}\label{eq:correspondance_spherical}
  \lim_{N \to \infty} \frac{1}{N} \sum_{p=1}^\infty \frac{\beta^p \sigma^{2p}}{2p} \sum_{\substack{i_1,\cdots,i_p \\ \text{pairwise distincts }}} J_{i_1 i_2} J_{i_2 i_3} \cdots J_{i_{p-1} i_p} J_{i_p i_{1}} &= \frac12 \int_0^{\beta \sigma^2} {\cal R}_{\rho_D}(x) {\rm d} x,
\end{align}
in which the limit here means convergence in $L^2$ norm as $N \to \infty$. 
This is important, as it allows to ``resum'' the free entropy of eq.~(\ref{eq:spherical_full}), which is valid for a given instance of $J$.
As a final note, we can use the results of Sec.~\ref{subsubsec:direct_symmetric} to write our result in an alternative form (dropping $\smallO_N(1)$ terms):
\begin{align}\label{eq:correspondance_spherical_alternative}
  \frac{1}{N} \sum_{p=1}^\infty \frac{\beta^p \sigma^{2p}}{2p} \hspace{-0.5cm}\sum_{\substack{i_1,\cdots,i_p \\ \text{pairwise distincts }}}  \hspace{-0.5cm}J_{i_1 i_2}  \cdots J_{i_p i_{1}} &= \frac{1}{2} \inf_\gamma\left[\gamma \sigma^2 - \int \rho_D(\mathrm{d}\lambda) \log(\gamma-\beta \lambda)\right] - \frac{1+\log \sigma^2}{2}.
\end{align}

   \subsubsection{Stability of the paramagnetic phase}\label{subsubsec:stability_sym_spherical}

 We can check whether the paramagnetic solution is stable exactly up to the temperature $\beta =\beta_c$.
 Recall that in this model we do not optimize the free entropy simultaneously over $v$ and the $\{m_i \}$,
 because the norm $||\bx||_2^2 = \sigma^2 N$ is fixed, yielding the constraint $v = \sigma^2 - \frac{1}{N}\sum_{i} m_i^2$. 
 Solely as a function of the $\{m_i\}$, the free entropy therefore reads, up to $\smallO_N(1)$ terms:
\begin{align}\label{eq:spherical_para}
\Phi_{J}(\beta) &= \frac{1 + \log 2 \pi}{2} + \frac{1}{2} \log\left[\sigma^2 - \frac{1}{N} \sum_{i=1}^N m_i^2\right]  + \frac{\beta}{2N} \sum_{i \neq j} J_{ij} m_i m_j  + G_{\rho_D}\left(\beta \left[\sigma^2 - \frac{1}{N} \sum_{i=1}^N m_i^2\right]\right),
\end{align}
in which $G_{\rho_D}$ is the integrated $\mathcal{R}$-transform of $\rho_D$, see Appendix~\ref{sec:appendix_rmt} for its definition.
The Hessian of the \emph{extensive} free entropy $N \Phi_J$ at the paramagnetic solution $\bf{m} = 0$ is:
\begin{align}\label{eq:hessian_fentropy_spherical}
  N \left(\frac{\partial^2 \Phi_J}{\partial m_i \partial m_j}\right)_{m = 0} &= - \frac{\delta_{ij}}{\sigma^2}  \left[1 + \beta \sigma^2 \mathcal{R}_{\rho_D}(\beta \sigma^2)\right]+ \beta J_{ij} + \smallO_N\left(1\right).
\end{align}
The paramagnetic solution is stable as long as the Hessian $N \partial^2_m \Phi_{J}(\beta,m=0)$ is a negative matrix. 
This is true as long as $\beta < \beta_c = - \sigma^{-2} \mathcal{S}_{\rho_D}(\lambda_{\rm max})$, because at $\beta_c$ the spectrum of $N \left(\frac{\partial^2 \Phi_J}{\partial m_i \partial m_j}\right)_{m = 0} $ touches zero. For $\beta > \beta_c$ the Hessian is again negative, giving the impression of stability of the paramagnetic phase, however $\mathcal{S}_{\rho_D}^{-1}(-\beta\sigma^2)$
is evaluated in the non physical solution, so the solution has to be discarded.
Our Plefka expansion allows thus to compute the free entropy in the whole paramagnetic phase, coherently with the results of Sec.~\ref{subsubsec:direct_symmetric}.
More generically, as shown by Plefka \cite{plefka1982convergence} in the closely-related SK model, the Hessian of the free entropy with respect to $\{m_i\}$ is related to the inverse
susceptibility matrix of the system, and thus the non-inversibility of the Hessian implies a non-analyticity point of the free entropy.

\paragraph{Validity of the Plefka expansion and stability of the replica symmetric solution} 
We believe it is an open question to relate the range of validity $\beta_c$ of the Plefka expansion and the de Almeida-Thouless condition 
that characterizes the local stability of the replica symmetric solution (see \cite{de1978stability} for its original derivation, and \cite{kabashima2008inference,shinzato2008perceptron} for examples of its applications in inference problems).
The equivalence of these two conditions was shown in the seminal paper of Plefka \cite{plefka1982convergence} in the Sherrington-Kirpatrick model.
It is tedious but straightforward to generalize this conclusion to a model with Ising spins $x_i = \pm 1$ and a Hamiltonian given by eq.~\eqref{eq:hamiltonian_spherical} with a 
rotationally-invariant coupling matrix $J$ drawn from Model~\ref{model:sym_rot_inv}. 
However, investigating the relation between these two conditions in a general model appears to be an open problem,
and is beyond the scope of our paper.

\paragraph{The region of validity of the expansion and the free cumulant series} 
In this remark, we clarify some possible confusion about the radius of convergence of the free cumulant series and the domain of validity of the Plefka expansion.
We performed an expansion of $\Phi_J(\beta)$ close to $\beta = 0$, which implies that this expansion is thus valid in the region $(0,\beta_c)$, in which $\beta_c = - \sigma^{-2} {\cal S}(\lambda_{\rm max})$ is the first non-analyticity of $\Phi_J(\beta)$, 
see eqs.~\eqref{eq:spherical_hight}, \eqref{eq:spherical_lowt} and the discussion above. Note that there exists spectrums for which the function ${\cal R}_{\rho_D}(x)$ can 
be analytically extended beyond $x_c \equiv -{\cal S}_{\rho_D}(\lambda_{\rm max})$, as for instance Wigner's semi-circle law, for which ${\cal R}_{\rm s.c.}(x) = x$. Yet, one has to be careful that this does not imply 
that the free entropy $\Phi_J(\beta)$ is analytic beyond $\beta_c$, and thus even in this case our Plefka expansion is \emph{a priori} only valid up to $\beta = \beta_c$.

\subsection{Bipartite spherical model}\label{subsec:bipartite_spherical_model}

   In this section we consider $N,M \geq 1$. We let $\alpha > 0$, and we will take the limit (sometimes referred to as the \emph{thermodynamic limit}) in which $N,M \to \infty$ with a fixed ratio $M/N \to \alpha$.
   We let $\sigma_x,\sigma_h > 0$. Let us consider the following Hamiltonian, which is a function of two fields $\bh \in \bbR^M$ and $\bx \in \bbR^N$:
   \begin{align}\label{eq:hamiltonian_bipartite}
      H_{L}(\bh,\bx) &= - \bh^\intercal L \bx = -\sum_{\mu=1}^M \sum_{i=1}^N L_{\mu i} h_\mu x_i, \qquad \bh \in \mathbb{S}^{M-1}(\sigma_h \sqrt{M}), \quad \bx \in \mathbb{S}^{N-1}(\sigma_x \sqrt{N}).
   \end{align}
   The coupling matrix $L \in \bbR^{M \times N}$ is assumed to be
   drawn from Model~\ref{model:nsym_rot_inv}.

      \subsubsection{Direct free entropy computation}\label{subsubsec:direct_bipartite}

      The calculation for this bipartite case is very similar to the
      calculation performed in Sec.~\ref{subsubsec:direct_symmetric}, 
      although one can not always express the result
      as a well-known transform of the measure $\rho_D$ of
      Model~\ref{model:sym_rot_inv}. For all values of $\beta$, the result can be expressed as:
      \begin{align}
         \Phi_{L}(\beta) &\equiv \lim_{N \to \infty} \frac{1}{N} \log \int {\rm d} \bh \, \int {\rm d} \bx\, e^{\beta \bh^\intercal L \bx},\nonumber \\
           &= \frac{1+\alpha}{2} \log 2 \pi  +
           \frac{1}{2}\inf_{\gamma_h,\gamma_x}\left[\alpha \gamma_h
           \sigma_h^2 + \gamma_x \sigma_x^2 - (\alpha-1) \log \gamma_h
           - \int \rho_D(\mathrm{d}\lambda) \log(\gamma_x \gamma_h -
           \beta^2 \lambda)\right] \ , \label{eq:bipartite_all_temp}
      \end{align} 
      where $\rho_D$ is the asymptotic eigenvalue distribution of $L^TL$ (see the definition of Model~\ref{model:nsym_rot_inv}).

      \subsubsection{Plefka expansion}\label{subsubsec:plefka_bipartite_spherical}

      The Plefka expansion for this model is a straightforward generalization of Sec.~\ref{subsubsec:plefka_sym_spherical}. 
      We will fix the averages to be $\braket{h_\mu} = m^h_\mu$ and $\braket{x_i} = m^x_i$, and the second moments $\braket{h_\mu^2} = v^h_\mu + (m^h_\mu)^2$ and $\braket{x_i^2} = v^x_i + (m^x_i)^2$,
      again with the constraints $\sigma_h^2 = \frac{1}{M} \sum_{\mu=1}^M \left[v^h_\mu + (m^h_\mu)^2 \right]$ and $\sigma_x^2 = \frac{1}{N} \sum_{i=1}^N \left[v^x_i + (m^x_i)^2\right]$.
      In this problem, the $U$ operator of \cite{georges1991expand} at $\beta = 0$ is given by:
      \begin{align}\label{eq:U_bipartite}
         U(\beta=0,L) &= -\sum_{\mu,i} L_{\mu i} (h_\mu - m^h_\mu) (x_i - m^x_i).
      \end{align}
      Once again, as in Sec.~\ref{subsubsec:plefka_sym_spherical}, one can study all the diagrams that appear in the Plefka expansion.
      We show again the $L^2$ concentration of the simple cycles, and the negligibility of other strongly irreducible diagrams that can be constructed from the rectangular $L$ matrix.
      We state in more details these results for the bipartite case in Sec.~\ref{subsubsec:generalization_non_square}.
      We obtain the following result, a counterpart to eq.~(\ref{eq:spherical_full}) for this bipartite model:
      \begin{align}\label{eq:plefka_bipartite}
      \Phi_L(\beta) &= \frac{1+\alpha}{2} \left[1+ \log 2\pi \right] + \frac{\alpha}{2M} \sum_{\mu=1}^M \log v^h_\mu+ \frac{1}{2N} \sum_{i=1}^N \log v^x_i + \frac{\beta}{N} \sum_{\mu=1}^M \sum_{i=1}^N L_{\mu i} m^h_\mu m^x_i\\
      & + \frac{1}{N} \sum_{p=1}^\infty \frac{\beta^{2p} }{2p} \sum_{\substack{\mu_1,\cdots,\mu_p \\ \text{pairwise distincts }}} \sum_{\substack{i_1,\cdots,i_p \\ \text{pairwise  distincts }}} L_{\mu_1 i_1} L_{\mu_1 i_2} L_{\mu_2 i_2} \cdots L_{\mu_p i_p} L_{\mu_p i_1} \prod_{\alpha=1}^p v^h_{\mu_\alpha}v^x_{i_\alpha} + \smallO_N(1) \nonumber ,   
      \end{align}
      in which indices $\{\mu_l\}$ run from $1$ to $M$ and indices $\{i_l\}$ run from $1$ to $N$. 
      We make again an assumption of uniform variances at the maximum: $v^h_\mu = v^h, v^x_i = v^x$.
      Comparing to eq.~(\ref{eq:bipartite_all_temp}) in the paramagnetic
      $m^h_\mu, m^x_i = 0$ phase, we obtain the correspondence, similar
      to eq.~(\ref{eq:correspondance_spherical}) and valid \emph{a
        priori} for any given realization of $L$, in the high
      temperature phase:
      \begin{align}
         \frac{\alpha}{2} \log \sigma_h^2 &+ \frac{1}{2} \log \sigma_x^2 + \frac{1}{N} \sum_{p=1}^\infty \frac{\beta^{2p} \sigma_h^{2p}\sigma_x^{2p}}{2p} \sum_{\substack{\mu_1,\cdots,\mu_p \\ \text{pairwise distincts }}} \sum_{\substack{i_1,\cdots,i_p \\ \text{pairwise  distincts }}} L_{\mu_1 i_1} L_{\mu_1 i_2} L_{\mu_2 i_2} \cdots L_{\mu_p i_p} L_{\mu_p i_1} \nonumber \\   
      &= - \frac{1+\alpha}{2} + \frac{1}{2}\inf_{\gamma_h,\gamma_x}\left[\alpha \gamma_h \sigma_h^2 + \gamma_x \sigma_x^2 - (\alpha-1) \log \gamma_h -\int \,\rho_D(\mathrm{d}\lambda) \log(\gamma_h \gamma_x - \beta^2 \lambda)\right]. \label{eq:correspondance_bipartite}
      \end{align}
\section{Plefka expansion and Expectation Consistency approximations}\label{sec:stat_models}

In this section we perform Plefka expansions for generic models of pairwise interactions, both symmetric and bipartite.
In Sec.~\ref{subsec:ep_adatap} we recall some known facts on the Expectation Consistency (also called Expectation Propagation \cite{minka2001expectation}), adaTAP and VAMP approximations to compute 
the free entropy of such models.
In Sec.~\ref{subsec:plefka_sym_models} and Sec.~\ref{subsec:plefka_bipartite_models} we generalize the results of the Plefka expansions of Sec.~\ref{sec:spherical_bipartite} to these models and highlight the main 
differences and assumptions of our method. This yields a very precise and systematic justification of the TAP equations for rotationally invariant models.
We apply these results to retrieve the TAP free entropy of the Hopfield model, Compressed Sensing, as well
as different variations of high-dimensional inference models called Generalized Linear Models (GLMs). 
Sec.~\ref{subsec:plefka_replicas} is devoted to the study of a generic replicated system using these approximations, and the Plefka expansion. We show that they can be used 
in the celebrated replica method \cite{mezard1987spin} of theoretical physics, to compute the Gibbs free entropy of a generic pairwise inference model.

\subsection{Expectation Consistency, adaptive TAP, and Vector Approximate Message Passing approximations}\label{subsec:ep_adatap}

Expectation Consistency (EC) \cite{opper2005expectation,opper2005expectation2}, 
is an approximation scheme for a generic class of disordered systems that can also be applied to many inference problems. 
In this section we show how this scheme is derived and is closely related to the adaTAP approximation \cite{opper2001adaptive,opper2001tractable}, and 
the VAMP approximation \cite{rangan2017vector}. Let us shortly comment on the history of these methods.
The adaTAP scheme was developed and presented in $2001$ in \cite{opper2001tractable,opper2001adaptive}, 
and was discussed in details in the review \cite{opper2001advanced} for systems close to the SK model. 
The same year, Thomas Minka’s Expectation Propagation (EP) approach was presented \cite{minka2001expectation}. 
Opper and Winther used an alternative view of local-consistency approximations of the EP–type which they call Expectation Consistent (EC) approximations
in \cite{opper2005expectation,opper2005expectation2}, effectively rederiving their adaTAP scheme from this new point of view. 
The VAMP approach is more recent \cite{schniter2016vector}, and is again another EP approach for a different problem (compressed sensing) but it has the advantage that,
compared with other EP-like approaches~\cite{ccakmak2016self} it leads to a practical converging algorithm, and a rigorous treatement of its time evolution. 
The connection between these approaches and the Parisi-Potters formulation for inference problems \cite{jacquin2016resummed} was hinted several times for SK-like problems, 
see e.g. \cite{opper2016theory,ccakmak2019convergent}. 
We hope that our paper will help providing a unifying presentation of these works, generalizing them way beyond the SK model alone by leveraging random matrix theory.
We recall briefly the main arguments of these papers which are useful for our discussion.

\subsubsection{Expectation Consistency approximation}\label{subsubsec:expectation_consistency}

Consider a model in which the density of a vector $\bx \in \bbR^N$ is given by a probability distribution of the form:
\begin{align}\label{eq:def_model_ep}
	P(\bx) &= \frac{1}{Z} P_0(\bx) P_J(\bx).	
\end{align}
Such distributions typically appear in Bayesian approaches to inference problems. We will use the Bayesian language and denote $P_0$ as a \emph{prior} distribution on $\bx$, which will be typically factorized (all the components of $\bx$ are assumed to be independent under $P_0$) ;
The distribution $P_J$ is responsible for the interactions between the $\{x_i\}$. In this paper we are interested in pairwise interactions, which means that the $\log$ of $P_J$ is a quadratic form in the $\{x_i\}$ variables.
An example of such a model is the infinite-range Ising model of statistical physics at inverse temperature $\beta \geq 0$, with a binary prior and a quadratic interaction governed by a coupling 
matrix $J$. In this specific model, we have:
\begin{align}\label{eq:def_model_SK}
	P_0(\bx) &= \prod_{i=1}^N \left[\frac{1}{2 \cosh (\beta h_i)}\; \left(\delta(x_i-1) e^{-\beta h_i}+ \delta(x_i+1)e^{\beta h_i}\right)\right], \\
	P_J(\bx) &= \exp\left\{\frac{\beta}{2}\sum_{i,j} J_{ij} x_i x_j\right\},
\end{align}
for some $\{h_i\} \in \bbR^N$.
Our goal is to compute the large $N$ limit of the free entropy $\log Z$ in the model of eq.~(\ref{eq:def_model_ep}).
Each of the two distributions $P_0$ and $P_J$ allows for tractable computations of physical quantities (like averages), but
the difficulty arises when considering their product.
The idea behind EC is to simultaneously approximate $P_0$ and $P_J$ by a tractable family of distributions. For the sake 
of the presentation we will consider the family of Gaussian probability distributions, although this can be generalized to different families, 
see the general framework of \cite{opper2005expectation}. We define the first approximation as:
\begin{align}
	\mu_0(\bx) &\equiv \frac{1}{Z_0(\Gamma_0,\bm{\lambda}_0)} P_0(\bx) e^{-\frac{1}{2} \bx^\intercal \Gamma_0 \bx + \bm{\lambda_0}^\intercal \bx}. 
\end{align}
Here, the parameter $\Gamma_0$ is a symmetric positive matrix and $\bm{\lambda}_0$ is a vector.
We will denote $\braket{\cdot}_0$ the averages with respect to $\mu_0$.
We can write the trivial identity:
\begin{align*}
	Z &= Z \times \frac{Z_0(\Gamma_0,\bm{\lambda_0})}{Z_0(\Gamma_0,\bm{\lambda_0})} = Z_0(\Gamma_0,\bm{\lambda_0}) \braket{P_J(\bx) \, e^{\frac{1}{2} \bx^\intercal \Gamma_0 \bx - \bm{\lambda}_0^\intercal \bx}}_0.
\end{align*}
The idea of EC is to replace,  when one computes the average $\braket{P_J(\bx) \, e^{\frac{1}{2} \bx^\intercal \Gamma_0 \bx - \bm{\lambda}_0^\intercal \bx}}_0$, the distribution $\mu_0$ by an 
approximate Gaussian distribution, that we can write as:
\begin{align}
	\mu_S(\bx) &\equiv \frac{1}{Z_S}	e^{-\frac{1}{2} \bx^\intercal (\Gamma_J + \Gamma_0) \bx + (\bm{\lambda_0} + \bm{\lambda}_J)^\intercal \bx}.
\end{align}
Performing this replacement yields the \emph{expectation-consistency} approximation to the free entropy:
\begin{align}
	\log Z^{\rm EC}(\Gamma_0,\Gamma_J,\bm{\lambda}_0,\bm{\lambda}_J) &= \log \left[\int \mathrm{d}\bx P_0(\bx)e^{-\frac{1}{2} \bx^\intercal \Gamma_0 \bx + \bm{\lambda_0}^\intercal \bx} \right] + \log \left[\int \mathrm{d}\bx P_J(\bx)e^{-\frac{1}{2} \bx^\intercal \Gamma_J \bx + \bm{\lambda_J}^\intercal \bx} \right] \nonumber\\
	\label{eq:ZEC}
	&- \log \left[\int \mathrm{d}\bx \, e^{-\frac{1}{2} \bx^\intercal (\Gamma_0 + \Gamma_J) \bx + (\bm{\lambda_0}+\bm{\lambda}_J)^\intercal \bx}\right].
\end{align}
Note that all three parts of this free entropy are tractable. In order to symmetrize the result we can define a third measure:
\begin{align}
	\mu_J(\bx) &\equiv \frac{1}{Z_J(\Gamma_J,\bm{\lambda}_J)} P_J(\bx) e^{-\frac{1}{2} \bx^\intercal \Gamma_J \bx + \bm{\lambda_J}^\intercal \bx}. 
\end{align}
The final free entropy should not depend on the values of the parameters, so we expect that the best values for $\Gamma_0,\Gamma_J,\bm{\lambda}_0,\bm{\lambda}_J$
make $Z^{\rm EC}$ stationary. This is a strong hypothesis, and the reader can refer to \cite{opper2005expectation} for more details and justifications.
This yields the \emph{Expectation Consistency} conditions, giving their name to the procedure:
\begin{align}
\label{eq:ec_equations}
	\begin{cases}
	\braket{x_i}_0 &= \braket{x_i}_J = \braket{x_i}_S, \\	
	\braket{x_i x_j}_0 &= \braket{x_i x_j}_J = \braket{x_i x_j}_S.
	\end{cases}
\end{align}

\subsubsection{Adaptive TAP approximation}\label{subsubsec:adatap}

	The adaptive TAP approximation (or adaTAP) \cite{opper2001adaptive,opper2001tractable} provides an equivalent way to derive the free entropy of eq.~(\ref{eq:ZEC}) for models with pairwise interactions.
	Let us briefly sketch its derivation and the main arguments behind it. 
	We follow the formulation of \cite{huang2013adaptive} and we consider again the infinite-range Ising model of eq.~(\ref{eq:def_model_SK}).
	The extensive Gibbs free entropy $N\Phi = \log Z$ at fixed values of the magnetizations $m_i = \braket{x_i}$ and $v_{ij} = \braket{x_i x_j}_c$ can be written
	using Lagrange parameters: a vector $\bm{\lambda}$ and a symmetric matrix $\Gamma$.
	\begin{align}
		\Phi(\beta,\bm{m},\bv) &= \mathrm{extr}_{\bm{\lambda},\Gamma}\, \left[-\bm{\lambda}^\intercal \bm{m} + \frac{1}{2} \sum_{i,j} \Gamma_{ij}(v_{ij} + m_i m_j) + \log \int \mathrm{d}\bx \, P_0(\bx) \, e^{\frac{\beta}{2} \bx^\intercal J \bx - \frac{1}{2} \bx^\intercal \Gamma \bx+ \bm{\lambda}^\intercal \bx}\right].
	\end{align}
	The adaTAP approximation consists in writing:
	\begin{align}
	N\Phi(\beta,\bm{m},\bv) &= \Phi(0,\bm{m},\bv) + \int_0^\beta \mathrm{d}l \, \frac{\partial \Phi(l,\bm{m},\bv)}{\partial l}, \nonumber\\
	\label{eq:adatap}
					&\simeq \Phi(0,\bm{m},\bv) + \Phi_{g}(\beta,\bm{m},\bv) - \Phi_{g}(0,\bm{m},\bv).
	\end{align}
	In this expression, $\Phi_{g}(\beta,\bm{m},\bv)$ denotes the free entropy of the same system, but where the spins have a  Gaussian statistics. 
	The idea behind the adaTAP approximation is as follows. The derivative $\partial_l \Phi(l,\bm{m},\bv) = \frac{1}{2N} \sum_{ij} J_{ij}\braket{x_i x_j}$ is an expectation of a sum over a large number of terms; therefore  it is reasonable to assume that this expectation
	is the same as if the underlying variables were Gaussian. This assumption of adaTAP, although reasonable, is \emph{a priori} hard to justify more rigorously and systematically.
	It is important to notice that the free entropy \eqref{eq:adatap} of adaTAP is equivalent to the one derived using Expectation Consistency in eq.~\eqref{eq:ZEC}. 
	Indeed, using Lagrange parameters we can write the three terms of eq.~(\ref{eq:adatap}) as:
	\begin{align}
	N	\Phi^{\rm adaTAP}(\beta,\bm{m},\bv) &= \underset{\bm{\lambda_0},\Gamma_0}{\mathrm{extr}} \left[\log \left\{\int \mathrm{d}\bx P_0(\bx)\, e^{-\frac{1}{2} \bx^\intercal \Gamma_0 \bx + \bm{\lambda}_0^\intercal \bx} \right\}- \bm{\lambda_0}^\intercal \bm{m} + \frac{1}{2} \sum_{i,j} (\Gamma_0)_{ij}(v_{ij} + m_i m_j)\right] \nonumber \\
		& + \underset{\bm{\lambda_J},\Gamma_J}{\mathrm{extr}} \left[\log \left\{\int \mathrm{d}\bx P_J(\bx) \, e^{-\frac{1}{2} \bx^\intercal \Gamma_J \bx + \bm{\lambda}_J^\intercal \bx} \right\}- \bm{\lambda_J}^\intercal \bm{m} + \frac{1}{2} \sum_{i,j} (\Gamma_J)_{ij}(v_{ij} + m_i m_j)\right] \nonumber \\
		& - \underset{\bm{\lambda_S},\Gamma_S}{\mathrm{extr}} \left[\log \left\{\int \mathrm{d}\bx \, e^{-\frac{1}{2} \bx^\intercal \Gamma_S \bx + \bm{\lambda}_S^\intercal \bx} \right\}- \bm{\lambda_S}^\intercal \bm{m} + \frac{1}{2} \sum_{i,j} (\Gamma_S)_{ij}(v_{ij} + m_i m_j)\right].
	\end{align}
	Once written in this form, the extremization over $\bm{m}$ and $\bv$ of the free entropy 
	implies that $\Gamma_S = \Gamma_0 + \Gamma_J$ and $\bm{\lambda}_S = \bm{\lambda}_0 + \bm{\lambda}_J$.
	It is then clear that we found back $\log Z^{\rm EC}$ of eq.~(\ref{eq:ZEC}).

\subsubsection{Vector Approximate Message Passing approximation}\label{subsubsec:derivation_vamp}

The Vector Approximate Message Passing (VAMP) algorithm \cite{rangan2017vector} extends previous message-passing approaches 
like the GAMP algorithm \cite{rangan2011generalized} (that we will describe in more details in Sec.~\ref{subsec:gamp}) to a class of correlated interaction matrices, 
namely matrices that satisfy a right-rotation invariance property, similarly to Model~\ref{model:sym_rot_inv} and Model~\ref{model:nsym_rot_inv}. 
The algorithm itself can be derived in several ways (see \cite{rangan2017vector}). Here we briefly recall the use of belief-propagation equations on a ``duplicated''  factor graph and their Gaussian projection.
As we shall see, the Bethe free entropy, given as a function of the BP messages, is then equivalent to the expectation-consistency free entropy.
For simplicity, we consider again the problem of eq.~(\ref{eq:def_model_SK}) with a pairwise interaction involving a matrix $J$ following Model~\ref{model:sym_rot_inv}.

The idea behind VAMP is to consider two vector spin variables ${\bf x_1}$, with measure $P_0$ and ${\bf x_2}$ with measure~$P_J$, and to impose that they are equal. The partition function can be written using a trivial decomposition:
\begin{align}\label{eq:Z_VAMP}
Z \equiv e^{N \Phi} = \int_{\bbR^N} \  {\rm d} \bx_1 \  {\rm d} \bx_2 \ P_0(\bx_1) P_J(\bx_2) \ \delta(\bx_1-\bx_2).
\end{align}
This partition function can be represented as a ``duplicated'' factor graph involving two vector nodes, see Fig.~\ref{fig:vamp}.
\begin{figure}[t]
 \centering
\captionsetup{justification=centering}
\begin{tikzpicture}[scale=1.]
\node[draw,circle,line width = 0.5mm,minimum size=0.7cm,inner sep=0pt] (x1) at (0,0) {$\bx_1$};
\node[draw,circle,line width = 0.5mm,minimum size=0.7cm,inner sep=0pt] (x2) at (4,0) {$\bx_2$};
\node[draw,rectangle,line width = 0.5mm,minimum size=1cm,inner sep=0pt] (delta) at (2,0) {$\delta_{\bx_1,\bx_2}$};
\node[draw,rectangle,line width = 0.5mm,minimum size=1cm,inner sep=0pt] (P0) at (-2,0) {$P_0$};
\node[draw,rectangle,line width = 0.5mm,minimum size=1cm,inner sep=0pt] (PJ) at (6,0) {$P_J$};
\draw [line width = 0.3mm] (P0) to (x1);
\draw[line width = 0.3mm,->] (delta) -- node[above=1.5mm] {$m_0$} (x1);
\draw[line width = 0.3mm,->] (delta) -- node[above=1.5mm] {$m_J$} (x2);
\draw [line width = 0.3mm] (PJ) to (x2);
\end{tikzpicture}
\caption{Duplicated factor graph for the VAMP approximation. Circles represent vector nodes and squares factor nodes. 
We represent the two messages $m_0$ and $m_J$ in terms of which we can write the full BP equations.}\label{fig:vamp}
\end{figure}
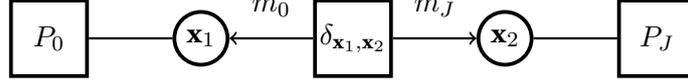
One then writes the BP equations for this problem using this factor graph representation (the 
 reader who is not familiar with BP equations and factor graph representations can consult \cite{mezard2009information}).
The BP equations can be written in terms of two ``messages'' $m_0({\bf x_2})$ and $m_J({\bf x_1})$. They can be obtained by looking at the stationarity conditions of the following ``Bethe free entropy'':
\begin{align}\label{eq:phi_Bethe_VAMP}
\Phi_{\rm Bethe} &\equiv \log\left( \int {\rm d} \bx \ P_0(\bx) m_J(\bx) \right) + \log\left( \int {\rm d} \bx \ P_J(\bx) m_0(\bx) \right) - \log\left( \int {\rm d} \bx \ m_0(\bx) m_J(\bx) \right).
\end{align}
As the factor graph has a tree structure, these BP equations are an exact representation of the original problem, but they are in general intractable.
In order to make the computation tractable one can make a Gaussian approximation, which is at the core of the VAMP algorithm:
the messages $m_0$ and $m_J$ on the factor graph of Fig.~\ref{fig:vamp} are assumed to be Gaussian, and thus are only characterized 
by their mean and their covariance. 
We can thus write:
\begin{align}\label{eq:Gaussian-Ansatz-vamp}
m_0(\bx) &\propto e^{-\frac12 \bx^\intercal \Gamma_0 \bx + \blambda_0^\intercal \bx}, \qquad 
m_J(\bx) \propto e^{-\frac12 \bx^\intercal \Gamma_J \bx + \blambda_J^\intercal \bx}.
\end{align}
Writing the BP update rule with this assumption yields the VAMP algorithm of \cite{rangan2017vector}. In the present case it reads:
\begin{subnumcases}{\label{eq:BPVamp}}
\Gamma_0^{t}=\langle \bx \bx^T\rangle_{\mu_0,c}^{t-1}-\Gamma_J^{t-1}, &\\
\blambda_0^{t}=(\Gamma_0^t+\Gamma_J^{t-1})\langle \bx \rangle_{\mu_0}^{t-1}-\blambda_J^{t-1} , & \\
\Gamma_J^{t}=\langle \bx \bx^T\rangle_{\mu_J,c}^{t}-\Gamma_0^{t}, &\\
\blambda_J^{t}=(\Gamma_0^{t}+\Gamma_J^{t})\langle \bx \rangle_{\mu_J}^{t}-\blambda_0^{t},& 
\end{subnumcases}
where the measures $\mu_0$ and $\mu_J$ are respectively
\begin{align}
\mu_0(\bx)&\propto P_0(\bx)m_J(\bx),\qquad \mu_J(\bx)\propto P_J(\bx)m_0(\bx).
\end{align}
Plugging the ansatz of eq.~(\ref{eq:Gaussian-Ansatz-vamp}) into eq.~(\ref{eq:phi_Bethe_VAMP}) immediately gives back the EC free entropy of eq.~(\ref{eq:ZEC}). 
And the BP equations of eq.~(\ref{eq:BPVamp}) are identical to the Expectation Consistency conditions of eq.~(\ref{eq:ec_equations}).

In the end, the three approximation schemes, Expectation Consistency, adaptive TAP and VAMP give the same expression for the free entropy.
However, an important advantage of the VAMP approach is that it ``naturally'' gives an iterative scheme to solve the fixed point equations because it was derived via the belief propagation equations. 
This iterative scheme turns the fixed point equations into an efficient algorithm, as noticed by 
\cite{rangan2017vector} and as we show later in Sec.~\ref{subsec:vamp}.

\subsection{Plefka expansion for models of symmetric pairwise interactions}\label{subsec:plefka_sym_models}

	\subsubsection{A symmetric model with generic priors}\label{subsubsec:sym_generic_prior}

		In this subsection we consider a generic model of $N$ ``spin'' variables $\bx = \{x_1,\cdots,x_N\} \in \bbR^N$. 
		They interact via a pairwise interaction and are subject to a possible external field, which is modeled by the following Hamiltonian:
		\begin{align}\label{eq:def_hamiltonian_ising}
			H_J(\bx) &= - \frac{1}{2} \sum_{i,j} J_{ij} x_i x_j + \sum_i h_i x_i.
		\end{align}
		We will consider random symmetric coupling matrices $\{J_{ij}\}$, generated from a rotationally 
		invariant ensemble satisfying Model~\ref{model:sym_rot_inv}.
		We assume that the variables $\{x_i\}$ have a prior distribution under which they are \emph{independent} variables, each with 
		a distribution $P_i$. For instance, this includes Ising (binary) spins by choosing $P_i = \frac{1}{2}(\delta_1 + \delta_{-1})$.
		At a given inverse temperature $\beta \geq 0$ and a fixed realization of the coupling matrix $J$ we define
		the Gibbs-Boltzmann distribution of the spins, and the partition function, as:
		\begin{align}
			\label{eq:def_Gibbs_ising}
			P_{\beta,J}(\mathrm{d}\bx) &\equiv \frac{1}{Z_{J}(\beta)} \prod_i P_i(\mathrm{d}x_i) \, \exp \left\{\frac{\beta}{2} \sum_{i,j} J_{ij} x_i x_j - \beta \sum_i h_i x_i\right\}, \\
			\label{eq:def_Z_ising}
			Z_{J}(\beta) &= \int \prod_i P_i(\mathrm{d}x_i) \, \exp \left\{\frac{\beta}{2} \sum_{i,j} J_{ij} x_i x_j - \beta \sum_i h_i x_i\right\}.
		\end{align}
		We will compute the large $N$ limit of the free entropy $\Phi_{J}(\beta) \equiv \frac{1}{N} \log Z_{J}(\beta)$ at fixed 
		values of the magnetizations $m_i = \braket{x_i}$ and variances $v_i = \braket{(x_i-m_i)^2}$ using the Plefka expansion.
		We will fix these variables using Lagrange multipliers $\{\lambda_i\}$ for the $\{m_i\}$ variables, and $\{\gamma_i\}$ for the $\{v_i\}$. 
		The goal of this section is to show how, and under which assumptions, the 
		calculation of Sec.~\ref{subsubsec:plefka_sym_spherical} can be generalized in this context. 
		Clearly the zeroth order term is different from the spherical case and it is given by:
		\begin{align}\label{eq:phi_order0_symmetric}
			\Phi_{J}(\beta = 0) &= \frac{1}{N} \sum_i \lambda_i m_i + \frac{1}{2N} \sum_i \gamma_i (v_i+m_i^2) + \frac{1}{N} \log \int\, \prod_{i} P_i(\mathrm{d}x_i) \, e^{-\frac{1}{2} \sum_i \gamma_i x_i^2 -\sum_i \lambda_i x_i }.
		\end{align} 
		As we underlined in Sec.~\ref{subsubsec:plefka_sym_spherical}, in the Georges-Yedidia method the Lagrange parameters are always considered at $\beta=0$.
		At order $1$ in $\beta$ we obtain at leading order: 
		\begin{align}
			\left(\frac{\partial \Phi_{J}}{\partial \beta}\right)_{\beta=0} &= - \frac{1}{N} \braket{H_J}_0 = \frac{1}{2N} \sum_{i,j} J_{ij} m_i m_j + \frac{1}{2N} \sum_i J_{ii} v_i - \frac{1}{N}\sum_i h_i m_i.
		\end{align}
		The operator $U$ of eq.~(\ref{eq:def_U}), defined in \cite{georges1991expand} and taken at $\beta=0$, is thus exactly the same as the one of eq.~(\ref{eq:U_spherical}). 
		Using this remark we can see that many of the results obtained in Sec.~\ref{subsubsec:plefka_sym_spherical} will apply to the present case. 
		For instance the second order term is identical and given in eq.~(\ref{eq:spherical_order2}). 
		We then conjecture, backed by our diagrammatic results in Sec.~\ref{sec:diagrammatics}, that the higher order terms are different from the spherical model only in terms which are sub-leading in $N$.
		 For instance at third order we obtain:
		\begin{align}\label{eq:phi_order3_sym_prior}
			\frac{1}{3!} \left(\frac{\partial^3 \Phi_{J}}{\partial \beta^3}\right)_{\beta=0} &= -\frac{1}{6N} \braket{U^3}_0 = \frac{1}{6N} \hspace{-0.5cm}\sum_{\substack{i_1,i_2,i_3 \\ \text{pairwise distincts}}} \hspace{-0.5cm}J_{i_1 i_2} J_{i_2 i_3} J_{i_3 i_1} v_{i_1} v_{i_2} v_{i_3} + \frac{1}{6N} \sum_{i \neq j} J_{ij}^3 \kappa^{(3)}_i \kappa^{(3)}_j.
		\end{align}
		In this equation, we denoted $\kappa^{(p)}_i$ the \emph{cumulant} 
		 of order $p$ of the distribution of $x_i$ at $\beta=0$.
		Note that the rotation invariance  of Model~\ref{model:sym_rot_inv} implies that if $i \neq j$, typically $J_{ij} \sim \frac{1}{\sqrt{N}}$. Therefore 
		a term like $\sum_{i \neq j} J_{ij}^3$ gives a negligible contribution to the free entropy. 
		We shall therefore assume that the second part of 
		the RHS of eq.~(\ref{eq:phi_order3_sym_prior}) is negligible as $N \to \infty$. This is correct provided that the possible correlations of the third order cumulants~$\kappa^{(3)}_i$
		with the matrix $J$ do not change the scaling of this term sufficiently to make it thermodynamically relevant (see Sec.~\ref{subsec:higher_order_symmetric} for more details on this particular point). 
		The first term corresponds to a simple cycle of order $3$ and is the same term that appeared in Sec.~\ref{subsubsec:plefka_sym_spherical}.

		\paragraph{Higher orders}
		We can carry on the computation of the derivatives $\frac{\partial^p}{\partial \beta^p}\Phi_{J}(\beta=0)$.
		We explain in more details
		in Sec.~\ref{subsec:higher_order_symmetric} under which precise results and assumptions we can leverage
		the results of Sec.~\ref{subsubsec:plefka_sym_spherical}, summarized in eq.~(\ref{eq:spherical_full}), to conjecture the following value of the
		free entropy at all orders of perturbation and at leading order in $N$:
		\begin{align}\label{eq:conjecture_phi_symmetric}
			\Phi_{J}(\beta) = \Phi_{J}(0) + \frac{\beta}{2N} \sum_{i,j} J_{ij} m_i m_j - \frac{\beta}{N} \sum_i h_i m_i + \frac{1}{N} \sum_{p=1}^\infty \frac{\beta^p}{2p} 
			\hspace{-0.5cm}\sum_{\substack{i_1,\cdots,i_p \\\text{pairwise distincts}}}\hspace{-0.5cm} J_{i_1 i_2} \cdots J_{i_p i_1} \prod_{\alpha=1}^p v_{i_\alpha}.
		\end{align}
		\paragraph{Homogeneous variances} For the remainder of this section we assume that the maximum of the free entropy of eq.~(\ref{eq:conjecture_phi_symmetric})
		is attained for variables $\{v_i\}$ such that $v_i = v$. This hypothesis can be argued as reasonable for many models, but we postpone this argumentation 
		to the applications of eq.~(\ref{eq:conjecture_phi_symmetric}) to specific models.
		We obtain a resummation of the Plefka free entropy using the correspondence of eq.~(\ref{eq:correspondance_spherical}):
		\begin{align}\label{eq:resummed_phi_symmetric}
			\Phi_{J}(\beta) = \Phi_{J}(0) + \frac{\beta}{2N} \sum_{i,j} J_{ij} m_i m_j - \frac{\beta}{N} \sum_i h_i m_i + \frac{1}{2} \int_0^{\beta v} \mathcal{R}_{\rho_D}(u) \, \mathrm{d}u.
		\end{align}
		Recall finally that $\Phi_{J}(0)$ is given by eq.~(\ref{eq:phi_order0_symmetric}). We were able to perform this expansion and its resummation almost only by
		applying our results on the spherical models of Sec.~\ref{sec:spherical_bipartite}. The study of the large-$N$ behavior of 
		diagrams made out of matrix elements of $J$, performed in Sec.~\ref{sec:diagrammatics}, proves to be of crucial importance both in the expansion and
		its resummation.
		As discussed in Sec.~\ref{subsubsec:plefka_sym_spherical},
		we expect this Plefka expansion of the free entropy to hold for $\beta < \beta_c$, in which $\beta_c \equiv -v^{-1} \ {\cal S}_{\rho_D}(\lambda_{\rm max})$.

	\subsubsection{Connection of the Plefka expansion to EC approximations}\label{subsubsec:connection_plefka_ep_adatap}

	Although not obvious at first, the result of the Plefka expansion in eq.~(\ref{eq:conjecture_phi_symmetric})
	provides a systematic and precise analysis of asymptotic exactness for rotationally invariant models of the Expectation Consistency approximations.
	The more straightforward way to see how the Plefka expansion relates to these approximations is 
	to start from the adaTAP approximation, see eq.~(\ref{eq:adatap}).
	In the language of the Plefka expansion, adaTAP amounts to assuming that at every order $p \geq 1$ of perturbation in $\beta$, 
	one can perform the calculation \emph{as if} the statistics of the variables were Gaussian. An equivalent formulation is that all the terms of order $p \geq 1$ in the low-$\beta$ expansion  
	of the free entropy
	should be the same for the model with a generic prior of Sec.~\ref{subsubsec:sym_generic_prior} and for the spherical model of Sec.~\ref{subsec:sym_spherical_model}.
	This statement, which generalizes the Parisi-Potters result of \cite{parisi1995mean}, is exactly what we argued in the calculation of Sec.~\ref{subsubsec:sym_generic_prior}, using the diagrammatic analysis of Sec.~\ref{sec:diagrammatics}.
	Therefore, the diagrammatic analysis `à la Plefka' provides a clear meaning to the EC approximations: the class of diagrams that are neglected in these approximations are explicited in Sec.~\ref{sec:diagrammatics}. 
	We believe that these diagrams are actually negligible in the large $N$ limit, so that the EC approximations are actually exact asymptotically for rotationally invariant models, in the high temperature phase as captured by the resummation of the Plefka expansion, which we summarized in Conjecture~\ref{conj:main_conj_approximation}. We believe that this asymptotic exactness extends beyond the high temperature phase to any model in the replica symmetric phase. 
	The diagrammatic analysis provides a route to proving this statement rigorously.

	\subsubsection{Application to the Hopfield model}\label{subsubsec:plefka_hopfield}

In the Hopfield model \cite{hopfield1982neural} we consider binary spins $\bx \in \{\pm 1\}^N$ and the coupling matrix $J$ is constructed out of $P$ \emph{patterns}, which
are spin configurations $\xi^l \in \{\pm 1\}^N$, for $l \in \{1,\cdots,P\}$. The coupling constants are defined as:
\begin{align}
	\label{eq:def_coupling_hopfield}
	\begin{cases}
	J_{ij} &= \frac{1}{N} \sum_{l=1}^P \xi^l_i \xi^l_j \qquad (i \neq j), \\
	J_{ii} &= 0,
	\end{cases}
\end{align}
and we assume that the $\{\xi^l_i\}$ are i.i.d.\ variables with equal probability in $\{\pm 1\}$, so that $\EE J_{ij} = 0$ and $\EE J_{ij}^2 = P/N^2$. We study this system
in the limit in which both $P,N \to \infty$ with a fixed ratio $P/N \to \alpha$.
The derivation of the TAP free energy for these models has been performed in \cite{nakanishi1997mean,mezard2017mean} via the Plefka expansion, and via the cavity method in \cite{mezard1987spin}.
If the random matrix ensemble of eq.~(\ref{eq:def_coupling_hopfield}) is \emph{a priori} not rotationally invariant, one can show that since the variables $\{\xi^l_i\}$ are i.i.d., 
only the first and second moment of their distributions will contribute to the thermodynamic limit of the free entropy, so that we can assume 
that they are actually standard centered Gaussian variables without changing the free entropy.
This is strengthened by the classical results of \cite{marchenko1967distribution}, who only need to consider i.i.d.\ variables $\{\xi^l_\mu\}$
to obtain that the spectral law of the covariance matrix written in eq.~(\ref{eq:def_coupling_hopfield}) converges weakly to the celebrated Marchenko-Pastur distribution.

The ensemble of eq.~(\ref{eq:def_coupling_hopfield}) is thus for our purposes essentially a Wishart matrix model in which the diagonal has been removed. 
It is then a known result of random matrix theory (see for instance \cite{tulino2004random}) that its asymptotic $\mathcal{R}$-transform reads:
\begin{align}\label{eq:R_transform_hopfield}
	\mathcal{R}_J(x) &= \frac{\alpha}{1-x} - \alpha = \alpha \frac{x}{1-x}.
\end{align}
The term $-\alpha$ in the first equality accounts for the ``removal'' of the diagonal of the Wishart matrix.
Let us apply eq.~(\ref{eq:conjecture_phi_symmetric}) and eq.~(\ref{eq:resummed_phi_symmetric}) for this model. 
At $\beta=0$, the free entropy is given by eq.~(\ref{eq:phi_order0_symmetric}). For an Ising prior on the spins, it reads:
\begin{align}
	\label{eq:ising_order0}
	\Phi_{J}(\beta=0) &= - \frac{1}{N} \sum_i \left[\frac{1+m_i}{2} \ln \frac{1+m_i}{2} + \frac{1-m_i}{2} \ln \frac{1-m_i}{2}\right]. 
\end{align}
Note that because $x_i^2 = 1$ the variances $v_i$ of the variables $x_i$ are fixed by the magnetizations $m_i$ by the relation $v_i = 1 - m_i^2$.
Eq.~(\ref{eq:conjecture_phi_symmetric}) becomes:
\begin{align}
	\label{eq:plefka_hopfield}
	\Phi_{J}(\beta) &=  \Phi_{J}(0) +\frac{\beta}{2} \sum_{i,j} J_{ij} m_i m_j + \frac{1}{N} \sum_{p=1}^\infty  \frac{\beta^p}{2p}\hspace{-0.5cm} \sum_{\substack{i_1,\cdots,i_p \\ \text{pairwise distincts}}}\hspace{-0.5cm} J_{i_1 i_2} \cdots J_{i_p i_1} \prod_{\alpha=1}^p (1-m_{i_\alpha}^2) + \smallO_N(1).
\end{align}
Substituting the squared means $m_i^2$ by the spin glass order parameter $q \equiv (1/N) \sum_i m_i^2$ in eq.~(\ref{eq:plefka_hopfield}) 
and using the resummed form of eq.~(\ref{eq:resummed_phi_symmetric}), we obtain:
\begin{align}
	\label{eq:plefka_hopfield_resummed}
	\Phi_{J}(\beta) &=  \Phi_{J}(0) + \frac{\beta}{2} \sum_{i,j} J_{ij} m_i m_j + \frac{1}{2} \int_0^{\beta(1-q)} \mathcal{R}_{\rho_D}(u) \,\mathrm{d}u + \smallO_N(1).
\end{align}
Starting from eq.~(\ref{eq:plefka_hopfield_resummed}) and eq.~(\ref{eq:R_transform_hopfield}), we reach the final form for the free entropy:
\begin{align} \label{eq:plefka_hopfield_resummed2}
	\Phi_{J}(\beta) = - \frac{1}{N} \sum_i &\left[\frac{1+m_i}{2} \ln \frac{1+m_i}{2} + \frac{1-m_i}{2} \ln \frac{1-m_i}{2}\right] + \frac{\beta}{2} \sum_{i,j} J_{ij} m_i m_j \\
	& - \frac{\alpha \beta (1-q)}{2} - \frac{\alpha}{2} \ln \left[1-\beta(1-q)\right]  + \smallO_N(1)\nonumber.
\end{align}
Maximizing it over the continuous set of magnetizations $\{m_i\}$ yields the TAP equations:
\begin{align}\label{eq:TAP_hopfield}
 \beta^{-1}\tanh^{-1}(m_i) &= \sum_{j (\neq i)} J_{ij} m_j - \alpha \beta \frac{1-q}{1-\beta (1-q)} m_i. 	
\end{align}
This is in agreement with the findings of \cite{mezard1987spin,nakanishi1997mean, mezard2017mean}.
However, our framework and results allowed us to treat this kind of model in a very generic way.

	\subsection{The Replica approach}\label{subsec:plefka_replicas}
		
	Interestingly, the same approaches can be used to study the partition function and derive the free entropy of rotationally invariant models using the replica approach. 
	The essence of the replica method is well known \cite{mezard1987spin}: one studies the $n$-th moment of the partition function $Z$ 
	by introducing $n$ 'replicas' of each original variable, and one then studies the average of $Z^n$ in the limit $n\to 0$, 
	which allows to reconstruct the average of $\log Z$, and therefore the free entropy.
	We shall illustrate in this section how the three approximation schemes (EC, adaTAP and VAMP) can be used to derive the replica free entropy, 
	and how the high-temperature Plefka expansion justifies these approximations.

	Let us consider again the generic model of eq.~(\ref{eq:def_hamiltonian_ising}), with $h_i = 0$ for simplicity.
	The $n$-th moment of the partition function reads ($a,b$ indices always denote replica indices running from $1$ to $n$):
	\begin{align}
			Z_{J}^n(\beta) &= \left[\int \prod_{a}\prod_i P_i(\mathrm{d}x^a_i)\right] \, \exp \left\{\frac{\beta}{2} \sum_{a} \sum_{i,j} J_{ij} x^a_i x^a_j\right\}.
\label{replicatedZ_def}
	\end{align}

This ``replicated partition function'' should then be averaged over the disorder. 
Here, we deal with a $J$ matrix generated from Model~\ref{model:sym_rot_inv},
and the disorder average means an average over the orthogonal matrix $O$ in the decomposition $J=ODO^T$ (keeping the eigenvalues $D$ fixed). 
Let us see how it can be analyzed using the four (equivalent) approximation schemes. 
Interestingly the average over $O$ is in fact not needed in these approaches, as they show that the repliacted partition function is ``self-averaging'',
meaning that it gives the same free entropy density for almost all realizations of $J$.
The order parameter we will fixe here is the $n\times n$ symmetric matrix $Q$ with elements $Q_{ab}=\frac{1}{N}\sum_i \braket{x^a_i x^b_i}_c $, called the \emph{overlap} matrix
in the statistical physics language.

\subsubsection{Expectation-Consistency approximation} 

As we have seen in Sec.~\ref{subsubsec:expectation_consistency}, in the Expectation-Consistency (EC) approximation, 
we decompose the replicated free entropy $\Phi = \frac{1}{N} \log Z_J^n(\beta)$ as a function of three auxiliary free entropies:
\begin{align}\label{eq:ec_replica_1}
	\Phi = \underset{\Gamma_0, \Gamma_J}{\rm extr} &\left\{\frac{1}{N} \log \left[\int \prod_{a,i} P_i({\rm d} x_i^a) e^{-\frac{1}{2} \sum_{i} \bx_i^\intercal \Gamma_0 \bx_i}\right] + \frac{1}{N} \log \left[\int \prod_{a,i} {\rm d} x_i^a \, e^{\frac{\beta}{2} \sum_{i,j,a} J_{ij} x^a_i x^a_j -\frac{1}{2} \sum_{i}\bx_i^\intercal \Gamma_J \bx_i}\right] \right. \nonumber \\
	& \left. \, - \frac{1}{N} \log \left[\int \prod_{a,i} {\rm d} x_i^a e^{-\frac{1}{2} \sum_{i} \bx_i^\intercal (\Gamma_0 + \Gamma_J) \bx_i}\right] \right\}.
\end{align}
The $\Gamma_0, \Gamma_J$ matrices are symmetric $n \times n$ matrices, and $\bx_i = (x^a_i)_{a=1}^n$.
Indeed, as we will only fix the $Q_{ab}$, we expect the first moments to vanish and the second moments to be uniform in space (but not in replica space).
The extremization over $\Gamma_0, \Gamma_J$ yields the Expectation-Consistency equations, as described in eq.~\eqref{eq:ec_equations}. In particular, the average $\frac{1}{N} \sum_i \braket{x^a_i x^b_i}$ 
is constrained to be the same under the three different measures appearing in eq.~\eqref{eq:ec_replica_1}.
As we want to impose this average to be $Q_{ab}$, 
we can introduce a Lagrange parameter $\Gamma_S$ to fix this average for instance in the third part of the r.h.s.\ of eq.~\eqref{eq:ec_replica_1}. This third term becomes
\begin{align*}
	- \, \underset{\Gamma_S}{\rm extr} \left\{\frac{1}{2} \mathrm{Tr}\ (\Gamma_S Q) + \frac{1}{N} \log \left[\int \prod_{a,i} {\rm d} x_i^a e^{-\frac{1}{2} \sum_{i} \bx_i^\intercal (\Gamma_0 + \Gamma_J + \Gamma_S) \bx_i}\right]  \right\}.
\end{align*}
Changing $\Gamma_S \to \Gamma_S + \Gamma_0 + \Gamma_J$ gives:
\begin{align*}
	&\frac{1}{2} \mathrm{Tr} \left[Q(\Gamma_0 + \Gamma_J)\right] - \underset{\Gamma_S}{\rm extr} \left\{ \frac{1}{2} \mathrm{Tr}\, \left[\Gamma_S Q\right] + \frac{1}{N} \log \left[\int \prod_{a,i} {\rm d} x_i^a e^{-\frac{1}{2} \sum_{i} \bx_i^\intercal  \Gamma_S \bx_i}\right]  \right\} \\
	&=\frac{1}{2} \mathrm{Tr} \left[Q(\Gamma_0 + \Gamma_J)\right]  - \frac{n}{2} (1 + \log 2 \pi) - \frac{1}{2} \mathrm{Tr} \log Q.
\end{align*}
In the end, performing explicitely the Gaussian integration in the basis of eigenvectors of $J$ in eq.~\eqref{eq:ec_replica_1}, we obtain the free entropy at fixed overlap $Q$
 (recall that $\rho_D$ is the asymptotic eigenvalue density of $J$):
\begin{align}\label{eq:ec_replica_2}
	\Phi(Q) &= \underset{\Gamma_0}{\rm extr} \left\{\frac{1}{2} \mathrm{Tr}(\Gamma_0 Q) + \frac{1}{N} \sum_i \log \left[\int \prod_{a} P_i({\rm d} x^a) e^{-\frac{1}{2} \bx^\intercal \Gamma_0 \bx}\right] \right\} \nonumber \\
	& + \underset{\Gamma_J}{\rm extr} \left\{\frac{1}{2} \mathrm{Tr}(\Gamma_J Q) -\frac{1}{2} \int \rho_D({\rm d}\lambda) \mathrm{Tr} \log (\Gamma_J - \beta \lambda {\rm I}_n) \right\} - \frac{1}{2} \mathrm{Tr} \log Q - \frac{n}{2}.
\end{align}
The total free entropy is obtained simply as $\underset{Q}{\mathrm{extr}} \, \Phi(Q)$. 
Assuming an ultrametric structure in replica space \cite{mezard1987spin} 
(for instance in the replica symmetric case the matrix elements of $Q$ take only two values, 
one on the diagonal and one out of the diagonal, and the same holds for $\Lambda$ and $R$), 
one can perform an explicit computation. For random orthogonal models, this expression was first derived
using Plefka expansion in \cite{marinari1994replica, marinari1994replica2}, and
written in exactly the same form as ours in \cite{cherrier2003role}.

\subsubsection{The adaTAP approximation}
Let us now describe the adaTAP approach to this problem.
We introduce Lagrange parameters $\Lambda_{ab}$ which act as external fields, fixing the values of the order parameter $Q_{ab}$.
The matrix $\Lambda$ is a $n\times n$ symmetric matrix. The free entropy at fixed overlap matrix $\Phi(Q)$ is expressed as:
\begin{align}
\Phi(Q) &=\underset{\Lambda}{\mathrm{extr}}\left[\frac{1}{2}\sum_{a,b}\Lambda_{ab}Q_{ab}+\frac{1}{N}\log
\left[\int \prod_{a,i} P_i(\mathrm{d}x^a_i)\right] \, \exp \left\{\frac{\beta}{2} \sum_{a,i,j} J_{ij} x^a_i x^a_j-\frac{1}{2}\sum_{a,b,i}\Lambda_{ab} x_i^a x_i^b\right\}
\right].
\end{align}
The total free entropy will thus be obtained as $\underset{Q}{\mathrm{extr}} \, \Phi(Q)$.
We use the adaTAP approximation, which gives:
\begin{align}
	\Phi(Q)=\Phi(Q,\beta=0)+\Phi_G(Q,\beta)-\Phi_G(Q,\beta=0),
\end{align}
where $\Phi_G$ is the Gibbs free energy for a model with Gaussian
spins, at fixed overlap between spins.
The three pieces of $\Phi$ can be written as:
\begin{align}
\Phi(Q,\beta=0) &= \underset{\Lambda}{\mathrm{extr}}\left[\frac{1}{2}\mathrm{Tr}\,(\Lambda Q) +
\frac{1}{N}\sum_i\log\left(\int \prod_a P_i(\mathrm{d}x^a) e^{-\frac{1}{2}\sum_{a,b}\Lambda_{ab} x^a x^b}\right)\right], \\
\Phi_G(Q,\beta) &= \underset{\Lambda}{\mathrm{extr}}\left[\frac{1}{2}\mathrm{Tr} \, (\Lambda Q) +\frac{1}{N}\sum_{\lambda \in {\rm Sp}(J)}\log \int\prod_a {\rm d} x^a e^{-\frac{1}{2}\sum_{a,b}\Lambda_{ab}x^ax^b+\frac{\beta}{2}\lambda\sum_a (x^a)^2}\right],\\
\Phi_G(Q,\beta=0) &= \underset{\Lambda}{\mathrm{extr}}\left[\frac{1}{2}\mathrm{Tr} \, (\Lambda Q) +\log \int\prod_a {\rm d} x^a e^{-\frac{1}{2}\sum_{a,b}\Lambda_{ab}x^ax^b}\right].
\end{align}
In $\Phi_G(Q,\beta)$ we carry the integral over $s_i^a$ in an (orthonormal) basis of eigenvectors of $\{J_{ij}\}$.
$\Phi_G(Q,\beta=0)$ is easily evaluated: the extremization gives
$\Lambda = Q^{-1}$, and therefore
\begin{align}
\Phi_G(Q,\beta=0) &= n\frac{1 + \log 2 \pi}{2}+\frac{1}{2}\Tr\log Q.
\end{align}
We change notation and denote by $R$ the matrix $\Lambda$ that appears in $\Phi_G(Q,\beta)$. This gives 
finally:
\begin{align}
\Phi(Q)=\underset{\Lambda,R}{\mathrm{extr}} \, \Phi(Q,\Lambda,R),
\end{align}

with
\begin{align}
\Phi(Q,\Lambda,R)&=\frac{1}{2}\Tr(\Lambda Q) +\frac{1}{N}\sum_i \log\left(\int
    \prod_a P_i(\mathrm{d}x^a) e^{-\frac{1}{2}\sum_{a,b}\Lambda_{ab} x^a
x^b}\right)\nonumber\\
&+\frac{1}{2}\Tr(R Q)
  - \frac{1}{2} \int \rho_D({\rm d}\lambda) \Tr\log(R-\beta \lambda {\rm I}_n)
  -\frac{1}{2}\Tr\log Q -\frac{n}{2}.
\end{align}
We found back exactly the EC free entropy of eq.~\eqref{eq:ec_replica_2}.

\subsubsection{The VAMP approach}
Following eq.~(\ref{eq:Z_VAMP}), the replicated partition function can be written as a ``duplicated'' integral:

\begin{align} 
Z_{J}^n(\beta) = \int_{\bbR^{nN} \times \bbR^{nN}} P_0(\rm{d}\bx) P_J(\rm{d} \by) \delta(\bx - \by),
\end{align}
where
\begin{align}
P_0({\rm d}\bx) &= \prod_i P_i(\mathrm{d}x^a_i),\\
P_J({\rm d}\by) &= \exp \left\{\frac{\beta}{2} \sum_{a} \sum_{i,j} J_{ij} y^a_i y^a_j\right\} \,{\rm d}\by.
\end{align}
We now write the Bethe free entropy $\Phi_{\rm Bethe}$ as in eq.~(\ref{eq:phi_Bethe_VAMP}), and project it onto the space of Gaussian messages 
which are assumed to be proportional to identity in space, but with an
arbitrary replica structure:
\begin{align}
m_0(\bx) &\propto \exp\left(-\frac{1}{2} \sum_{a,b} A^0_{ab}\sum_i
  x_i^a x_i^b\right),\\
m_J(\bx) &\propto \exp\left(-\frac{1}{2} \sum_{a,b} A^J_{ab}\sum_i
  x_i^a x_i^b\right).
\end{align}
We did not include first moments because we expect all the first
moments to vanish in the replicated system.
We can now write the
stationarity equations of $\Phi_{\rm Bethe}$ with respect to $A^0$ and $A^J$: 
\begin{align}
\left([A^0+A^J]^{-1}\right)_{ab}&=-2\frac{\partial}{\partial A^J_{ab}} \frac{1}{N}\sum_i\log\left( \int \prod_a P_i(\mathrm{d}x^a) \exp\left(-\frac{1}{2} \sum_{a,b} A^J_{ab}
  x^a x^b\right) \right), \\
\left([A^0+A^J]^{-1}\right)_{ab}&= \int \rho_D({\rm d}\lambda) \left([A^0-\beta\lambda]^{-1}\right)_{ab}.
\label{eqQA}
\end{align}
It is easy to derive the replicated free entropy $\Phi$. One finds
\begin{align}
\Phi &= \underset{Q,A^0,A^J}{\mathrm{extr}} \, [\Phi(Q,A^0,A^J)],
\end{align}
where
\begin{align}
\Phi(Q,A^0,A^J)&= \frac{1}{N}\sum_i \log\left[ \int \prod_a P_i(\mathrm{d}x^a) \exp\left(-\frac{1}{2} \sum_{a,b} A^J_{ab}
  x^a x^b\right) \right]
-\frac{1}{2}\int \rho_D({\rm d}\lambda) \Tr\log (A^0-\beta\lambda {\rm I}_n)
\nonumber
\\
&
- \frac{1}{2}\Tr\log(Q)
+\frac{1}{2} \Tr(A^J Q) +\frac{1}{2}\Tr(A^0 Q)-\frac{n}{2}.
\label{PhiQrep}
\end{align}
This is again exactly the same result as the replicated free entropy found with EC or adaTAP in the previous sections.

\subsubsection{Plefka expansion}

As before, we fix only the overlap $Q_{ab}$, via Lagrange parameters $\gamma^{ab}$.
	Let us denote $\Phi(Q,\beta)$ the corresponding free entropy. At $\beta=0$ the replicas are not coupled so that we obtain
	\begin{align}\label{eq:phi_n_order0}
		\Phi(Q,\beta = 0) &= \frac{1}{2} \sum_{a,b} \gamma^{ab} Q_{ab} + \frac{1}{N} \sum_i \log \int\, \prod_{a=1}^n P_i(\mathrm{d}x^a) \, e^{-\frac{1}{2} \sum_{a,b} \gamma^{ab} x^a x^b}.
	\end{align} 
	One can then compute the order $1$ perturbation and the $U$ operator of Georges-Yedidia:
	\begin{align}
		\left(\frac{\partial \Phi}{\partial \beta}\right)_{\beta=0} &= \frac{1}{2N} \sum_{a,i} J_{ii} q_{aa}, \\
		U(\beta=0) &= - \frac{1}{2} \sum_{a}\sum_{i \neq j} J_{ij} x^a_i x^a_j. 	
	\end{align}
	Note that at $\beta=0$ we have $\braket{x^a_i}_0 = 0$. We obtain the order $2$ correction in the same way as before:
	\begin{align}
		\frac{1}{2} \left(\frac{\partial^2 \Phi}{\partial \beta^2}\right)_{\beta=0} &= \frac{1}{8N} \sum_{i \neq j} J_{ij}^2 \mathrm{Tr}\,[Q^2].
	\end{align}
	Here the trace is taken in the replica space. One can continue the Plefka expansion at any order, and one obtains very similar results to
	the non-replicated free entropy, simply the product of variances is replaced by traces of the overlap matrix $Q$. 
	Indeed, the diagrams constructed from the matrix indices $\{J_{ij}\}$ that appear in the replicated free entropy are exactly the same as in the non-replicated 
	case, so that all the results of Sec.~\ref{sec:diagrammatics} stay valid for this replicated calculation.
	In the end, the resummation yields the \emph{single-graph} replicated free entropy as a function of the overlap matrix $Q$:
	\begin{align}
		\Phi(Q,\beta) &= \Phi(Q,\beta=0) + \sum_{p=1}^\infty \frac{\beta^p}{2p} c_p(\rho_D) \mathrm{Tr}\,\left[Q^p\right]\ .
	\end{align}
The series in the second part of this equation is nothing but $\mathrm{Tr}\,\left[G_{\rho_D}(Q)\right]$ (see Appendix~\ref{sec:appendix_rmt} for the definition of this function). Recalling the expression (\ref{eq:phi_n_order0}) for the first, $\beta=0$, piece, we get finally:
	\begin{align}\label{eq:phi_replica_plefka}
		\Phi(Q,\beta) &= \underset{\gamma}{\rm extr} \left\{ \frac{1}{2} \mathrm{Tr} (\gamma Q) + \frac{1}{N} \sum_i \log \int\, \prod_{a=1}^n P_i(\mathrm{d}x^a) \, e^{-\frac{1}{2} \sum_{a,b} \gamma^{ab} x^a x^b} + \mathrm{Tr}\,\left[G_{\rho_D}(Q)\right] \right\},
	\end{align}
	and recall that the $\beta=0$ term is given by eq.~(\ref{eq:phi_n_order0}) and that the $G_{\rho_D}$ function is the integrated $\mathcal{R}$-transform 
	defined in Appendix~\ref{sec:appendix_rmt}, which verifies:
	\begin{align}
		G_{\rho_D}(z) &= \underset{x}{\rm extr} \left[\frac{z x}{2} - \frac{1}{2}\int \rho_D({\rm d}\lambda) \log(\lambda-x) \right] - \frac{1 + \log x}{2}.	
	\end{align}
	Thus we recognize once again in eq.~\eqref{eq:phi_replica_plefka} the expression obtained via EC and adaTAP, see eq.~\eqref{eq:ec_replica_2}.

\subsection{Plefka expansion for models of bipartite pairwise interactions}\label{subsec:plefka_bipartite_models}

	\subsubsection{A bipartite model with generic priors}\label{subsubsec:bipartite_generic_prior}

We consider here another generic model, which is an extension of the bipartite model studied in Sec.~\ref{subsec:bipartite_spherical_model}, the difference being that
we assume a generic prior on the variables rather than a spherical constraint.
As we will see, this model is closely related to the symmetric model of Sec.~\ref{subsubsec:sym_generic_prior}. 
Let $M,N \geq 1$. We consider two types of variables: a vector $\bx \in \bbR^N$ and a vector $\bh \in \bbR^M$. 
These two fields are assumed to follow prior distributions under which they are independent and the distribution of all their components decouples.
 For instance, the prior $P_X$ on $\bx$ can be written as $P_X(\mathrm{d}\bx) = \prod_i P_{i}(\mathrm{d}x_i)$. The two fields $\bh,\bx$ interact via the following Hamiltonian:
\begin{align}
	H_F(\bh,\bx) &= - \sum_{\mu,i}  F_{\mu i} h_\mu x_i.	
\end{align}
As in Sec.~\ref{subsec:bipartite_spherical_model}, greek indices $\mu,\nu$ will always run from $1$ to $M$ while latin indices $i,j$ run from $1$ to $N$.
We assume that the coupling matrix $F \in \bbR^{M \times N}$ satisfies the rotation invariance property described in Model~\ref{model:nsym_rot_inv}.
For a fixed $\beta \geq 0$ and a realization of $F$ we define the 
Gibbs-Boltzmann distribution and the partition function:
\begin{align}
	P_{\beta,F}(\bh,\bx) &\equiv \frac{1}{Z_{\beta,F}} \prod_\mu P_\mu(\mathrm{d}h_\mu)\, \prod_i P_i(\mathrm{d}x_i) \, \exp \left\{\beta \sum_{\mu,i} F_{\mu i} h_\mu x_i\right\}, \\
	Z_{\beta,F} &= \int \prod_\mu P_\mu(\mathrm{d}h_\mu) \, \prod_i P_i(\mathrm{d}x_i) \, \exp \left\{\beta \sum_{\mu,i} F_{\mu i} h_\mu x_i\right\}.
\end{align}
As in the previous section we will compute the large $N$ limit of the free entropy $\Phi_{F}(\beta) \equiv \frac{1}{N} \log Z_{\beta,F}$. 
We look at this problem in the thermodynamic limit, with $N,M \to \infty$ and a fixed ratio $M/N \to \alpha > 0$.
We constraint the first and second moments of $\{x_i\}$ and $\{h_\mu\}$ under the Gibbs measure to be $\braket{x_i} = m^x_i$, $\braket{h_\mu} = m^h_\mu$, 
$\braket{(x_i-m_i^x)^2} = v^x_i$, $\braket{(h_\mu-m_\mu^h)^2} = v^h_\mu$. The Lagrange multipliers introduced to enforce these conditions will be denoted 
$\lambda^x_i$, $\lambda^h_\mu$ for the first moments and $\gamma^x_\mu$, $\gamma^h_\mu$ for the second moments.
At order $0$, we obtain:
\begin{align}\label{eq:phi_order0_bipartite_prior}
	\Phi_{F}(\beta=0) &= \frac{1}{N} \sum_i \left[\lambda^x_i m^x_i + \frac{1}{2} \gamma^x_i (v^x_i + (m^x_i)^2)\right] + \frac{1}{N} \sum_\mu \left[\lambda^h_\mu m^h_\mu + \frac{1}{2} \gamma^h_\mu (v^h_\mu + (m^h_\mu)^2)\right] \nonumber  \\
	&+ \frac{1}{N} \log \left[\int \prod_\mu P_\mu(\mathrm{d}h_\mu) \, \prod_i P_i(\mathrm{d}x_i) \, e^{-\sum_i \left[\lambda^x_i x_i + \frac{1}{2} \gamma^x_i x_i^2\right] - \sum_\mu \left[\lambda^h_\mu h_\mu + \frac{1}{2} \gamma^h_\mu h_\mu^2\right]} \right].
\end{align}
The calculations at order $1$ and $2$ are very similar to the ones of the spherical model of Sec.~\ref{subsubsec:plefka_bipartite_spherical}. One can also refer 
to the symmetric case of Sec.~\ref{subsubsec:sym_generic_prior}.
 We obtain the first four orders:
\begin{align}
	\left(\frac{\partial \Phi_{F}}{\partial \beta}\right)_{\beta=0} &= \frac{1}{N} \sum_{\mu,i} F_{\mu i} m^h_\mu m^x_i,\\
	\frac{1}{2} \left(\frac{\partial^2 \Phi_{F}}{\partial \beta^2}\right)_{\beta=0} &= \frac{1}{2N} \sum_{\mu,i} F_{\mu i}^2 v^h_\mu v^x_i,\\
	\frac{1}{3!} \left(\frac{\partial^3 \Phi_{F}}{\partial \beta^3}\right)_{\beta=0} &= \frac{1}{6N} \sum_{\mu,i} F_{\mu i}^3 \kappa^{(3,h)}_\mu \kappa^{(3,x)}_i,\\
	\label{eq:order4_bipartite_generic}
	\frac{1}{4!} \left(\frac{\partial^4 \Phi_{F}}{\partial \beta^4}\right)_{\beta=0} &= \frac{1}{8N} \sum_{\mu_1 \neq \mu_2} \sum_{i_1 \neq i_2} F_{\mu_1 i_1} F_{\mu_1 i_2}F_{\mu_2 i_2}F_{\mu_2 i_1} v^h_{\mu_1} v^h_{\mu_2}v^x_{i_1}v^x_{i_2} + \smallO_N(1).
\end{align}
Recall that $\kappa^{(p,x)}_i$ is the $p$-th order cumulant of $x_i$ at $\beta=0$, and we define $\kappa^{(p,h)}_\mu$ in the same way for $h_\mu$.
\paragraph{On higher orders} The discussion on the higher orders in perturbation is very similar to the one we made in Sec.~\ref{subsubsec:sym_generic_prior}.
In the end, we only retain the simple cycles made of matrix elements $\{F_{\mu i}\}$ in our Plefka expansion. As we explain in more details in Sec.~\ref{sec:diagrammatics}, and 
in particular in Sec.~\ref{subsec:diagrammatics_bipartite_models} for the bipartite case, we make two crucial statements to obtain this result:
\begin{itemize}
	\item Let us forget for the moment about the factors of variances or higher-order moments of the distributions of $x_i,h_\mu$ at $\beta=0$. 
		We can then study all the possible terms appearing in the Plefka expansion at every order as \emph{diagrams} made of matrix elements $\{F_{\mu i}\}$, and show that 
		the only non-vanishing diagrams are the \emph{simple cycles}. This is shown in more details in Sec.~\ref{subsec:diagrammatics_bipartite_models}.
	\item We assume that the factors arising from the variances $v^x_i,v^h_\mu$ or higher-order cumulants of the variables $x_i$ and $h^\mu$ do not change significantly the scaling of the diagrams, 
		that is we can still only retain the simple cycles in the thermodynamic limit. More details are given in Sec.~\ref{subsubsec:higher_moments}.
\end{itemize}
For instance, at order $3$ these statements yield $\partial^3_\beta \Phi_{N,F} = \smallO_N(1)$ since one can not construct a simple cycle 
for bipartite models at order $3$. This also explains the $\smallO_N(1)$ term in the order $4$, see eq.~(\ref{eq:order4_bipartite_generic}).
In the end, we obtain the following value of the free entropy at leading order in $N$:
\begin{align}\label{eq:plefka_bipartite_prior}
\Phi_{F}(\beta) &= \Phi_{F}(0) + \frac{\beta}{N} \sum_{\mu,i} F_{\mu i} m^h_\mu m^x_i + \frac{1}{N} \sum_{p=1}^\infty \frac{\beta^{2p} }{2p} \sum_{\substack{\mu_1,\cdots,\mu_p \\ i_1,\cdots,i_p}}  F_{\mu_1 i_1} F_{\mu_1 i_2}  \cdots F_{\mu_p i_p} F_{\mu_p i_1} \prod_{\alpha=1}^p v^h_{\mu_\alpha}v^x_{i_\alpha}.
\end{align}
In the summation all indices $\mu_1,\cdots,\mu_p$ are pairwise distinct, and so are $i_1,\cdots,i_p$.

	\paragraph{Homogeneous variances} 
Let us assume that the maximum of the free entropy of eq.~(\ref{eq:plefka_bipartite_prior})
	is attained for variables $\{v^h_\mu,v^x_i\}$ such that $v^h_\mu = v^h$ and $v^x_i = v^x$.
	As in the symmetric case this hypothesis can be justified for many models that we will analyze later on.
	Using eq.~(\ref{eq:correspondance_bipartite}) the free entropy of eq.~(\ref{eq:plefka_bipartite_prior}) can be resummed:
	\begin{align}\label{eq:resummed_bipartite_prior}
		\Phi_{F}(\beta) = &\Phi_{F}(0) + \frac{\beta}{N} \sum_{\mu,i} F_{\mu i} m^h_\mu m^x_i - \frac{1+\log v^x}{2} - \alpha \frac{1+\log v^h}{2} \nonumber \\
		&+ \frac{1}{2} \inf_{\zeta^x,\zeta^h}\left[\alpha \zeta^h v^h + \zeta^x v^x -(\alpha-1) \log \zeta^h - \int \rho_D(\mathrm{d}\lambda) \log(\zeta^x \zeta^h-\beta^2 \lambda)\right]\ ,
	\end{align}
where $\rho_D$ is the spectral distribution of $F^TF$.
	At $\beta=0$, $\Phi_{F}(0)$ is given by eq.~(\ref{eq:phi_order0_bipartite_prior}). 
	As in the symmetric model of Sec.~\ref{subsec:plefka_sym_models} we were able to perform this expansion and its resummation by 
	applying our results on the spherical models of Sec.~\ref{sec:spherical_bipartite}. 
	The diagrammatic study performed in Sec.~\ref{subsec:diagrammatics_bipartite_models}, which is a generalization of the diagrammatic for symmetric matrices, 
	plays a decisive role in this analysis.

	\paragraph{A remark on Restricted Boltzmann machines}

	In the case of an i.i.d.\ matrix $F$ (see Sec.~\ref{subsec:iid_matrices} for a remark on how to apply our results to this class of matrices), we 
	recognize in particular in eq.~\eqref{eq:plefka_bipartite_prior} the result obtained in eq.~(36) of \cite{tramel2018deterministic} for Restricted Boltzmann Machines (RBMs).
	For generic rotationally invariant $F$, the fixed point equations corresponding to the extremization of $\Phi$ in eq.~\eqref{eq:resummed_bipartite_prior}
	also correspond to the fixed point of Algorithm~3 of \cite{tramel2018deterministic} (which was described there as ``adaTAP inference algorithm'').
	This generic property of the fixed points of the Plefka-expanded free entropy will be investigated in Sec.~\ref{sec:applications_algorithms}.

	\subsubsection{Generalized Linear Models with correlated matrices}\label{subsubsec:plefka_glm}
	
	Generalized Linear Models (GLMs) \cite{nelder1972generalized,mccullagh2018generalized} arise as a generalization of the Compressed Sensing problem (see below).
	They are of primary importance in a very wide variety of scientific and engineering fields, such as phase retrieval in optics \cite{fienup1982phase}
	and classification problems in statistics. 
	GLMs can also be thought of as the building blocks of fully connected neural networks \cite{lecun2015deep}.
	Let us now define more precisely the model we will study.
	Consider $M,N \geq 1$ both going to infinity with a fixed ratio $M/N \to \alpha > 0$. We are given a (random)
	measurement matrix $F \in \bbR^{M \times N}$ which comes from the ensemble of Model~\ref{model:nsym_rot_inv}. 
	Given $F$, data samples $\{Y_\mu\}$ are generated as:
	\begin{align}\label{eq:def_glm}
	\forall \mu \in \{1,\cdots,M\}, \qquad Y_\mu \sim P_{\rm out} \left( \cdot \, \Big|\,\left(F \bX\right)_\mu \right),
	\end{align}
	in which $\bX \in \bbR^N$ is the vector we try to recover from the observation of $\{Y_\mu\}_{\mu=1}^M$, and 
	$P_{\rm out}$ is a fixed probabilistic channel. Recall that greek indices $\mu,\nu$ will always run between $1$ and $M$ and latin indices $i,j$ between $1$ and $N$.
	The vector $\bX \in \mathbb{R}^N$ is assumed to be drawn with i.i.d.\ coordinates $\{X_i\}_{i=1}^N$ according to a prior $P_X$ with zero mean and variance $\rho > 0$. 

	\paragraph{Compressed Sensing, the Gaussian channel case}

	Compressed Sensing \cite{donoho2006compressed} arises as a particular case of channel distribution in eq.~\eqref{eq:def_glm}, 
	in which the channel is taken to be a Gaussian distribution with zero mean and variance $\Delta$. Equivalently, it can be formulated as:
	\begin{align}\label{eq:def_cs}
		Y_\mu &= \sum_i F_{\mu i} X_i + \sqrt{\Delta} z_\mu.	
	\end{align}
	Our aim is to infer the vector $\bX$ from these observations.
	In this equation we modeled the noise by a standard Gaussian variable $z_\mu$, the strength of the noise being $\Delta > 0$. 
	In \cite{krzakala2012probabilistic} the authors have considered a subclass of matrices $F$, namely i.i.d.\ matrices. 
	We follow here the same probabilistic inference approach, as we aim to study this problem by sampling from the following distribution: 
	\begin{align}\label{eq:posterior_cs}
		P(\bx|\bY) &= \frac{1}{Z_{\bY,F}} P_X(\bx) \exp \left\{-\frac{1}{2 \Delta} \sum_\mu \left(Y_\mu - \sum_i F_{\mu i} \, x_i\right)^2\right\}.
	\end{align} 
	As the parameters $(P_X,\Delta)$ of the signal are known, we could use them in our model, a setting which is known as the 
	\emph{Bayes-optimal} setting. Using the matrix $J = -F^\intercal F$ (which follows Model~\ref{model:sym_rot_inv}), we can rewrite the posterior distribution, up to a normalization, as:
	\begin{align}\label{eq:posterior_cs_2}
		P(\bx|\bY) &= \frac{1}{Z_{\bY,F}} P_X(\bx) \exp \left\{\frac{1}{2 \Delta} \sum_{i,j} J_{ij} \, x_i x_j + \frac{1}{\Delta} \sum_{\mu,i} F_{\mu i} Y_\mu x_i\right\}.
	\end{align} 
	Defining $\beta \equiv \Delta^{-1}$ it becomes clear that this can be written as a Gibbs-Boltzmann distribution of the model we studied in Sec.~\ref{subsubsec:sym_generic_prior}, 
	with $J = -F^\intercal F$ and $h_i = -\sum_\mu F_{\mu i} Y_\mu$.
	Assuming that the variance variables at the maximum of the free entropy are homogeneous ($v_i = v$) we can use eq.~(\ref{eq:resummed_phi_symmetric}) to directly obtain 
	the free entropy of this problem as a function of $\mathcal{R}_J$, the $\mathcal{R}$-transform of the asymptotic spectrum of the $J$ matrix:
	\begin{align}\label{eq:phi_cs}
		\Phi_{\bY,F}(\beta) &= \Phi_{\bY,F}(0) - \frac{\beta}{2 N}	\sum_{i,j} (F^\intercal F)_{ij} m_i m_j + \frac{\beta}{N} \sum_{\mu,i} F_{\mu i} Y_\mu m_i + \frac{1}{2} \int_0^{\beta v} \mathcal{R}_{J}(u) \mathrm{d}u.
	\end{align}
	We postpone the analysis of the corresponding fixed point equations to our algorithmic discussion in Sec.~\ref{subsec:gamp} and Sec.~\ref{subsec:vamp}.

	\paragraph{Generic channel distributions}

	We now turn to generic $P_{\rm out}$ distributions.
	We assume that both $P_{\rm out}$ and $P_X$ are \emph{known} (this is the Bayes-optimal setting, known in statistical physics as the \emph{Nishimori line}), so that we can use them
	in the posterior distribution:
	\begin{align}\label{eq:posterior_glm}
		P(\bx|\bY) &= \frac{1}{Z(\bY,F)} \prod_{i} P_X(x_i) \, \prod_\mu P_{\rm out}\left[Y_{\mu}| (F\bx)_\mu\right],
	\end{align}
	from which we will sample to obtain an estimate of $\bX$.
	While in the compressed sensing setting $\beta = \Delta^{-1}$ played naturally the role of an inverse temperature, 
	in the general setting of eq.~(\ref{eq:def_glm}) there is \emph{a priori} no way to perform a Plefka expansion.
	As it turns out, there is a way to introduce an auxiliary parameter in terms of which we will perform the expansion, similarly to what is done in~\cite{altieri2016jamming,altieri2018higher}.
	Introducing the usual Lagrange parameters
	to fix the mean and variance of $\{x_i\}$, we obtain the free entropy:
	\begin{align}
	\Phi_{\bY,F} &\equiv \frac{1}{N} \sum_{i} \lambda_i m_i + \frac{1}{2 N} \sum_{i} \gamma_i \left(v_i + m_i^2 \right) + \frac{1}{N}\log \left[\int_{\mathbb{R}^N} \mathrm{d} \bx \, e^{-S\left[\bx\right]} \right],
	\end{align}
	in which we introduced an \emph{action} $S[\bx]$: 
	\begin{align}
	S[\bx] &\equiv \sum_{i} \lambda_i x_i + \frac{1}{2} \sum_{i} \gamma_i x_i^2 - \sum_{\mu} \log P_{\rm out}\left(Y_\mu \Big| \sum_{i} F_{\mu i} x_i\right) - \sum_{i} \log P_X(x_i).
	\end{align}
	As before $\{\lambda_i,\gamma_i\}$ are Lagrange parameters used to enforce the condition $\braket{x_i} = m_i$ and $\braket{x_i}^2 = v_i + m_i^2$.
	Introducing an auxiliary field $\bh \equiv \bF \bx \in \bbR^M$, and using the Fourier representation of the Dirac distribution, we reach: 
	\begin{align}
	\Phi_{\bY,F} &= - \alpha \log 2\pi +\frac{1}{N} \sum_{i} \lambda_i m_i + \frac{1}{2N} \sum_{i} \gamma_i \left(v_i + m_i^2 \right) + \frac{1}{N}\log \left[\int\mathrm{d} \bx \, \mathrm{d} \bh \, \mathrm{d} \tilde{\bh}  \, e^{-S_{\rm eff}\left[\bx,\bh,\widetilde \bh\right]} \right],
	\end{align}
	with a new effective action $S_{\rm eff}$:
	\begin{align}
	S_{\rm eff}\left[\bx,\bh,\tilde{\bh}\right] \equiv \sum_i \left[\lambda_i x_i + \frac{1}{2} \gamma_i x_i^2\right] & -\sum_i \log P_X(x_i) - \sum_{\mu}\left[\log P_{\rm out}(Y_\mu|h_\mu) + h_\mu (i\tilde{h}_\mu) \right]  \nonumber \\
	\label{eq:Seff_CS} &+  \sum_{\mu,i}(i\tilde{h}_\mu) F_{\mu i} x_i.
	\end{align}
	The key idea is to treat $\bx$ and $i\tilde{\bh}$ as two independent non-Gaussian fields that interact via the last (quadratic) term of eq.~(\ref{eq:Seff_CS}) and to perform 
	a Plefka expansion in terms of this \emph{effective} Hamiltonian, which is exactly the bipartite Hamiltonian of the general model of Sec.~\ref{subsubsec:bipartite_generic_prior}.
	This mapping of a generalized linear model to a bipartite model using Fourier transformation has already been 
	successfully applied in the context of the replica method, see \cite{kabashima2008inference, kabashima2008integral}.
	We will call $\eta$ the ``inverse temperature'', that is 
	in eq.~(\ref{eq:Seff_CS}) we substitute:
	\begin{align}
		\sum_{\mu,i}F_{\mu i} \, x_i \, (i\tilde{h}_\mu) &\to  \eta\sum_{\mu,i} F_{\mu i} \, x_i \, (i\tilde{h}_\mu), 
	\end{align}
	and at the end of the expansion we will set $\eta=1$.
	Similarly as for the field $\bx$ we will fix the first and second moments of the field $i \tilde{\bh}$ as $\braket{i \tilde{h}_\mu}_\eta = f_\mu$ and 
	$\braket{(i \tilde{h}_\mu)^2}_\eta = -r_\mu + f_\mu^2$, conditions that will be enforced by new Lagrange parameters $\{\omega_\mu,b_\mu\}$.
	Although a bit tedious, this is straightforward, and we obtain a free entropy in which we will perform a low-$\eta$ expansion:
	\begin{align}
	\Phi_{\bY,F}(\eta) = - \alpha \log 2\pi +& \frac{1}{N}\sum_{i} \lambda_i m_i + \frac{1}{2N} \sum_{i} \gamma_i \left(v_i + m_i^2 \right) + \frac{1}{N} \sum_{\mu} \omega_\mu f_\mu - \frac{1}{2N} \sum_{\mu} b_\mu \left(-r_\mu + f_\mu^2 \right) \nonumber \\
	& + \frac{1}{N}\log \left[\int \mathrm{d} \bx \, \mathrm{d} \bh \, \mathrm{d} \tilde{\bh} \, e^{-S_{\rm eff}\left[\bx,\bh,\tilde{\bh}\right]} \right]. \label{eq:Phi_eta_CS}
	\end{align}
	The effective action and Hamiltonian are expressed as follows:
	\begin{align}
	S_{\rm eff}\left[\bx,\bh,\tilde{\bh}\right] &\equiv \sum_{i} \lambda_i x_i + \frac{1}{2} \sum_{i} \gamma_i x_i^2 + \sum_{\mu} \omega_\mu (i \tilde{h}_\mu) - \frac{1}{2} \sum_{\mu} b_\mu (i\tilde{h}_\mu)^2 \nonumber \\
	\label{eq:Seff_eta_CS}&-\sum_i \log P_X(x_i) -\sum_{\mu} \log P_{\rm out}\left(Y_\mu | h_\mu \right) -  \sum_{\mu} h_\mu (i\tilde{h}_\mu) + \eta  \,H_{\rm eff}\left[\bx,\tilde{\bh}\right], \\
	\label{eq:Heff_CS_eta}
	H_{\rm eff}[\bx, \tilde{\bh}] &\equiv \sum_{\mu,i} F_{\mu i} \, x_i \, (i \tilde{h}_\mu).
	\end{align}
	From these equations it is clear that:
	\begin{itemize}
		\item[$(i)$] The priors on the variables $\{x_i\}$ and $\{(i\tilde{h}_\mu)\}$ decouple. The prior on $x_i$ is $P_X(x_i)$, while the prior distribution on
		$(i \tilde{h})_\mu$ is related to the Fourier transform of the channel distribution:
		\begin{align}\label{eq:P_mu}
			P_{\tilde{H}}(i\tilde{h}_\mu) &= \int \frac{\mathrm{d}h}{2 \pi} e^{i h \tilde{h}_\mu} P_{\rm out}(Y_\mu | h).
		\end{align}
		\item[$(ii)$] The interaction Hamiltonian of eq.~\eqref{eq:Heff_CS_eta} is a bipartite Hamiltonian of the type of the model we studied in Sec.~\ref{subsubsec:bipartite_generic_prior}, in 
		terms of the variables $x_i$ and $(i \tilde{h}_\mu)$.
	\end{itemize}
	Points $(i)$ and $(ii)$ allow us to use the general results of Sec.~\ref{subsubsec:bipartite_generic_prior}.
	We can therefore directly conjecture the final form of the free entropy we were seeking:
	\begin{align}\label{eq:phi_glm_fullplefka}
		\Phi_{\bY,F}(\eta = 1) &= \Phi_{\bY,F}(\eta=0) - \frac{1}{N}\sum_{\mu i} F_{\mu i} f_\mu m_i \\
		&+ \frac{1}{N} \sum_{p=1}^\infty \frac{(-1)^p}{2p} \hspace{-0.5cm}\sum_{\substack{\mu_1,\cdots,\mu_p \\ \text{pairwise distincts}}}\hspace{-0.cm} \sum_{\substack{i_1,\cdots,i_p \\ \text{pairwise distincts}}}\hspace{-0.5cm} F_{\mu_1 i_1} F_{\mu_2 i_1} \cdots F_{\mu_p i_p} F_{\mu_1 i_p} \prod_{\alpha=1}^p r_{\mu_\alpha} v_{i_\alpha} + \smallO_N(1). \nonumber
	\end{align}

\paragraph{Homogeneous variances} Once again we can assume that the maximum of the free entropy written in eq.~\eqref{eq:phi_glm_fullplefka} will be attained
for variance variables $\{v_i,r_\mu\}$ that are \emph{homogeneous}: they satisfy $v_i = v$ and $r_\mu = r$.
Using the resummation of eq.~\eqref{eq:resummed_bipartite_prior} this leads to a simplified expression for eq.~\eqref{eq:phi_glm_fullplefka}:
\begin{align}\label{eq:phi_glm_resummed}
	\Phi_{\bY,F}(\eta = 1) &= \Phi_{\bY,F}(\eta=0) - \frac{1}{N}\sum_{\mu i} F_{\mu i} f_\mu m_i - \alpha \, \frac{1 + \log r}{2} - \frac{1+ \log v}{2}\\
	&+\frac{1}{2} \inf_{\zeta,\zeta'} \left[\alpha \zeta r + \zeta' v - (\alpha-1) \log \zeta -\frac{1}{N} \log \det \left[\zeta \zeta' {\rm I}_N + F^\intercal F\right]\right] + \smallO_N(1). \nonumber
\end{align}
We will study the fixed point equations corresponding to the free entropies of eq.~\eqref{eq:phi_glm_fullplefka}
and eq.~\eqref{eq:phi_glm_resummed} in Sec.~\ref{subsec:gamp} and Sec.~\ref{subsec:gvamp}.
\section{Consequences for iterative algorithms}\label{sec:applications_algorithms}

In the previous section we showed how to use the Plefka expansion to derive the 
single-graph free entropy of a large class of systems with pairwise interactions. 
One then needs to maximize this free entropy, which yields fixed point equations.
Iterating these fixed point equations is in itself a challenge since different choices for the 
iteration scheme can lead to drastically different convergence properties.
In the context of the Plefka expansion, or equivalently in the adaTAP or EC approximation, 
several iterations schemes for the TAP equations have been studied, see for instance \cite{opper2016theory,ccakmak2016self}.

On a parallel point of view, message-passing algorithms have been extensively studied in the 
statistical physics literature. In particular the belief-propagation equations \cite{mezard2009information} 
can be shown in the large $N$ limit to reduce to a simpler algorithm called Approximate Message Passing (AMP) \cite{donoho2009message}, 
and derived initially for i.i.d.\ coupling matrices. It is well understood that for these matrices the stationary limit of the AMP equations
is directly related to the fixed point equations of the Plefka free entropy (stopping here at order $2$ in the couplings). Extending these algorithms to 
correlated matrices has been the subject of extensive studies \cite{cakmak2014s,ma2017orthogonal}.
The  generic (and appealing by its simplicity) iteration scheme called Vector Approximate Message Passing (VAMP) \cite{rangan2017vector}
has proven to be very successful both numerically and in its theoretical justification in the case of rotationally-invariant sensing matrices.
 It has then been generalized to the broader class of generalized linear models with generic output channels \cite{schniter2016vector}.

 In Sec.~\ref{subsec:gamp} we describe the connection between AMP equations and the Plefka expansion in the context of generalized linear models with i.i.d.\ matrices,
 retrieving the GAMP algorithm \cite{rangan2011generalized} and the analysis of \cite{krzakala2012probabilistic}. 
 In Sec.~\ref{subsec:vamp} we relate the VAMP 
 algorithm to the Expectation-Consistency
 equations by showing that the stationary limit of the
 algorithm yields the TAP equations. 
We note that in the naïve Plefka expansion performed in terms of
$\Delta^{-1}$ for the compressed sensing  problem there is no clear
way to iterate the TAP equations to find back an asymptotically exact algorithm.
 Finally, 
 in Sec.~\ref{subsec:gvamp} we extend this analysis to Generalized Linear Models with correlated matrices and the G-VAMP algorithm.
 In contrast with the naïve $\Delta^{-1}$-expansion in compressed sensing, the mapping to a bipartite model that we performed in Sec.~\ref{subsubsec:plefka_glm} allows to retrieve the stationary limit of the more 
 general G-VAMP algorithm as our TAP equations.

\subsection{Generalized Approximate Message Passing for a GLM with i.i.d.\ matrices}\label{subsec:gamp}

We consider in this subsection the Generalized Linear Model (GLM) of Sec.~\ref{subsubsec:plefka_glm}.
We will concentrate on a particular subclass of sensing matrices, namely matrices $F \in \bbR^{M\times N}$ 
that are generated with i.i.d.\ centered standard Gaussian matrix elements $\{F_{\mu i}\}$. Note that the remarks 
of Sec.~\ref{subsec:iid_matrices} (coherently with the analysis of \cite{rangan2011generalized,barbier2019optimal}) show  
that the particular distribution of the elements does not need to be Gaussian, as long as the $\{F_{\mu i}\}$ are distributed independently 
and identically.
Generalized linear estimation with such i.i.d.\ matrices $F$ and generic prior and channel distributions has received a lot of attention recently.
In particular Generalized Approximate Message Passing (GAMP), an algorithm first developed in \cite{rangan2011generalized}, has been shown to be optimal
among all polynomial-time algorithms for this problem, see \cite{barbier2019optimal}.
A description of the algorithm in our case can be found in eqs.~(171)-(177) of \cite{zdeborova2016statistical}. We state here the iterative GAMP equations with our notations:
\begin{align}\label{eq:gamp_CS}
\begin{cases}
b^t &= \frac{1}{N} \sum_i v^{t-1}_i, \\
\omega^t_\mu &= \sum_i F_{\mu i} m^{t-1}_i + f^{t-1}_\mu b^t, \\
f^t_\mu &= - g_{\rm out}(Y_\mu,\omega^t_\mu,b^t),\\
\gamma^t &= -\frac{1}{N}\sum_\mu \partial_\omega g_{\rm out}(Y_\mu,\omega^t_\mu,b^t),  \\
\lambda^t_i &= -\gamma^t m^{t-1}_i + \sum_\mu F_{\mu i} f^t_\mu, \\
m^t_i &= \mathbb{E}_{P_X(\lambda^t_i,\gamma^t)} \left[x\right], \\
v^t_i &= \mathbb{E}_{P_X(\lambda^t_i,\gamma^t)} \left[(x-m^t_i)^2\right].
\end{cases}
\end{align}
In these equations $P_X(\lambda_i,\gamma_i)$ is the probability measure with density: 
\begin{align}\label{eq:def_PX}
	P_X(\lambda_i,\gamma_i)(x) \propto	P_X(x) e^{- \frac{1}{2} \gamma_i x^2-\lambda_i x },
\end{align}
and $g_{\rm out}$ is defined from the channel distribution $P_{\rm out}$ as: 
\begin{align}\label{eq:def_gout}
g_{\rm out}(y,\omega,b) &\equiv \frac{1}{b} \frac{\int \mathrm{d}z \, P_{\rm out}(y|z) \, (z-\omega)\, e^{-\frac{(z-\omega)^2}{2b}}}{\int \mathrm{d}z \, P_{\rm out}(y|z) \, e^{-\frac{(z-\omega)^2}{2b}}}.
\end{align}
We now turn to the TAP equations that we can derive from the extremization of the Plefka-expanded free entropy of eq.~\eqref{eq:phi_glm_fullplefka}.
Note that as $F$ is an i.i.d.\ matrix, all the terms with $p \geq 2$ in eq.~\eqref{eq:phi_glm_fullplefka} will be negligible.
More discussion on this can be found in Sec.~\ref{subsec:iid_matrices}. We obtain:
\begin{align}\label{eq:phi_glm_iid}
	\Phi_{F}(\eta = 1) &= \Phi_{F}(\eta = 0) - \frac{1}{N}\sum_{\mu i} F_{\mu i} f_\mu m_i - \frac{1}{2N} \sum_{\mu,i} F_{\mu i}^2 r_{\mu} v_{i} + \smallO_N(1).
\end{align}
Recall that $\Phi_{F}(\eta=0)$ is given by eq.~\eqref{eq:phi_order0_bipartite_prior}.
Extremizing $\Phi_{F}(\eta = 0)$ over the Lagrange parameters yields the moments conditions that we wish to enforce: 
\begin{align}\label{eq:eta0_glm_fixedpoint}
	\begin{cases}
		m_i &= \mathbb{E}_{P_X(\lambda_i,\gamma_i)} \left[x\right],\\
		v_i &= \mathbb{E}_{P_X(\lambda_i,\gamma_i)} \left[(x-m_i)^2\right],\\
		f_\mu &= -g_{\rm out} (y_\mu,\omega_\mu,b_\mu), \\
		r_\mu &= -\partial_\omega g_{\rm out} (y_\mu,\omega_\mu,b_\mu).
	\end{cases}
\end{align}
On the other hand, the maximization of eq.~\eqref{eq:phi_glm_iid} with respect to the physical parameters $\{m_i,v_i,f_\mu,r_\mu\}$ 
leads to four additional equations: 
\begin{align}\label{eq:fixedpoint_glm_iid}
	\begin{cases}
		\lambda_i &= - \gamma_i m_i + \sum_\mu F_{\mu i} f_\mu,\\
		\omega_\mu &= f_\mu b_\mu + \sum_i F_{\mu i}m _i, \\
		\gamma_i &= \frac{1}{N} \sum_\mu r_\mu, \\
		b_\mu &= \frac{1}{N} \sum_i v_i.
	\end{cases}
\end{align}
Combining eq.~\eqref{eq:eta0_glm_fixedpoint} and eq.~\eqref{eq:fixedpoint_glm_iid}
we see easily that the equations that one has to solve are exactly the ones
of the GAMP algorithm of eq.~\eqref{eq:gamp_CS} (without time indices).

\paragraph{The Gaussian channel case} In the case of an additive gaussian channel with variance $\Delta$ the problem reduces
to the compressed sensing studied in Sec.~\ref{subsubsec:plefka_glm} and Sec.~\ref{subsec:vamp}, with a Gaussian i.i.d.\ sensing matrix.
In this case, the function $g_{\rm out}$ of eq.~\eqref{eq:def_gout} is computable explicitly, 
and leads to $f_{\mu} = (\omega_{\mu}-y_{\mu})/(\Delta+b_{\mu})$ and $r_{\mu} = (\Delta+b_{\mu})^{-1}$.
In this particular case one recovers the AMP algorithm of \cite{krzakala2012probabilistic}, which is also compatible with the VAMP algorithm of Sec.~\ref{subsec:vamp}.

\subsection{Vector Approximate Message Passing (VAMP) in Compressed Sensing}\label{subsec:vamp}
\subsubsection{The TAP equations in Compressed sensing}\label{subsubsec:fixed_point_cs}

We analyze here the fixed point equations for the Compressed Sensing (CS) problem, see Sec.~\ref{subsubsec:plefka_glm} for 
the corresponding free entropy derivation. Our starting point 
is the Plefka-expanded free entropy written in eq.~\eqref{eq:phi_cs}. 
The extremization over the Lagrange parameters $\{\lambda_i,\gamma_i\}$ (which are considered at $\beta = 0$) yields:
	\begin{subnumcases}{\label{eq:fixed_point_cs_1}}
	m_i = \frac{\int \mathrm{d}x P_X(x) \, x \, e^{-\frac{1}{2}\gamma_i x^2 - \lambda_i x}}{\int \mathrm{d}x \, P_X(x) \, e^{-\frac{1}{2}\gamma_i x^2 - \lambda_i x}} \overset{(a)}{\equiv} F_m(\lambda_i,\gamma_i),  & \\
	v_i = \frac{\int \mathrm{d}x \, P_X(x) \, (x-m_i)^2 \, e^{-\frac{1}{2}\gamma_i x^2 - \lambda_i x}}{\int \mathrm{d}x \, P_X(x) \, e^{-\frac{1}{2}\gamma_i x^2 - \lambda_i x}}  \overset{(b)}{\equiv} F_v(\lambda_i,\gamma_i), &
	\end{subnumcases}
where $(a)$ and $(b)$ respectively define the functions $F_m$ and $F_v$. 
Maximizing the free entropy with respect to the physical parameters $\{m_i,v\}$ results in the following equations (recall that $\beta = \Delta^{-1}$):
	\begin{subnumcases}{\label{eq:fixed_point_cs_2}}
	\gamma = - \frac{1}{\Delta} \mathcal{R}_{-F^\intercal F}\left(\frac{v}{\Delta}\right) =  \mathcal{R}_{F^\intercal F/ \Delta}\left(- v\right), &\\	
	\lambda_i = - \gamma m_i + \frac{1}{\Delta} \sum_j (F^\intercal F)_{ij} m_j - \frac{1}{\Delta}\sum_\mu F_{\mu i} Y_\mu.&
	\end{subnumcases}
	Eq.~\eqref{eq:fixed_point_cs_1} and eq.~\eqref{eq:fixed_point_cs_2} define a set of fixed point equations that one has to solve 
	in order to retrieve the maximum of the free entropy of eq.~\eqref{eq:phi_cs}. 

\subsubsection{TAP equations and the fixed point of the VAMP algorithm}

\paragraph{A remark on i.i.d.\ matrices}
We start with a remark on the case of an i.i.d.\ matrix $F$. 
Remarkably, eqs.~\eqref{eq:fixed_point_cs_1} and
\eqref{eq:fixed_point_cs_2} are compatible with the fixed points of
AMP, see eqs.~(22) and (23) in~\cite{krzakala2014variational} with $R_i = - \lambda_i/\gamma$ and $\Sigma^{-2}=\gamma$, since in
this case $\mathcal{R}_{F^\intercal F/\Delta}(- v) = \alpha/(\Delta+ v)$, see for instance \cite{tulino2004random}. 
We now turn to the VAMP algorithm for a general rotationally invariant matrix $F$.
Applying the VAMP derivation of Sec.~\ref{subsubsec:derivation_vamp} to the Compressed Sensing problem of Sec.~\ref{subsubsec:plefka_glm}, 
the VAMP algorithm reads\footnote{Note that instead of fixing all the correlations $\braket{x_i x_j}$, we only fix the `diagonal' second moments $\braket{x_i^2}$.}:
\begin{align}
 (\bmm^{t}_{1})_i &= F_m((\blambda_J^t)_i,a_J^t), &  v^t_1 &= \frac{1}{N} \sum_i F_v((\blambda^t_J)_i,\gamma_J^t), \\
\label{EqA2CS}
  \blambda_0^{t} &= - \frac{\bmm^t_1}{v^t} - \blambda_J^t, &  \gamma_0^{t} &= \frac{1}{v^t} - \gamma^t_J, \\
\label{Eqa2CS}
 \bmm^{t}_2 &= - (\gamma_0^{t}+ F^\intercal F/\Delta)^{-1} (\blambda_0^{t}  - \frac{1}{\Delta} F^\intercal \bY), & v^{t}_2 &= \frac{1}{N} \text{Tr} \frac{1}{\gamma_0^{t}+ F^\intercal F/\Delta}, \\
\label{Eq_aJ_CS}
 \blambda_J^{t+1} &= - \frac{\bmm^{t}_2}{v^{t}_2} - \blambda_0^{t}, &  \gamma_J^{t+1} &= \frac{1}{v^{t}_2} - \gamma^{t}_0,
\end{align}
where $F_m$ and $F_v$ were defined in eq.~\eqref{eq:fixed_point_cs_1}.
Note that in Compressed Sensing the matrix $\gamma_0^{t} + F^\intercal F/\Delta$ has only strictly positive eigenvalues since $\gamma_0^t \geq 0$, 
so the previous iterative equations are always well defined.
 At the fixed point, we expect $\bmm_1=\bmm_2 = \bmm$ and $v_1=v_2 = v$.
In the stationary limit eq.~\eqref{Eqa2CS} yields:
\begin{equation}
v = \mathcal{S}_{F^\intercal F/\Delta}(-\gamma_0) \qquad  \Rightarrow \qquad \gamma_0 = - \mathcal{S}_{F^\intercal F/\Delta}^{-1}(v).
\end{equation}
From eq.~\eqref{Eq_aJ_CS}, one has
\begin{equation}
\gamma_J = \frac{1}{v} +  S_{F^\intercal F/\Delta}^{-1}(v)  =  R_{F^\intercal F/\Delta}(- v).
\end{equation}
And from eq.~\eqref{Eqa2CS}, we obtain
\begin{equation}
(\gamma_0 +F^\intercal F/\Delta ) \bmm = - \blambda_0 +  \frac{1}{\Delta}  F^\intercal \bY,
\end{equation}
and
\begin{equation}
\Big(\frac{1}{v} - \gamma_J + \frac{F^\intercal F}{\Delta} \Big) \bmm = \frac{\bmm}{v}  + \blambda_J +  \frac{1}{\Delta}  F^\intercal \bY \ ,
\end{equation}
which gives
\begin{equation}\label{Eq_B}
\blambda_J = \Big(- \gamma_J +  \frac{F^\intercal F}{\Delta} \Big) \bmm -  \frac{1}{\Delta} F^\intercal \bY = 
- \gamma_J \bmm - \frac{1}{\Delta} F^\intercal \Big( \bY - F \bmm \Big) \ .
\end{equation}
One now recognizes easily the fixed points obtained with the Plefka expansion in Sec.~\ref{subsubsec:fixed_point_cs}, namely eq.~\eqref{eq:fixed_point_cs_1} and eq.~\eqref{eq:fixed_point_cs_2},
with $\blambda_J =  \blambda$ and $\gamma_J=\gamma$.

\paragraph{A remark on iterating the TAP equations in the i.i.d.\ case}
Note that in the i.i.d.\ case (Sec.~\ref{subsec:gamp}), doing the Plefka expansion in terms of the $\eta$ parameter
after having mapped the GLM to a bipartite problem allows us not only to retrieve the fixed point of the GAMP algorithm 
(and even the G-VAMP for non-i.i.d.\ matrices, as we will see in Sec.~\ref{subsec:gvamp}), but there is a simple iterating scheme of the TAP equations that exactly yields the GAMP algorithm.
We insist that this is not true when making the correspondence of the VAMP algorithm 
with the Plefka expansion in $\Delta^{-1}$ for compressed sensing with an i.i.d.\ matrix.
This underlines one of the possible limitations of the EC, adaTAP and Plefka methods for these problems, as iterating the TAP equations 
with an algorithmic scheme that guarantees convergence is a very involved task, while the VAMP derivation provides 
an iteration scheme of the equations.

\subsection{Generalized Vector Approximate Message Passing (G-VAMP) for Generalized Linear Models}\label{subsec:gvamp}

 We focus in this section on Generalized Linear Models with a correlated matrix $F$ that satisfies rotation invariance (Model~\ref{model:nsym_rot_inv}).
 We first derive the TAP equations from the Plefka expansion we performed in Sec.~\ref{subsubsec:plefka_glm}, before stating
 the G-VAMP algorithm for this problem following \cite{schniter2016vector}. 
 We then analyze how the stationary limit of G-VAMP is equivalent to these TAP equations.

 \subsubsection{The TAP equations from the Plefka expansion}

 Recall that the Plefka-expanded free entropy was computed in Sec.~\ref{subsubsec:plefka_glm}. 
 Following the assumptions of the VAMP and G-VAMP algorithms \cite{schniter2016vector,rangan2017vector} we assume that the 
 variances $\{v_i,r_\mu\}$ are \emph{homogeneous}, that is $r_\mu = r$ and $v_i = v$. We can then use the resummed expression of the Plefka 
 free entropy expressed in eq.~\eqref{eq:phi_glm_resummed}.
 We first extremize this expression with respect to
 the Lagrange parameters $\{\lambda_i,\gamma_i,\omega_\mu,b_\mu\}$ and we obtain an equivalent expression to eq.~\eqref{eq:fixed_point_cs_1}.
 We reach more precisely:
\begin{align}\label{eq:fixed_point_glm_1}
	\begin{cases}
		m_i &= \mathbb{E}_{P_X(\lambda_i,\gamma)} \left[x\right],\\
		v_i &= \mathbb{E}_{P_X(\lambda_i,\gamma)} \left[(x-m_i)^2\right],\\
		f_\mu &= -g_{\rm out} (y_\mu,\omega_\mu,b), \\
		r &= -\frac{1}{M} \sum_\mu \partial_\omega g_{\rm out} (y_\mu,\omega_\mu,b).
	\end{cases}
\end{align}
 Recall the definitions of $P_X(\lambda,\gamma)$ and $g_{\rm out}(y,\omega,b)$ from eq.~\eqref{eq:def_PX}
 and eq.~\eqref{eq:def_gout}. The remaining equations are obtained by maximizing eq.~\eqref{eq:phi_glm_resummed}
 with respect to the physical parameters. We make use of the Jacobi formula for a symmetric positive definite matrix 
 $J \in {\cal S}_N^{++}$: $\frac{\partial}{\partial J_{ij}} \log \det J =  (J^{-1})_{ij}$. 
 We reach:
	\begin{subnumcases}{\label{eq:fixed_point_glm_2}}
		\lambda_i = - \gamma m_i + \sum_\mu F_{\mu i} f_\mu,& \label{eq:fixed_point_glm_lambda}\\
		\omega_\mu = f_\mu b + \sum_i F_{\mu i}m_i, & \label{eq:fixed_point_glm_omega}\\
		\zeta \mathcal{S}_{F^\intercal F}\left(-\zeta \zeta'\right) = v,& \label{eq:fixed_point_glm_zeta}\\
		\zeta' \mathcal{S}_{F^\intercal F}\left(-\zeta \zeta'\right) = \alpha r - \frac{\alpha-1}{\zeta} ,&\label{eq:fixed_point_glm_zetap}\\
		\gamma = \frac{1}{v} - \zeta',& \label{eq:fixed_point_glm_gamma}\\
		b = \frac{1}{r} - \zeta. & \label{eq:fixed_point_glm_b}
	\end{subnumcases}

	\paragraph{Remark: Additive gaussian channel} In the case of an additive Gaussian channel with variance $\Delta$
	 we find $r = (\Delta+b)^{-1}$, which gives $\zeta=\Delta$ and $\gamma = \mathcal{R}_{F^TF/\Delta}(-v)$.
	 We thus coherently recover the TAP equations for the compressed sensing problem (see Sec.~\ref{subsubsec:fixed_point_cs}) 
	 even though these equations were derived with a ``naïve'' Plefka expansion in powers of $\beta \equiv \Delta^{-1}$.

	 \subsubsection{The G-VAMP algorithm for Generalized Linear Models}\label{subsubsec:gvamp_glm}

		With a similar reasoning that we used to derive the VAMP algorithm for a symmetric pairwise model, we can write a VAMP algorithm for a bipartite model.
		We do not describe its full derivation here, and we simply report the G-VAMP algorithm for the GLM as 
		stated in \cite{schniter2016vector}.
		We define a set of functions:
	\begin{subnumcases}{}
		\tilde{F}_m(r,\gamma) \equiv  \frac{\int \mathrm{d}x P_X(x) \, x \, e^{-\frac{1}{2}\gamma (x - r)^2}}{\int \mathrm{d}x \, P_X(x) \, e^{-\frac{1}{2}\gamma (x - r)^2}} = 
		F_m(- \gamma r,\gamma) , & \\
		\tilde{F}_v(r,\gamma) \equiv  \frac{\int \mathrm{d}x P_X(x) \, x^2 \, e^{-\frac{1}{2}\gamma (x - r)^2}}{\int \mathrm{d}x \, P_X(x) \, e^{-\frac{1}{2}\gamma (x - r)^2}} - (\tilde{F}_m(r,\gamma))^2 = F_v(- \gamma r,\gamma), & \\
		\tilde{F}_z(\omega,\tau) \equiv  \frac{\int \mathrm{d}z P_{\rm out}(y|z) \, z \, e^{-\frac{1}{2} \tau (z - \omega)^2}}{\int \mathrm{d}z P_{\rm out}(y|z) \, e^{-\frac{1}{2} \tau (z - \omega)^2}} = g_{\rm out} (y,\omega,\tau^{-1})  \tau^{-1} + \omega, &\\
		\label{Eq_F_kappa}
		\tilde{F}_\kappa(\omega,\tau) \equiv \frac{\int \mathrm{d}z P_{\rm out}(y|z) \, z^2 \, e^{-\frac{1}{2} \tau (z - \omega)^2}}{\int \mathrm{d}z P_{\rm out}(y|z) \, e^{-\frac{1}{2} \tau (z - \omega)^2}} - (\tilde{F}_z(\omega,\tau))^2 =  \partial_\omega g_{\rm out} (y,\omega,\tau^{-1}) \tau^{-2} + \tau^{-1}. &
	\end{subnumcases}
		The full algorithm then amounts to iterate the following equations:
		\begin{subequations}
			\label{eq:gvamp_full}
		\begin{align}
		\bmm_{1i}^t &= \tilde{F}_m((\br_J^t)_i,\gamma^t_J), & v_1^t &= \frac{1}{N} \sum_i \tilde{F}_v((\br_J^t)_i,\gamma^t_J), \\
		\label{Eq_vamp_GLM_r2}
		\br_{0}^{t} &= \frac{(\bmm^t - \gamma^t_J v^t \br^t_J )}{(1- \gamma_J^t v^t)},  & \gamma_{0}^{t} &=  \frac{1}{v^t} - \gamma_{J}^t, \\
		\bz_{1\mu}^t &= \tilde{F}_z((\bomega_J^t)_\mu,\tau^t_J), & \kappa^t_1 &= \frac{1}{M} \sum_{\mu} \tilde{F}_\kappa((\bomega_J^t)_\mu,\tau^t_J),\\
		\label{Eq_vamp_GLM_p2}
		\bomega_{0}^{t} &= \frac{(z^t - \tau_{J}^t\kappa^t \bomega_{J}^t)}{(1-\tau_{J}^t \kappa^t)}, & \tau_{0}^{t+1} &= \frac{1}{\kappa^t}_1 -  \tau_{J}^t, \\
		\label{Eq_vamp_GLM_x2}
		\bmm^{t}_2 &= \frac{1}{\tau_{0}^{t+1} F^T F + \gamma_{0}^{t}} \left( \gamma_{0}^{t} \br_0^{t} + F^T \tau_{0}^{t} \bomega_0^{t} \right), & v^{t}_2 &= \frac{1}{N} \text{Tr} \frac{1}{\tau_0^{t} F^T F + \gamma_0^{t}}, \\
		\label{Eq_vamp_GLM_r}
		\br_J^{t+1} &= \frac{(\bmm^{t}_2 - \gamma_{0}^{t} v_2^{t} \br_{0}^{t})}{(1- \gamma_{0}^{t} v_2^{t})}, & \gamma_J^{t+1} &= \frac{1}{v_2^{t}} - \gamma_0^{t}, \\
		\label{Eq_vamp_GLM_z2}
		\bz_2^{t} &= F \frac{1}{\tau_{0}^{t} F^\intercal F + \gamma_{0}^{t}} \left( \gamma_{0}^{t} \br_0^{t} + F^\intercal \tau_{0}^{t} \bomega_0^{t} \right),& \kappa^{t}_2 &= \frac{1}{M} \text{Tr} F^\intercal F  \frac{1}{\tau_0^{t} F^\intercal F + \gamma_0^{t}}, \\
		\label{Eq_vamp_GLM_p}
		\bomega_{J}^{t+1} &= \frac{(\bz^{t}_2 - \tau_{0}^{t}\kappa^{t}_2 \bomega_{0}^{t})}{(1-\tau_{0}^{t}\kappa^{t}_2)}, & \tau_{J}^{t+1} &= \frac{1}{\kappa^{t}_2} - \tau_{0}^{t}.
		\end{align}
		\end{subequations}
		
		\subsubsection{TAP equations and fixed points of G-VAMP}

		 We want to see if the stationary limit of G-VAMP, that is the G-VAMP equations without time indices, is related to the TAP equations.
		At the fixed points of the G-VAMP algorithm written in eq.~\eqref{eq:gvamp_full},
		 one expects the following equalities to take place: $\bmm_1=\bmm_2=\bmm$, $\bz_1=\bz_2=\bz$, $v_1=v_2=v$ and $\kappa_1=\kappa_2=\kappa$.
		 We start from the TAP equations, eq.~\eqref{eq:fixed_point_glm_1} and eq.~\eqref{eq:fixed_point_glm_2}, and we will try to recover every equation in eq.~\eqref{eq:gvamp_full}.
		 \begin{itemize}
			\item From eq.~\eqref{eq:fixed_point_glm_b} and eq.~\eqref{Eq_F_kappa} we can write
			\begin{equation}\label{Eq_b_glm_vamp}
			\frac{1}{b} = \frac{1}{\kappa} - \frac{1}{\zeta} \ ,
			\end{equation}
			which can be identified with eq.~\eqref{Eq_vamp_GLM_p}, with $b=\tau_J^{-1}$ and $\zeta=\tau_0^{-1}$.
			\item Using eq.~\eqref{Eq_F_kappa} we write eq.~\eqref{eq:fixed_point_glm_zetap} 
			as
			\begin{equation}
			\frac{r}{\zeta'} = - \frac{\kappa}{\zeta' b^2} + \frac{1}{\zeta' b} = \frac{1}{M} \text{Tr} \left[\frac{1}{\zeta\zeta'+F^\intercal F}\right] + \frac{1-\alpha^{-1}}{\zeta'\zeta}.
			\end{equation}
			Finally from eq.~\eqref{Eq_b_glm_vamp} we obtain
			\begin{equation}
			\frac{\kappa}{\zeta}  =  \frac{1}{\alpha} - \frac{1}{M} \text{Tr} \frac{\zeta\zeta'}{\zeta\zeta'+F^\intercal F},
			\end{equation}
			which is compatible with the second part of eq.~\eqref{Eq_vamp_GLM_z2}, with $\zeta=\tau_0^{-1}$  and $\zeta'=\gamma_0$.
			\item Eq.~\eqref{eq:fixed_point_glm_zeta} and eq.~\eqref{eq:fixed_point_glm_gamma} are equivalent to the second parts of eq.~\eqref{Eq_vamp_GLM_x2}
			and eq.~\eqref{Eq_vamp_GLM_r},
			with $\zeta'=\gamma_0$, $\zeta=\tau_0^{-1}$  and $\gamma=\gamma_J$.
			\item We write eq.~\eqref{Eq_vamp_GLM_x2} as
			\begin{equation}
			(\tau_{0} F^\intercal F + \gamma_{0}) \bmm = \left( \gamma_{0} \br_0 + F^\intercal  \tau_{0} \bomega_0 \right),
			\end{equation}
			and using that $F \bmm = \bz$, as well as eq.~\eqref{Eq_vamp_GLM_r2} and eq.~\eqref{Eq_vamp_GLM_p2}, we arrive at
			\begin{equation}
			\gamma_J \br_J = \gamma_J \bmm + \tau_{J} F^T( \bz - \bomega_J),
			\end{equation}
			which is exactly eq.~\eqref{eq:fixed_point_glm_lambda} with $\omega=\omega_J$, $ \blambda = - \br_J \gamma_J$ and
			$\tau_J = b^{-1}$.
			\item Finally we note that eq.~\eqref{eq:fixed_point_glm_omega}
			at the fixed point is nothing but $\bz = F \bmm$, which gives eq.~\eqref{Eq_vamp_GLM_z2}.
		 \end{itemize}
		All these relations show the equivalence between the stationary limit of the G-VAMP algorithm of \cite{schniter2016vector} and the (TAP)
		maximization equations of the free entropy that we derived with our Plefka expansion in Sec.~\ref{subsubsec:plefka_glm}.

\section{The diagrammatics of the Plefka expansion} \label{sec:diagrammatics}

The goal of this section is to precise how the different diagrams arising in our Plefka expansions in Sec.~\ref{sec:stat_models}
can be computed. Recall that for symmetric random matrices $J$ we construct diagrams as described in Fig.~\ref{fig:diagrams_example}. 
\begin{figure}[t]
 \centering
\captionsetup{justification=centering}
\begin{subfigure}[b]{0.42\textwidth}
   \centering
\begin{tikzpicture}[scale=1.]
\node (i0) at (0,3) {$\bullet$};
\node (i1) at (1.5,3) {$\bullet$};
\node (i2) at (3,3) {$\bullet$};
\node (i3) at (0.725,3.725) {$\bullet$};
\node (i4) at (0.725,2.225) {$\bullet$};
\node (i5) at (2.225,3.725) {$\bullet$};
\node (i6) at (2.225,2.225) {$\bullet$};
\node (i7) at (4.5,3) {$\bullet$};
\node (i8) at (0.725,5.225) {$\bullet$};
\draw [line width = 0.5mm] (i0.center) to [out=90,in=-180] (i3.center);
\draw [line width = 0.5mm] (i3.center) to [out=-0,in=90] (i1.center);
\draw [line width = 0.5mm] (i1.center) to [out=-90,in=0] (i4.center);
\draw [line width = 0.5mm] (i4.center) to [out=180,in=-90] (i0.center);
\draw [line width = 0.5mm] (i1.center) to [out=90,in=-180] (i5.center);
\draw [line width = 0.5mm] (i5.center) to [out=-0,in=90] (i2.center);
\draw [line width = 0.5mm] (i2.center) to [out=-90,in=0] (i6.center);
\draw [line width = 0.5mm] (i6.center) to [out=180,in=-90] (i1.center);
\draw [line width = 0.5mm] (i2.center) to [out=45,in=135] (i7.center);
\draw [line width = 0.5mm] (i2.center) to [out=-45,in=-135] (i7.center);
\draw [line width = 0.5mm] (i3.center) to [out=45,in=-45] (i8.center);
\draw [line width = 0.5mm] (i3.center) to [out=135,in=-135] (i8.center);
\end{tikzpicture}
\caption{A possible diagram at perturbation order $p=12$.}\label{fig:diagram_ex1_appendix}
\end{subfigure}
\begin{subfigure}[b]{0.42\textwidth}
   \centering
\begin{tikzpicture}[scale=0.8]
\node (i1) at (0,3) {$\bullet$};
\node (i2) at (3,3) {$\bullet$};
\node (i3) at (1.5,5.598) {$\bullet$};
\node (i4) at (4.5,5.598) {$\bullet$};
\draw [line width = 0.5mm] (i1.center) to [out=45,in=135] (i2.center);
\draw [line width = 0.5mm] (i1.center) to [out=-45,in=-135] (i2.center);
\draw [line width = 0.5mm] (i1.center) to [out=-20,in=-160] (i2.center);
\draw [line width = 0.5mm] (i2.center) to [out=100,in=-40] (i3.center);
\draw [line width = 0.5mm] (i3.center) to [out=-140,in=80] (i1.center);
\draw [line width = 0.5mm] (i2.center) to [out=40,in=-100] (i4.center);
\draw [line width = 0.5mm] (i4.center) to [out=160,in=20] (i3.center);
\draw [line width = 0.5mm] (i3.center) to [out=-80,in=140] (i2.center);
\end{tikzpicture}
\caption{Another diagram at order $p = 8$}\label{fig:diagram_ex2_appendix}
\end{subfigure}
\caption{Cactus and non-cactus diagrams. Each vertex represents an index $i$ over which we sum, and each edge is a factor $J_{ij}$. Each connected component of the diagrams carries a global $\frac{1}{N}$ factor.}\label{fig:diagrams_example}
\end{figure}
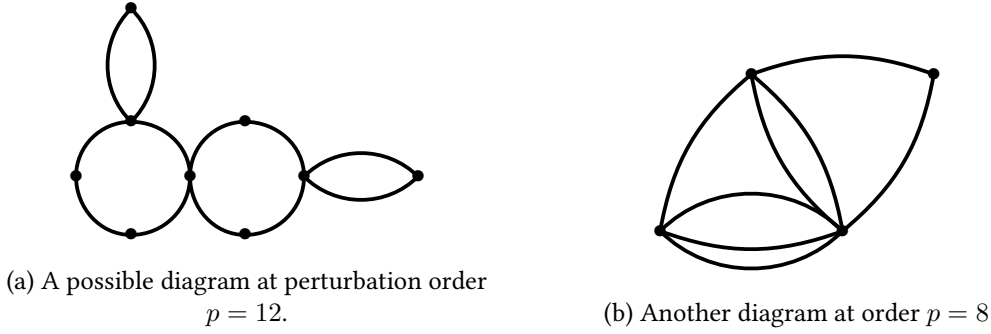
For instance the diagram depicted in Fig.~\ref{fig:diagram_ex1_appendix} is equal to:
\begin{align}
  \frac{1}{N} \sum_{\substack{i_1,\cdots,i_9 \\ \text{pairwise distincts}}} \left(J_{i_1 i_2} J_{i_2 i_3} J_{i_3 i_4} J_{i_4 i_1}\right)\left(J_{i_3 i_5} J_{i_5 i_6} J_{i_6 i_7} J_{i_7 i_3}\right) J_{i_6 i_8}^2 J_{i_2 i_9}^2.
\end{align}
The perturbation order of any diagram is equal to its number of edges, since each of them represents a factor $J_{ij}$.
In this whole section we will only consider \emph{connected} diagrams (unless stated otherwise).
The structure of the section is the following:
\begin{itemize}
  \item In Sec.~\ref{subsec:freecum_expectation} we prove a first rigorous result on 
  the `simple cycles' arising in the Plefka expansion of Sec.~\ref{sec:spherical_bipartite}, namely we study 
  these diagrams \emph{in expectation over $J$} and show a weaker version of Theorem~\ref{thm:free_cum}.
  \item In Sec.~\ref{subsec:generic_diagrams_expectation} we extend this study to all possible diagrams,
  in expectation over $J$.
  \item In Sec.~\ref{subsec:concentration_J} we show how the results of Sec.~\ref{subsec:freecum_expectation} and 
  Sec.~\ref{subsec:generic_diagrams_expectation} can be extended to study the second moments of these diagrams, and use
  it to show concentration results. This will in particular imply the full statement of Theorem~\ref{thm:free_cum}.
  \item In Sec.~\ref{subsec:higher_order_symmetric} we explain how to handle the higher-order moments that can appear as additional factors 
  in these diagrams for the statistical models studied in Sec.~\ref{sec:stat_models}.
  \item In Sec.~\ref{subsec:diagrammatics_bipartite_models} we explain how to generalize all these techniques and results to diagrams made of rectangular matrices, that arise
  in the Plefka expansion for bipartite models.
  \item Finally, in Sec.~\ref{subsec:iid_matrices} we show that if one considers an i.i.d.\ coupling matrix, all the diagrams of 
  order greater than $3$ will not contribute in the thermodynamic limit and that one can effectively consider the distribution of the 
  matrix elements to be Gaussian.
\end{itemize}
Some technicalities, as well as side results and generalizations of these diagrammatics for Hermitian matrices and diverging-size diagrams, which are not directly useful for our expansions, 
are detailed in Appendix~\ref{sec:generalizations_expansions}.
We finally note that some of our results are similar to the recent independent work of \cite{bauer2018equilibrium} that was recently brought to our attention.

\subsection{A weaker version of Theorem~\ref{thm:free_cum}}\label{subsec:freecum_expectation}

We will consider the random matrix ensemble defined by Model~\ref{model:sym_rot_inv}. 
In the following, $J \in {\cal S}_N$ is a random matrix from this ensemble.
Recall as well the random matrix tools defined in Appendix~\ref{sec:appendix_rmt}, in particular the free cumulants $\{c_p(\rho_D)\}$.
We first show a weaker version of Theorem~\ref{thm:free_cum}:
 \begin{theorem}[Expectation of simple cycles and free cumulants]\label{thm:free_cum_expectation}
 For $J$ following Model~\ref{model:sym_rot_inv}, for any $p \geq 1$, and any set of pairwise distinct indices $i_1,\cdots,i_r \in \bbN^p$, one has:
 \begin{align}\label{eq:freecum_expectation_nonsummed}
 \lim_{N \to \infty} \EE \left[N^{p-1} J_{i_1 i_2} J_{i_2 i_3} \cdots J_{i_{p-1} i_p} J_{i_p i_1}\right] &= c_p(\rho_D).
 \end{align}
 A stronger result actually takes place, that is we only need to average over $O$ to obtain the result:
 \begin{align}\label{eq:conjecture}
 \lim_{N\to \infty} N^{p-1}  \int_{\mathcal{O}(N)} \mathcal{D}O \left[\left(O D O^\intercal\right)_{i_1 i_2} \left(O D O^\intercal\right)_{i_2 i_3} \cdots \left(O D O^\intercal\right)_{i_p i_1} \right]&=  c_p(\rho_D).
 \end{align}
 This last equality is true a.s.\ with respect to the law of $D$.
 \end{theorem}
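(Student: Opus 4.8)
The plan is to compute the orthogonal integral on the left of \eqref{eq:conjecture} exactly at finite $N$ by Weingarten calculus, and then extract its leading order as $N\to\infty$. First I would use invariance: by left--right invariance of the Haar measure and permutation symmetry of the coordinates, the integral depends on $D$ only through its spectral moments $\mu^{(N)}_s \defeq \frac1N\sum_{i=1}^N d_i^s$, and takes the same value for every tuple of pairwise distinct indices, so I set $i_\alpha=\alpha$. Writing $(ODO^\intercal)_{\alpha,\alpha+1}=\sum_{k_\alpha}O_{\alpha,k_\alpha}\,d_{k_\alpha}\,O_{\alpha+1,k_\alpha}$ (indices cyclic mod $p$) turns the integrand into a degree-$2p$ monomial in the entries of $O$. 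The orthogonal Weingarten formula expands its average over pairs $(\mathfrak m,\mathfrak n)$ of pair partitions of the $2p$ slots; the ``row'' Kronecker deltas kill every $\mathfrak m$ except the unique ``necklace'' pairing $\mathfrak m_0$ that matches the two occurrences of each row label $\alpha$ (possible exactly because the labels are distinct), leaving
\begin{align}
\int_{\mathcal O(N)}\mathcal D O\ \prod_{\alpha=1}^p (ODO^\intercal)_{\alpha,\alpha+1}
\;=\;\sum_{\mathfrak n}\mathrm{Wg}^{\mathcal O}_N(\mathfrak m_0,\mathfrak n)\prod_{B\in\pi_0\vee\mathfrak n} N\,\mu^{(N)}_{|B|},
\end{align}
where $\pi_0$ is the complementary necklace pairing (matching the two slots of each factor), $\pi_0\vee\mathfrak n$ is the induced partition of the $p$ ``colours'' $k_1,\dots,k_p$, and $|B|$ is the number of colours in a block $B$.

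Next comes the $N$-power counting. By the known asymptotics of the orthogonal Weingarten function (Collins--\'Sniady, Collins--Matsumoto), $\mathrm{Wg}^{\mathcal O}_N(\mathfrak m_0,\mathfrak n)=c(\mathfrak m_0,\mathfrak n)\,N^{-(2p-\ell(\mathfrak m_0\cup\mathfrak n))}(1+o(1))$, with $\ell$ the number of cycles of the union graph and $c$ an explicit product of signed Catalan numbers. Since $|\pi_0\vee\mathfrak n|=\ell(\pi_0\cup\mathfrak n)$, the $\mathfrak n$-term is of order $N^{\ell(\mathfrak m_0\cup\mathfrak n)+\ell(\pi_0\cup\mathfrak n)-2p}$, and the genus (Euler-characteristic) inequality for the triple $(\mathfrak m_0,\pi_0,\mathfrak n)$ -- available because $\mathfrak m_0\cup\pi_0$ is a single $2p$-cycle -- gives $\ell(\mathfrak m_0\cup\mathfrak n)+\ell(\pi_0\cup\mathfrak n)\le p+1$, with equality exactly for the \emph{planar} $\mathfrak n$. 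Hence every term is $O(N^{1-p})$, only the planar $\mathfrak n$ survive after multiplying by $N^{p-1}$, these involve only the $\mu^{(N)}_s$ with $s\le p$, and $\mu^{(N)}_s\to m_s(\rho_D)$ almost surely under Model~\ref{model:sym_rot_inv} (weak convergence of the spectral law together with the hypothesis on the extreme eigenvalues forces convergence of every moment). So the limit in \eqref{eq:conjecture} exists, is a.s.\ a deterministic function of $\rho_D$, and equals $\sum_{\pi\in NC(p)}\mathrm{Mob}(\pi)\prod_{B\in\pi}m_{|B|}(\rho_D)$, where $\pi=\pi_0\vee\mathfrak n$ runs over the non-crossing partitions coming from planar $\mathfrak n$. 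The genus-zero Weingarten weights are precisely the non-crossing M\"obius function, $\mathrm{Mob}(\pi)=\mathrm{Mob}_{NC}(\pi,\hat1_p)$ (a size-$k$ block contributing $(-1)^{k-1}C_{k-1}$), so this is the M\"obius inversion of the moment--free-cumulant relation $m_p=\sum_{\pi\in NC(p)}\prod_B c_{|B|}$, i.e.\ $c_p(\rho_D)$. This is consistent with comparing, order by order in $\beta$, the cumulant expansion of the spherical free entropy -- which is rigorously $\frac12(1+\log 2\pi\sigma^2)+\frac12\int_0^{\beta\sigma^2}\mathcal R_{\rho_D}$ by Guionnet--Ma\"ida, see \eqref{eq:spherical_hight} -- against $\frac12\sum_p \frac{c_p}{p}(\beta\sigma^2)^p$.

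Finally, \eqref{eq:freecum_expectation_nonsummed} follows from \eqref{eq:conjecture}: by the Weingarten formula the orthogonal average is, for each $N$, a fixed polynomial in $\mu^{(N)}_1,\dots,\mu^{(N)}_p$ with $N$-bounded coefficients, and it converges a.s.\ to $c_p(\rho_D)$; under the (mild) integrability of $\|D\|_{\mathrm{op}}$ satisfied by the models of interest this upgrades to convergence in expectation over $D$ as well. The main obstacle I expect is the combinatorial core of the second step: proving the planarity bound with the exact characterisation of the extremal $\mathfrak n$, and then checking that the surviving orthogonal-Weingarten weights assemble precisely into $\mathrm{Mob}_{NC}(\cdot,\hat1_p)$, so that the limit is genuinely the free cumulant $c_p(\rho_D)$ rather than some other symmetric function of the moments; the subleading (cactus and higher-genus) contributions are controlled by the same power counting but must be verified to be strictly subleading. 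A more pedestrian alternative would induct on $p$ from the eigenvalue identity $\frac1N\Tr(ODO^\intercal)^p=\mu^{(N)}_p$, expanding the trace by index coincidences, but tracking which coincidence patterns survive and how they factorise is itself delicate, which is why I would favour the Weingarten route.
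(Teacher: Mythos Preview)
Your Weingarten-calculus route is correct in outline and genuinely different from the paper's proof. The paper does not expand the Haar integral term by term; instead it packages the whole cycle as a $p$-fold derivative of a finite-rank HCIZ integral,
\[
L_p^{(N)}=\frac{1}{N}\,\partial_{b_1}\cdots\partial_{b_p}\!\left[\int_{\mathcal O(N)}\mathcal D O\,e^{\frac N2\Tr[M(\bb)\,ODO^\intercal]}\right]_{\bb=0},
\]
invokes the Guionnet--Ma\"ida asymptotics to replace the bracket by $\exp\{N\,\Tr\,G_{\rho_D}(M(\bb))\}=\exp\{\frac N2\sum_n \frac{c_n}{n}\Tr M(\bb)^n\}$, and then argues algebraically with the elementary matrices $E_{ab}=\partial_{b_{ab}}M$ that the only surviving term after all derivatives and $\bb=0$ is the one picking out $c_p$. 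No explicit non-crossing combinatorics or M\"obius inversion appears; the free cumulant falls out because $G_{\rho_D}$ is, by definition, the primitive of the $\mathcal R$-transform.

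The trade-offs are clear. Your approach works entirely at finite $N$ and passes to the limit only at the end through the Weingarten asymptotics and the almost-sure convergence of the moments $\mu^{(N)}_s$; this is logically cleaner than the paper's implicit swap of the $N\to\infty$ limit with the $p$ differentiations in $\bb$. It also makes transparent why only planar structures survive, via the genus inequality, which connects nicely to the free-probability picture. On the other hand, the paper's HCIZ trick is lighter: it bypasses the planarity/M\"obius identification you flag as the main obstacle, and---more importantly for the rest of the paper---it generalises verbatim to arbitrary diagrams (Section~\ref{subsec:generic_diagrams_expectation}) simply by changing the matrix $M(\bb)$, whereas redoing the Weingarten power-counting for each diagram topology would be heavier. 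Your concern about the orthogonal (as opposed to unitary) Weingarten leading coefficients assembling into $\mathrm{Mob}_{NC}$ is legitimate but resolvable: the Collins--Matsumoto orthogonal asymptotics give exactly the same leading weights as in the unitary case on the planar sector, so the identification with $c_p(\rho_D)$ goes through.
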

 Note that in the Plefka expansions we perform in Sec.~\ref{subsubsec:plefka_sym_spherical} and Sec.~\ref{subsec:plefka_sym_models} we consider sums over all distinct pairwise indices of eq.~\eqref{eq:freecum_expectation_nonsummed}. 
 The expectation of these sums over $O$ is an immediate consequence of Theorem~\ref{thm:free_cum_expectation}:
\begin{align*}
\forall p \in \bbN^\star, \quad  \lim_{N \to \infty} \EE_O \left[\frac{1}{N} \sum_{\substack{i_1,\cdots,i_p \\ \mathrm{pairwise}\,\mathrm{distincts}}} J_{i_1 i_2} J_{i_2 i_3} \cdots J_{i_{p-1} i_p} J_{i_p i_1}\right] &= c_p(\rho_D).
 \end{align*}
 We now turn to the proof of Theorem~\ref{thm:free_cum_expectation}.
 \begin{proof}[Proof of Theorem~\ref{thm:free_cum_expectation}]
 A first pedestrian way to show eq.~\eqref{eq:conjecture} for small values of $p$ is to use explicit integration of polynomials over the Haar measure of the orthogonal or unitary group, see for instance \cite{collins2006integration}. This can be used
 to check eq.~\eqref{eq:conjecture} for the first values of $p$.
 Since we aim at a generic proof we will choose a different path, leveraging from HCIZ-type integrals \cite{harish1957differential} \cite{itzykson1980planar}, in the particular case
 in which one matrix has finite rank. In our setting, the computation of these integrals has been made rigorous in \cite{guionnet2005fourier}.
Let us denote:
 \begin{align}\label{eq:def_Lp}
  L_p^{(N)} &\equiv N^{p-1} \int_{\mathcal{O}(N)} \mathcal{D}O \left[\left(O D O^\intercal\right)_{i_1 i_2} \left(O D O^\intercal\right)_{i_2 i_3} \cdots \left(O D O^\intercal\right)_{i_p i_1} \right].
 \end{align}
 In order to simplify the following calculation, we assume that $(i_1,\cdots,i_p) = (1,\cdots,p)$. 
 Since the sought result does not depend on the particular choice of indices (as is clear by rotational invariance), this does not remove any generality. 
 We first note that the case $p=1$ and $p=2$ are trivial to show by an explicit computation, so we will assume $p \geq 3$ in the following.
 One can rewrite eq.~\eqref{eq:def_Lp} as:
 \begin{align}
 L_p^{(N)} &= \frac{1}{N} \prod_{l=1}^p \frac{\partial}{\partial b_l} \left[\int_{\mathcal{O}(N)} \mathcal{D}O \, e^{\frac{N}{2} \mathrm{Tr}\, \left[M(\bb) O D O^\intercal \right]}\right]_{\bb=0},
 \end{align}
 in which we denoted $\bb \equiv (b_1,\cdots,b_p)$ and $M(\bb) \in {\cal S}_N$ the following symmetric block matrix of rank $p$:
 \begin{align}
 M(\bb) \equiv \begin{pmatrix}
 M_1(\bb) & (0) \\
 (0) & (0) \\
 \end{pmatrix},
 \end{align}
 in which $M_1(\bb) \in {\cal S}_p$ with:
 \begin{align}
 M_1(\bb) \equiv  \begin{pmatrix}
 0 & b_1 & 0 & \cdots & 0 & b_p \\
 b_1  & 0 & b_2 & \cdots & 0& 0 \\
 0 & b_2 & 0 & \cdots & 0 &0 \\
 \vdots & \vdots & \vdots & \ddots & \vdots & \vdots \\
 0 & 0 & 0 & \cdots & 0 & b_{p-1} \\
 b_p & 0 & 0 & \cdots & b_{p-1}  & 0
 \end{pmatrix}.
 \end{align}
 Now we can apply Theorem~2 of \cite{guionnet2005fourier}. Recall that $G_{\rho_D}$ is (up to a factor)
  the integrated $\mathcal{R}$-transform of $\rho_D$.
 We obtain:
 \begin{align}
  \lim_{N \to \infty} L_p^{(N)} &= \lim_{N \to \infty} \frac{1}{N} \left[\prod_{l=1}^p \frac{\partial}{\partial b_l}\right]\left[\exp\left\{N \mathrm{Tr} \, G_{\rho_D}\left[M(\bb)\right] \right\}\right]_{\bb=0}, \nonumber \\
 \label{eq:Lp_guionnet}
 &= \lim_{N \to \infty}  \frac{1}{N} \left[\prod_{l=1}^p \frac{\partial}{\partial b_l}\right]\left[\exp\left\{\frac{N}{2}\sum_{n=1}^\infty \frac{c_n(\rho_D)}{n} \mathrm{Tr} \, [M(\bb)^n] \right\}\right]_{\bb=0}.
 \end{align}
 Let us denote $Z(\bb) \equiv\exp\left\{\frac{N}{2}\sum_{n=1}^\infty \frac{c_n(\rho_D)}{n} \mathrm{Tr} \, [M(\bb)^n] \right\}$.
 Note that differentiating $Z(\bb)$ with respect to $b_1$ yields (by cyclicity of the trace):
 \begin{align}
 \frac{1}{Z(\bb)}\frac{\partial}{\partial b_1} Z(\bb)&= \frac{N}{2} \sum_{n=1}^\infty c_n(\rho_D) \mathrm{Tr} \, \left[\left\{\frac{\partial}{\partial b_1} M(\bb) \right\}  M(\bb)^{n-1}\right] , \\
 \label{eq:1}
 &= \frac{N}{2}\sum_{n=1}^\infty c_n(\rho_D) \mathrm{Tr} \, \left[E_{12} M(\bb)^{n-1}\right],
 \end{align}
 with elementary symmetric matrices $(E_{ab})_{ll'} \equiv \delta_{l,a} \delta_{l',b} + \delta_{l',a} \delta_{l,b}$. These matrices are such that for 
 each $a < b$ and $c < d$, $E_{ab} E_{cd} = 0$ if $\{c,d\} \cap \{a,b\} =\emptyset$.
The only way to obtain a matrix of non-zero trace with a product of matrices $\{E_{ab}\}$ is to have a cycle structure in the indices of the matrices.
Recall that the indices are symmetric, that is $E_{ba} = E_{ab}$.
For instance:
\begin{align*}
  \mathrm{Tr} \, \left[E_{12}^2 E_{13}E_{23}E_{12}\right] &= \mathrm{Tr} \, \left[E_{12} E_{21} E_{13}E_{32}E_{21}\right]  \neq 0 ,\\
  \mathrm{Tr} \, \left[E_{12}^2 E_{24}E_{23}E_{12}\right] &= 0 .
\end{align*}
 Because of this and the fact that $M(\bb = 0) = 0$, 
 it is easy to see that the only term that will survive after taking all the
 successive derivatives and taking $\bb = 0$ will be the derivatives of the
 right-hand-side of eq.~\eqref{eq:1}, and not other derivatives of $Z(\bb)$.
Let us analyze what differentiating this term yields.
As we saw, taking derivative with respect to $b_1$ yields a matrix $E_{12}$. 
When differentiating with respect to $b_2$ this yields a matrix $E_{23}$.
Note that \emph{a priori}, one would have:
\begin{align}\label{eq:diff_M}
  \frac{\partial}{\partial b_2} \mathrm{Tr}\left[E_{12} M(\bb)^{n-1}\right] &=  \mathrm{Tr} \, \left[E_{12} \sum_{k=0}^{n-2} M(\bb)^{k} E_{23} M(\bb)^{n-2-k}\right].
\end{align}
However, the following differentiations with respect to $b_3,\cdots,b_p$ will never yield a matrix $E_{ab}$ with one
of the indices being equal to $2$. So in eq.~\eqref{eq:diff_M} it is clear that only two terms of the sum, the term $k=0$ and $k=n-2$, 
will yield a non-zero contribution.
In the end, after taking all the $p$ successive derivatives, 
only two terms will remain, which correspond to the two possible orientations of the simple cycle:
\begin{align}
   \label{eq:cycles_sym}
  \lim_{N \to \infty}L_p^{(N)} &= \frac{1}{2}\sum_{n=p}^\infty  c_n(\rho_D) \mathrm{Tr}\left[\left(E_{12} E_{23} \cdots E_{p1} + E_{1p} E_{p p-1} \cdots E_{32} E_{21} \right) \, M(0)^{n-p} \right], \\
      &= c_p(\rho_D), \nonumber
\end{align}
using that $M(0) = 0$, which finishes the proof.
 \end{proof}

\subsection{The expectation of generic diagrams}\label{subsec:generic_diagrams_expectation}

\begin{figure}[t]
 \centering
\captionsetup{justification=centering}
\begin{subfigure}[b]{0.32\textwidth}
   \centering
\begin{tikzpicture}[scale=1.]
\node (i1) at (0,2) {$\bullet$};
\node (i2) at (1,3) {$\bullet$};
\node (i3) at (2,2) {$\bullet$};
\node (i4) at (1,1) {$\bullet$};
\draw [line width = 0.5mm] (i1.center) to [out=90,in=-180] (i2.center);
\draw [line width = 0.5mm] (i2.center) to [out=-0,in=90] (i3.center);
\draw [line width = 0.5mm] (i3.center) to [out=-90,in=0] (i4.center);
\draw [line width = 0.5mm] (i4.center) to [out=180,in=-90] (i1.center);
\draw [line width = 0.5mm] (i4.center) to [out=90,in=-90] (i2.center);
\draw [line width = 0.5mm] (i1.center) to [out=0,in=180] (i3.center);
\end{tikzpicture}
\caption{A non-Eulerian diagram.}\label{fig:ncycle}
\end{subfigure}
\begin{subfigure}[b]{0.32\textwidth}
   \centering
\begin{tikzpicture}[scale=1.]
\node (i1) at (0,2) {$\bullet$};
\node (i2) at (1,3) {$\bullet$};
\node (i3) at (2,2) {$\bullet$};
\node (i4) at (1,1) {$\bullet$};
\draw [line width = 0.5mm] (i1.center) to [out=90,in=-180] (i2.center);
\draw [line width = 0.5mm] (i2.center) to [out=-0,in=90] (i3.center);
\draw [line width = 0.5mm] (i3.center) to [out=-90,in=0] (i4.center);
\draw [line width = 0.5mm] (i4.center) to [out=180,in=-90] (i1.center);
\draw [line width = 0.5mm] (i4.center) to [out=45,in=-45] (i2.center);
\draw [line width = 0.5mm] (i4.center) to [out=135,in=-135] (i2.center);
\end{tikzpicture}
\caption{An Eulerian strongly irreducible diagram.}\label{fig:ncactus_appendix}
\end{subfigure}
\begin{subfigure}[b]{0.32\textwidth}
   \centering
\begin{tikzpicture}[scale=1.]
\node (i0) at (0,3) {$\bullet$};
\node (i1) at (1.5,3) {$\bullet$};
\node (i2) at (3,3) {$\bullet$};
\node (i3) at (0.725,3.725) {$\bullet$};
\node (i4) at (0.725,2.225) {$\bullet$};
\node (i5) at (2.225,3.725) {$\bullet$};
\node (i6) at (2.225,2.225) {$\bullet$};
\node (i7) at (4.5,3) {$\bullet$};
\node (i8) at (0.725,5.225) {$\bullet$};
\draw [line width = 0.5mm] (i0.center) to [out=90,in=-180] (i3.center);
\draw [line width = 0.5mm] (i3.center) to [out=-0,in=90] (i1.center);
\draw [line width = 0.5mm] (i1.center) to [out=-90,in=0] (i4.center);
\draw [line width = 0.5mm] (i4.center) to [out=180,in=-90] (i0.center);
\draw [line width = 0.5mm] (i1.center) to [out=90,in=-180] (i5.center);
\draw [line width = 0.5mm] (i5.center) to [out=-0,in=90] (i2.center);
\draw [line width = 0.5mm] (i2.center) to [out=-90,in=0] (i6.center);
\draw [line width = 0.5mm] (i6.center) to [out=180,in=-90] (i1.center);
\draw [line width = 0.5mm] (i2.center) to [out=45,in=135] (i7.center);
\draw [line width = 0.5mm] (i2.center) to [out=-45,in=-135] (i7.center);
\draw [line width = 0.5mm] (i3.center) to [out=45,in=-45] (i8.center);
\draw [line width = 0.5mm] (i3.center) to [out=135,in=-135] (i8.center);
\end{tikzpicture}
\caption{A cactus diagram.}\label{fig:cactus_appendix}
\end{subfigure}
\caption{Cactus and non-cactus diagrams. Each vertex represents an index $i$ over which we sum, and each edge is a factor $J_{ij}$. Each diagram carries a global $\frac{1}{N}$ factor.}\label{fig:diagrams_cactus_ncactus}
\end{figure}
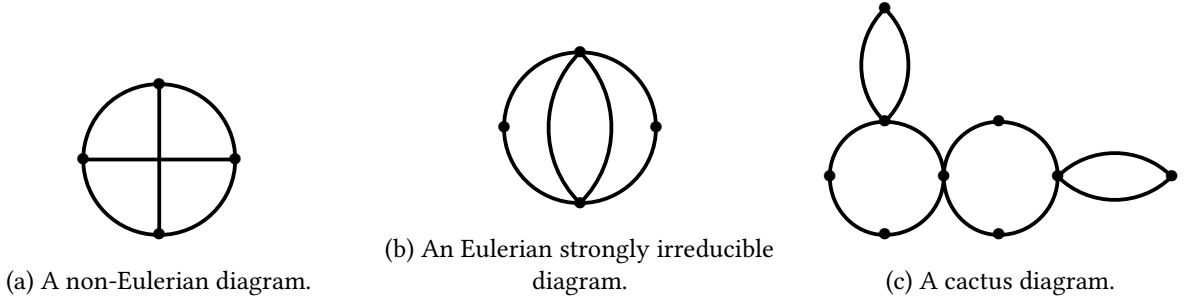
Following the remarks of \cite{georges1991expand} and \cite{parisi1995mean}, we can separate some of the diagrams constructed as in Fig.~\ref{fig:diagrams_example} in
three disjoint categories or types:
\begin{enumerate}[label=\textbf{T.\arabic*},ref=T.\arabic*]
  \item \label{type:non_eulerian} \emph{Non-Eulerian diagrams.} By definition, a diagram is Eulerian if one can construct a cyclic path in the graph 
   that goes through each edge exactly once. Note that this is a classic result of graph theory (the Euler–Hierholzer theorem) that these graphs are exactly the connected graphs with even degree
   in each vertex. For instance, the graph depicted in eq.~\eqref{fig:ncycle} is not Eulerian, whereas the one of Fig.~\ref{fig:ncactus_appendix} is Eulerian.
   \item \label{type:strongly_irreducible}
   \emph{Eulerian diagrams that are strongly irreducible but not simple cycles}. By strongly irreducible, we mean \cite{georges1991expand} that one can 
   not make it disconnected by removing any single vertex. For instance, the diagram of Fig.~\ref{fig:ncactus_appendix} is strongly irreducible, whereas 
   the diagram of Fig.~\ref{fig:cactus_appendix} is not.
   \item \label{type:cactus} \emph{Cactus diagrams}. These diagrams, like the one of Fig.~\ref{fig:cactus_appendix}, are trees made of simple cycles joining at 
   their vertices. Among them are of course the \emph{simple cycles}.
\end{enumerate}
We are not interested in Eulerian diagrams that are not strongly irreducible. Indeed, as argued in \cite{georges1991expand}, only strongly irreducible
diagrams will appear in the Plefka expansions. This is an important hypothesis of the Plefka expansion, somehow a bit hidden by the formalism.
We give precise descriptions of the large $N$ limit of the \emph{expectation} of all these diagrams in the following.
When we write ``expectation'' we will always mean expectation over the orthogonal matrix of Model~\ref{model:sym_rot_inv}.
More precisely, we will show:
\begin{enumerate}
   \item[$(i)$] All non-Eulerian graphs of type \ref{type:non_eulerian} have a vanishing expectation in the $N \to \infty$ limit.
   \item[$(ii)$] All strongly irreducible diagrams of type \ref{type:strongly_irreducible} also have a vanishing expectation in the $N \to \infty$ limit.
   \item[$(iii)$] We already showed that the expectation of a simple cycle of size $p$ converges
    to the $p$-th free cumulant of $\rho_D$ in Sec.~\ref{subsec:freecum_expectation}. We show that the expectation of a cactus diagram converges to the 
    product of the expectations of all its constituent simple cycles. For instance, for the diagram $\mathcal{C}$ of Fig.~\ref{fig:cactus_appendix} we obtain
    that its expectation converges to: 
   \begin{align}
     \lim_{N \to \infty} \EE \, \mathcal{C} &= c_2(\rho_D)^2 c_4(\rho_D)^2. 
   \end{align}
\end{enumerate}
Results $(i)$ and $(ii)$ are justified in Sec.~\ref{subsubsec:neulerian_ncactus_expectation}, and are directly useful for our diagrammatic expansions.
Result $(iii)$ on the other hand is a side result that is not used in our expansions, as we argued that only strongly irreducible diagrams come up in our expansions \cite{georges1991expand}.
It is justified in Sec.~\ref{subsubsec:cactus_expectation}. 

\subsubsection{Eulerian diagrams, strongly irreducible diagrams and simple cycles}\label{subsubsec:neulerian_ncactus_expectation}

Let us consider a connected diagram $G$ with $V$ vertices and $E$ edges. We will show that:
\begin{itemize}
  \item If $G$ is not Eulerian, its expectation goes to $0$ as $N \to \infty$.
  \item If $G$ is Eulerian and strongly irreducible, but is not a simple cycle, its expectation
  also goes to $0$ as $N \to \infty$.
\end{itemize}
Once averaged over the orthogonal matrices, the permutation invariance of the indices allows us to write
\begin{align}
  \EE \, G &= N^{V-1} \int_{\mathcal{O}(N)} \mathcal{D}O \prod_{1 \leq l < l' \leq V} \left(O D O^\intercal\right)_{ll'}^{\epsilon_{ll'}}, 
\end{align}
in which the $\epsilon_{ll'}$ are positive integers such that $\sum_{l < l'} \epsilon_{ll'} = E$. We can now use the results of 
\cite{guionnet2005fourier}, as we did in Sec.~\ref{subsec:freecum_expectation}, to write this diagram as (in the $N \to \infty$ limit):
\begin{align}\label{eq:decomposition_ncactus}
\EE \,G = N^{V-E-1}\left[ \prod_{l < l'}\frac{\partial^{\epsilon_{ll'}}}{\partial b_{ll'}^{\epsilon_{ll'}}}\right] \left[\exp \left\{\frac{N}{2}\sum_{n=1}^\infty \frac{c_n(\rho_D)}{n} \mathrm{Tr}\, [M(\bb)]^n\right\}\right]_{\bb = 0}. 
\end{align} 
In this expression, $M(\bb)_{ll'} \equiv b_{ll'} = M(\bb)_{l'l}$ for $l < l'$,and the diagonal is zero: $M(\bb)_{ll} = 0$. 
Exactly as in Sec.~\ref{subsec:freecum_expectation}, the elementary matrices $\{E_{ll'}\}$ will appear in eq.~\eqref{eq:decomposition_ncactus} by successive
derivatives of the exponential, using the fact that $\frac{\partial}{\partial b_{ll'}} M(\bb) = E_{ll'}$ and then using $M(\bb = 0) = 0$.
As we explained in Sec.~\ref{subsec:freecum_expectation}, a trace of the products of the $\{E_{ll'}\}$ matrices will only be non-zero if and only if the indices in the products form a cycle.
Moreover, as is clear in eq.~\eqref{eq:decomposition_ncactus}, the terms corresponding to the decomposition of $\EE \,G$ into the maximum number of such 
 cycles will dominate in the large $N$ limit, as each derivation of the exponential term adds a multiplicative factor $N$ 
\footnote{There might be a confusion, so we emphasize that this ``decomposition'' of $\EE \, G$ is a decomposition of the \emph{graph} representing $\EE \, G$.}. 
These two facts together imply that:
\begin{itemize}
  \item If $G$ is not Eulerian, as in Fig.~\ref{fig:ncycle}, its expectation will be $0$ in the limit $N \to \infty$
  since it is not possible to decompose it into disjoint cycles by definition.
  \item If $G$ is Eulerian, strongly irreducible, but not a simple cycle, 
      the dominant contribution to $\EE \,G$ in eq.~\eqref{eq:decomposition_ncactus} will arise from decomposing the graph $G$ 
      into simple cycles, as this decomposition maximizes the number of cycles, and we already showed that each simple cycle has a $\mathcal{O}_N(1)$ contribution.
      For the graph of Fig.~\ref{fig:ncactus_appendix}, we show two such possible decompositions in Fig.~\ref{fig:ncactus_decomposition_appendix}.
\end{itemize}
\begin{figure}[t]
 \centering
\captionsetup{justification=centering}
\begin{subfigure}[b]{0.42\textwidth}
   \centering
   $
   \left(
\begin{tikzpicture}[baseline={([yshift=-.5ex]current bounding box.center)},scale=1.]
\node (i0) at (0,0) {$\bullet$};
\node (i1) at (1.5,0) {$\bullet$};
\draw [line width = 0.5mm] (i0.center) to [out=30,in=150] (i1.center);
\draw [line width = 0.5mm] (i0.center) to [out=-30,in=-150] (i1.center);
\end{tikzpicture}
;
\begin{tikzpicture}[baseline={([yshift=-.5ex]current bounding box.center)},scale=1.]
\node (i0) at (0,0) {$\bullet$};
\node (i1) at (1,1) {$\bullet$};
\node (i2) at (2,0) {$\bullet$};
\node (i3) at (1,-1) {$\bullet$};
\draw [line width = 0.5mm] (i0.center) to [out=90,in=180] (i1.center);
\draw [line width = 0.5mm] (i1.center) to [out=0,in=90] (i2.center);
\draw [line width = 0.5mm] (i2.center) to [out=-90,in=0] (i3.center);
\draw [line width = 0.5mm] (i3.center) to [out=180,in=-90] (i0.center);
\end{tikzpicture}
\right)
$
\end{subfigure}
\begin{subfigure}[b]{0.42\textwidth}
   \centering
   $
   \left(
\begin{tikzpicture}[baseline={([yshift=-.5ex]current bounding box.center)},scale=1.]
\node (i0) at (0,0) {$\bullet$};
\node (i1) at (0.866,1.5) {$\bullet$};
\node (i2) at (1.732,0) {$\bullet$};
\draw [line width = 0.5mm] (i0.center) to [out=80,in=-140] (i1.center);
\draw [line width = 0.5mm] (i1.center) to [out=-40,in=100] (i2.center);
\draw [line width = 0.5mm] (i2.center) to [out=-160,in=-20] (i0.center);
\end{tikzpicture}
;
\begin{tikzpicture}[baseline={([yshift=-.5ex]current bounding box.center)},scale=1.]
\node (i0) at (0,0) {$\bullet$};
\node (i1) at (0.866,1.5) {$\bullet$};
\node (i2) at (1.732,0) {$\bullet$};
\draw [line width = 0.5mm] (i0.center) to [out=80,in=-140] (i1.center);
\draw [line width = 0.5mm] (i1.center) to [out=-40,in=100] (i2.center);
\draw [line width = 0.5mm] (i2.center) to [out=-160,in=-20] (i0.center);
\end{tikzpicture}
\right)
$
\end{subfigure}
\caption{Two possible decompositions of the diagram of Fig.~\ref{fig:ncactus_appendix} into simple cycles.}\label{fig:ncactus_decomposition_appendix}
\end{figure}
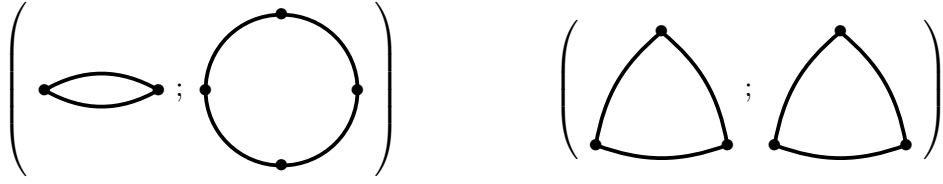
Given the remarks above we assume now that $G$ is Eulerian and strongly irreducible.
Let us denote $P$ the maximal number of simple cycles in such a decomposition of the graph $G$.
Then one can see that the scaling of eq.~\eqref{eq:decomposition_ncactus} will be:
\begin{align*}
  \EE \, G &\sim N^{V+P-E-1}. 
\end{align*}
One can easily be convinced that for a strongly irreducible diagram $G$ we have $V+P-E-1 \leq 0$, and we have equality only if $G$ is a simple cycle.
This implies that all the strongly irreducible diagrams that are not simple cycles and that appear in our Plefka expansions in Sec.~\ref{sec:spherical_bipartite} and
Sec.~\ref{sec:stat_models} will not contribute in the $N \to \infty$ limit.

\subsubsection{Cactus diagrams}\label{subsubsec:cactus_expectation}

As a side result, although it's not directly useful for our Plefka expansions, 
we show that we can compute the large $N$ limit of any ``cactus'' \cite{parisi1995mean} diagram (like the one of Fig.~\ref{fig:cactus_appendix})
as a function of the free cumulants of $\rho_D$.
The argument is straightforward and uses the same technique as in Sec.~\ref{subsubsec:neulerian_ncactus_expectation}.
Consider a cactus diagram $G$ with $V$ vertices and $E$ edges.
One can write the same equation as eq.~\eqref{eq:decomposition_ncactus}:
\begin{align}\label{eq:decomposition_cactus}
\EE \, G = N^{V-E-1}\left[ \prod_{l < l'}\frac{\partial^{\epsilon_{ll'}}}{\partial b_{ll'}^{\epsilon_{ll'}}}\right] \left[\exp \left\{\frac{N}{2}\sum_{n=1}^\infty \frac{c_n(\rho_D)}{n} \mathrm{Tr}\, [M(\bb)]^n\right\}\right]_{\bb = 0}. 
\end{align} 
Again, the dominant contribution is obtained by decomposing $G$ in as many simple cycles as possible. For a cactus diagram 
it is easy to see that there is only \emph{one} such decomposition, which corresponds to its natural decomposition into its
constituent simple cycles, and that the number of such cycles is $P = E+V-1$. Let us denote $\{r_1,\cdots,r_P\}$ the number of vertices
in each of these $P$ simple cycles.
The dominant contribution corresponds to differentiating $P$ times inside the exponential of eq.~\eqref{eq:decomposition_cactus}. Using exactly 
the argument of Sec.~\ref{subsec:freecum_expectation} for each of the $P$ simple cycles we finally obtain:
\begin{align}
 \EE \, G &= N^{P+V-E-1} \prod_{\alpha=1}^P c_{r_\alpha}(\rho_D) + \smallO_N(1), \nonumber\\
  \EE \, G &= \prod_{\alpha=1}^P c_{r_\alpha}(\rho_D) + \smallO_N(1).
\end{align}
This justifies the point $(iii)$ that we gave in the introductory part of the section:
the expectation of the cactus diagrams decouple into the products of their simple cycles constituents.

\subsection{Concentration of the diagrams: a second moment analysis}\label{subsec:concentration_J}

Using our first moment results of Sec.~\ref{subsec:freecum_expectation} and Sec.~\ref{subsec:generic_diagrams_expectation}, 
we will show the following results:
\begin{itemize}
  \item[$(i)$] If ${\cal C}_p$ is the simple cycle of order $p$, then we have that $\lim_{N \to \infty} {\cal C}_p \overset{L^2}{=} c_p(\rho_D)$, which 
   implies directly Theorem~\ref{thm:free_cum} and thus ends its proof.
  Moreover, if $G$ is a cactus diagram then it converges in $L^2$ to the products of the free cumulants corresponding to its constituent
  simple cycles.
  \item[$(ii)$] If $G$ is of the type \ref{type:non_eulerian} or \ref{type:strongly_irreducible}, we have:
  \begin{align}
    \lim_{N \to \infty} \EE\, G^2 &= 0. 
  \end{align}
  This implies that the diagram $G$ will be negligible in the $N \to \infty$ limit.
\end{itemize}
Note that following the arguments of \cite{georges1991expand}, one can convince oneself that \emph{only} strongly irreducible diagrams
will contribute in general to the expansion in our models. 
Together with point $(ii)$
this shows in more detail why only the simple cycles contribute in our Plefka expansions, 
like in eq.~\eqref{eq:spherical_full} for the spherical model of Sec.~\ref{subsec:sym_spherical_model}.
In order to show $(i)$ and $(ii)$ we will establish the following fact. 
Consider a diagram $G$ with $V$ vertices and $E$ edges, of any of the types \ref{type:non_eulerian}, \ref{type:strongly_irreducible}, or \ref{type:cactus}.  
Then one has:
\begin{align}\label{eq:decomposition_second_moment}
 \EE \, G^2 &= \left(\EE G\right)^2 + \frac{1}{N} \sum_\alpha \EE \, \mathcal{C}_\alpha + \smallO_N(1).
\end{align}
In this formula, the sum $\sum_\alpha \mathcal{C}_\alpha$ represents \emph{all the possible diagrams} that one can obtain by
`gluing' together two replicas of the diagram $G$. Indeed, one can write the generic form of a diagram $G$ as:
\begin{align*}
  G  &= \frac{1}{N} \sum_{\substack{i_1,\cdots,i_V \\ \text{pairwise distincts}}} \prod_{1 \leq l < l' \leq V} J_{i_l i_{l'}}^{\epsilon_{ll'}}, 
\end{align*}
in which the integers $\epsilon_{ll'}$ verify $\sum_{l < l'} \epsilon_{ll'} = E$.
Thus one has:
\begin{align*}
  \EE \, G^2 &= \EE \, \left[\frac{1}{N^2} \sum_{\substack{i_1,\cdots,i_V \\ \text{pairwise distincts}}} \sum_{\substack{j_1,\cdots,j_V \\ \text{pairwise distincts}}}\prod_{1 \leq l < l' \leq V}  J_{i_l i_{l'}}^{\epsilon_{ll'}} J_{j_l j_{l'}}^{\epsilon_{ll'}} \right].
\end{align*}
In this expression, one can see that two types of terms have to be taken into account:
\begin{itemize}
  \item A term for which \emph{all indices} $\{i_1,\cdots,i_V,j_1,\cdots,j_V\}$ are pairwise distinct. Diagrammatically, 
  this corresponds to a graph with two disconnected components that are identical and equal to $G$. Therefore, one can repeat the arguments of Sec.~\ref{subsec:freecum_expectation} and 
  Sec.~\ref{subsec:generic_diagrams_expectation} straightforwardly. Indeed, as all the indices are distincts, the decomposition of this diagram into the maximum number of simple cycles will be two 
  copies of the maximal decomposition of $G$.
  This yields that this term is equal in the $N \to \infty$ limit to $(\EE \, G)^2$.
  \item Terms for which there is at least one equality of the type $i_l = j_{l'}$ for $1 \leq l,l' \leq V$. Such a term thus corresponds to a diagram
  with a \emph{single} connected component and constructed by `gluing' some of the vertices of two identical copies of $G$. Since these diagrams have a single
  connected component, they carry a single $\frac{1}{N}$ factor, which explains the term $\frac{1}{N} \sum_\alpha \EE \, \mathcal{C}_\alpha$ in eq.~\eqref{eq:decomposition_second_moment}, if we
  denote $\mathcal{C}_\alpha$ each of these possible terms.
\end{itemize}
We give a schematic representation of eq.~\eqref{eq:decomposition_second_moment} for a simple cycle in Fig.~\ref{fig:decomposition_second_moment}.
\begin{figure}[t]
 \centering
\captionsetup{justification=centering}
   $\EE \left(
\begin{tikzpicture}[baseline={([yshift=-.5ex]current bounding box.center)},scale=0.5]
\node (i0) at (0,0) {$\bullet$};
\node (i1) at (0.866,1.5) {$\bullet$};
\node (i2) at (1.732,0) {$\bullet$};
\draw [line width = 0.5mm] (i0.center) to [out=80,in=-140] (i1.center);
\draw [line width = 0.5mm] (i1.center) to [out=-40,in=100] (i2.center);
\draw [line width = 0.5mm] (i2.center) to [out=-160,in=-20] (i0.center);
\end{tikzpicture}\right)^2
 =  \left( \EE \,
\begin{tikzpicture}[baseline={([yshift=-.5ex]current bounding box.center)},scale=0.5]
\node (i0) at (0,0) {$\bullet$};
\node (i1) at (0.866,1.5) {$\bullet$};
\node (i2) at (1.732,0) {$\bullet$};
\draw [line width = 0.5mm] (i0.center) to [out=80,in=-140] (i1.center);
\draw [line width = 0.5mm] (i1.center) to [out=-40,in=100] (i2.center);
\draw [line width = 0.5mm] (i2.center) to [out=-160,in=-20] (i0.center);
\end{tikzpicture}\right)^2
+ \frac{1}{N} \left[9 \, \EE \, 
\begin{tikzpicture}[baseline={([yshift=-.5ex]current bounding box.center)},scale=0.5]
\node (i0) at (0,0) {$\bullet$};
\node (i1) at (0.866,1.5) {$\bullet$};
\node (i2) at (1.732,0) {$\bullet$};
\node (i3) at (2.598,1.5) {$\bullet$};
\node (i4) at (3.464,0) {$\bullet$};
\draw [line width = 0.5mm] (i0.center) to [out=80,in=-140] (i1.center);
\draw [line width = 0.5mm] (i1.center) to [out=-40,in=100] (i2.center);
\draw [line width = 0.5mm] (i2.center) to [out=-160,in=-20] (i0.center);
\draw [line width = 0.5mm] (i2.center) to [out=80,in=-140] (i3.center);
\draw [line width = 0.5mm] (i3.center) to [out=-40,in=100] (i4.center);
\draw [line width = 0.5mm] (i4.center) to [out=-160,in=-20] (i2.center);
\end{tikzpicture}
+ 18 \, \EE \, 
\begin{tikzpicture}[baseline={([yshift=-.5ex]current bounding box.center)},scale=0.5]
\node (i0) at (0,0) {$\bullet$};
\node (i1) at (0.866,1.5) {$\bullet$};
\node (i2) at (1.732,0) {$\bullet$};
\node (i3) at (2.598,1.5) {$\bullet$};
\draw [line width = 0.5mm] (i0.center) to [out=80,in=-140] (i1.center);
\draw [line width = 0.5mm] (i1.center) to [out=-40,in=100] (i2.center);
\draw [line width = 0.5mm] (i2.center) to [out=-160,in=-20] (i0.center);
\draw [line width = 0.5mm] (i2.center) to [out=40,in=-100] (i3.center);
\draw [line width = 0.5mm] (i3.center) to [out=160,in=20] (i1.center);
\draw [line width = 0.5mm] (i1.center) to [out=-80,in=140] (i2.center);
\end{tikzpicture}
+ 6 \, \EE \, 
\begin{tikzpicture}[baseline={([yshift=-.5ex]current bounding box.center)},scale=0.5]
\node (i0) at (0,0) {$\bullet$};
\node (i1) at (0.866,1.5) {$\bullet$};
\node (i2) at (1.732,0) {$\bullet$};
\draw [line width = 0.5mm] (i0.center) to [out=80,in=-140] (i1.center);
\draw [line width = 0.5mm] (i0.center) to [out=40,in=-100] (i1.center);
\draw [line width = 0.5mm] (i1.center) to [out=-40,in=100] (i2.center);
\draw [line width = 0.5mm] (i1.center) to [out=-80,in=140] (i2.center);
\draw [line width = 0.5mm] (i2.center) to [out=-160,in=-20] (i0.center);
\draw [line width = 0.5mm] (i2.center) to [out=160,in=20] (i0.center);
\end{tikzpicture}
\right]
$
\caption{Second moment decomposition of the simple cycle of order $3$. We detail the combinatorial factors.}\label{fig:decomposition_second_moment}
\end{figure}
It is now possible to see why it implies our results $(i)$ and $(ii)$. Indeed, all the diagrams $\mathcal{C}_\alpha$ have 
an expectation that is $\mathcal{O}_N(1)$ by the first moment analysis we performed in Sec.~\ref{subsec:freecum_expectation} and Sec.~\ref{subsec:generic_diagrams_expectation}.
So very generically, for every kind of diagram we described we have:
\begin{align}
  \EE \, G^2 &= \left(\EE \, G\right)^2 + \smallO_N(1). 
\end{align}
Given our  previous computations of the first moments this implies results $(i)$ and $(ii)$.

\subsection{The higher-order moments and their influence on the diagrammatics in the symmetric model}\label{subsec:higher_order_symmetric}

All the results of Sec.~\ref{subsec:freecum_expectation}, Sec.~\ref{subsec:generic_diagrams_expectation} and Sec.~\ref{subsec:concentration_J}
that we derived for the diagrammatics of the Plefka expansion in this context
were valid for diagrams solely made out of the matrix elements $\{J_{i j}\}$, without any additional factors.
However in the Plefka expansions there
generically are possible factors that are the cumulants (or the moments) of the variables $x_i$ at $\beta=0$, see Sec.~\ref{subsec:plefka_sym_models}.
Recall that we denote $\kappa^{(p)}_{i}$ the cumulant of order $p$ of $x_i$ at $\beta=0$.
As an example, consider the diagram of Fig.~\ref{fig:ncactus_appendix}. 
Two possible contributions to the free entropy in our Plefka expansion at order $6$ would be:
\begin{align}
  \label{eq:ex_diag_1}
 &\frac{1}{N}\sum_{\substack{i_1, i_2,i_3 \\ \text{pairwise distincts}}}  J_{i_1 i_2} J_{i_2 i_3} J_{i_3 i_4} J_{i_4 i_1} J_{i_1 i_3}^2\, v_{i_1}^2 v_{i_2} v_{i_3}^2 v_{i_4}, \\
  \label{eq:ex_diag_2}
 &\frac{1}{N}\sum_{\substack{i_1, i_2,i_3 \\ \text{pairwise distincts}}}  J_{i_1 i_2} J_{i_2 i_3} J_{i_3 i_4} J_{i_4 i_1} J_{i_1 i_3}^2\, \kappa^{(4)}_{i_1} v_{i_2} \kappa^{(4)}_{i_3} v_{i_4}.
\end{align}
Note that both these contributions are represented by the diagram of Fig.~\ref{fig:ncactus_appendix}.
One can now clearly see that in order to apply the diagrammatic results
of Sec.~\ref{subsec:freecum_expectation}, Sec.~\ref{subsec:generic_diagrams_expectation} and Sec.~\ref{subsec:concentration_J}
to our Plefka expansion, and justify eq.~\eqref{eq:conjecture_phi_symmetric}, we 
need to make some additional assumptions that we detail here:
\begin{enumerate}[label=\textbf{A.\arabic*},ref=A.\arabic*]
  \item\label{assumption:non_eulerian} From the construction of the diagrams, odd cumulants of order greater or equal to $3$ 
  only appear in \emph{non-Eulerian} graphs. By the results of Sec.~\ref{subsec:generic_diagrams_expectation} and Sec.~\ref{subsec:concentration_J}
   we know that these diagrams, without the moments or cumulants as factors, are negligible. 
  We assume that the possible correlations of the higher order moments of $x_i$ with the matrix elements $\{F_{\mu i}\}$
  are not strong enough to yield thermodynamically relevant corrections to the free entropy.
  \item\label{assumption:irreducible} Eulerian strongly irreducible diagrams that are not simple cycles are negligible by our previous result. We assume 
  that the higher order (even) moments that appear as additional factors do not change their scaling, so that they remain negligible in the thermodynamic limit.
\end{enumerate}
For instance, \ref{assumption:irreducible} implies that the contributions of both eq.~\eqref{eq:ex_diag_1} and eq.~\eqref{eq:ex_diag_2} are 
negligible in the $N \to \infty$ limit, as the diagram of Fig.~\ref{fig:ncactus_appendix} is strongly irreducible but is not a simple cycle.
Concerning the simple cycles, we already know that they are not thermodynamically negligible. So we do not need to assume anything additional
regarding them. Note however that in order to ``resum'' the free entropy of the Plefka expansion, as we did in Sec.~\ref{subsec:plefka_sym_models}, we will need to assume that all the variance 
factors appearing in these simple cycles will be the same, that is $v_i = v$ (at the maximum of the free entropy).

\subsection{Extension to bipartite models}\label{subsec:diagrammatics_bipartite_models}

We detail here how we can treat the diagrams that arise in the Plefka expansion of bipartite models with pairwise interactions
(as the generalized linear models)
that we perform in Sec.~\ref{subsec:plefka_bipartite_models}.
The structure of this section is the following:
\begin{itemize}
  \item We show in Sec.~\ref{subsubsec:generalization_non_square} how we can generalize all the techniques and results 
  already seen in the rest of Sec.~\ref{sec:diagrammatics} to diagrams constructed from a random rectangular matrix $L$ drawn from the rotationally invariant
  ensemble given by Model~\ref{model:nsym_rot_inv}.
  \item In Sec.~\ref{subsubsec:higher_moments} we transpose the assumptions of Sec.~\ref{subsec:higher_order_symmetric}
  to this bipartite case, to deal with the higher-order moments of the fields that can arise in the high-temperature Plefka expansions.
\end{itemize}

\subsubsection{Generalization of the previous results to rectangular matrices}\label{subsubsec:generalization_non_square}

Consider a random matrix $F \in \bbR^{M \times N}$ drawn from a rotation invariant ensemble satisfying Model~\ref{model:nsym_rot_inv}. 
We are interested in the limit $M,N \to \infty$ with a finite ratio $M/N \to \alpha > 0$.
In the Plefka expansions performed for bipartite models in Sec.~\ref{subsubsec:plefka_bipartite_spherical} and Sec.~\ref{subsec:plefka_bipartite_models} 
they appear some quantities that we can represent as \emph{diagrams}.
In this subsection, we construct diagrams as explained Fig.~\ref{fig:nonsquare_diagram}.
For instance, the diagram depicted in this figure represents the quantity:
\begin{align}
 \frac{1}{N} \sum_{\substack{\mu_1, \mu_2, \mu_3 \\ \text{pairwise distincts}}} \sum_{\substack{i_1, i_2,i_3 \\ \text{pairwise distincts}}}  F_{\mu_1 i_1} F_{\mu_1 i_2} F_{\mu_2 i_2} F_{\mu_2 i_3} F_{\mu_3 i_3} F_{\mu_3 i_1} F_{\mu_1 i_3}^2.
\end{align}
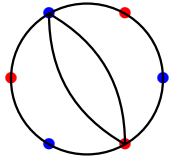
\begin{figure}[t]
  \centering
\captionsetup{justification=centering}
\begin{tikzpicture}[scale=1]
\node[color=red] (i1) at (2,0) {$\bullet$};
\node[color=blue] (i2) at (2.5,0.866) {$\bullet$};
\node[color=red] (i3) at (3.5,0.866) {$\bullet$};
\node[color=blue] (i4) at (4,0) {$\bullet$};
\node[color=blue] (i5) at (2.5,-0.866) {$\bullet$};
\node[color=red] (i6) at (3.5,-0.866) {$\bullet$};
\draw [line width = 0.3mm] (3,0) circle (1);
\draw [line width = 0.3mm] (i2.center) to [out=-90,in=150] (i6.center);
\draw [line width = 0.3mm] (i2.center) to [out=-30,in=90] (i6.center);
\end{tikzpicture}
  \caption{A diagram constructed from the non square matrix $F$. Each blue vertex is an index $\mu$, each red vertex an index $i$. Each edge is a factor $F_{\mu i}$, and we sum on each vertex the resulting quantity.
   Each connected component of the diagram carries a global factor $\frac{1}{N}$.
   Note that there can only be edges between red and blue vertices.}\label{fig:nonsquare_diagram}
\end{figure}
The analogous to Theorem~2 of \cite{guionnet2005fourier} for this setting can be stated. Let $\Sigma \in \bbR^{M \times N}$ be a matrix
such that the empirical spectral distribution of $D \equiv \Sigma^\intercal \Sigma$ converges (almost surely) as $N \to \infty$ to a probability measure $\rho_D$.
Denote $\mathcal{G}_{\alpha,\rho_D}$ the following function: 
\begin{align}
 \mathcal{G}_{\alpha,\rho_D}(x) &\equiv \frac{1}{2}\inf_{\gamma_1,\gamma_2}\left[\alpha \gamma_1 + \gamma_2 - (\alpha-1) \log \gamma_1 - \int \rho_D(\mathrm{d}\lambda) \log(\gamma_1 \gamma_2 - x^2 \lambda)\right] - \frac{1+\alpha}{2}.
\end{align}
Note that this is obviously an even function of $x$, and that $\mathcal{G}_{\alpha,\rho_D}(0) = 0$.
The function $\mathcal{G}_{\alpha,\rho_D}$ stands as an analog to the integrated $\mathcal{R}$-transform $G_{\rho_D}$
for this problem. Since one could expand the function $G_{\rho_D}$ using the free cumulants of $\rho_D$, we analogously expand formally $\mathcal{G}_{\alpha,\rho_D}(x)$
around $x=0$, and define the coefficients $\Gamma_p(\alpha,\rho_D)$ by:
\begin{align}\label{eq:def_Gammap}
 \mathcal{G}_{\alpha,\rho_D}(x) &\equiv \sum_{p=1}^\infty \frac{1}{2p} \Gamma_p(\alpha,\rho_D) x^{2p}.
\end{align}
Recall that for any function $f(x)$, and any symmetric matrix $J = O D O^\intercal \in {\cal S}_N$, one can define $f(J) \equiv O f(D) O^\intercal$, with
$f(D) = \mathrm{Diag}\,(\{f(d_i)\}_{1 \leq i \leq N})$. If one can expand $f(x) = \sum_{k\geq 0} c_k x^k$, this definition is coherent with $f(J) = \sum_{k \geq 0} c_k J^k$.
Our generalization of Theorem~2 of \cite{guionnet2005fourier} is the following: consider a 
rectangular matrix $\Lambda \in \bbR^{M \times N}$ of finite rank $p$. In other terms, one can write its SVD decomposition as:
\begin{align*}
 \Lambda = U_0  \Delta V_0^\intercal,
\qquad
\Delta=
\begin{pmatrix}
\Lambda_p & 0 \\
0 & 0
 \end{pmatrix} \in \bbR^{M \times N},
 \end{align*}
with $\Lambda_p \in \bbR^{p \times p}$ a square diagonal matrix, and $U_0,V_0$ orthogonal matrices. We can now state:
\begin{align}\label{eq:guionnet_bipartite}
  \lim_{N \to \infty} \frac{1}{N} \log \int_{\mathcal{O}(M)} \mathcal{D}U \int_{\mathcal{O}(N)} \mathcal{D}V e^{\sqrt{\alpha}N \mathrm{Tr} \, \left[\Lambda^\intercal U \Sigma V^\intercal\right]} &= \mathrm{Tr} \, \left[\mathcal{G}_{\alpha,\rho_D}(\Lambda_p)\right],
\end{align}
Note first that the right hand side of this equation can also be written as:
\begin{align}
  \mathrm{Tr} \, \left[\mathcal{G}_{\alpha,\rho_D}(\Lambda_p)\right] &=  \mathrm{Tr} \, \left[\mathcal{G}_{\alpha,\rho_D}(\sqrt{\Lambda^\intercal \Lambda})\right] \overset{(a)}{=}  \mathrm{Tr} \, \left[\mathcal{G}_{\alpha,\rho_D}(\sqrt{\Lambda \Lambda^\intercal})\right],
\end{align}
since $\Lambda$ is of finite rank $p$. Equality $(a)$ is true since $\mathcal{G}_{\alpha,\rho_D}(0) =  0$, and the spectrum of $\Lambda^\intercal \Lambda$ and $\Lambda \Lambda^\intercal$ 
only differ by eigenvalues which are all equal to $0$.
Note also that we already derived and used this relation, for $p=1$, when computing the free entropy of the model of Sec.~\ref{subsec:bipartite_spherical_model}, as
stated in eq.~\eqref{eq:bipartite_all_temp}.
Equipped with the definitions of $\mathcal{G}_{\alpha,\rho_D}$, $\Gamma_p(\alpha,\rho_D)$, and eq.~\eqref{eq:guionnet_bipartite}, we can state the counterpart of all our previous results in this rectangular
setting:
\begin{enumerate}[label=\textbf{R.\arabic*},ref=R.\arabic*]
  \item \label{result:simple_cycle}Consider a simple cycle of size $2p$. Then it converges (in $L^2$) to $\Gamma_{p}(\alpha,\rho_D)$ as $N \to \infty$. More precisely we have:
  \begin{align}\label{eq:convergence_L2_cycles_nonsquare}
  \lim_{N \to \infty} \EE \left|\frac{1}{N} \sum_{\substack{\mu_1,\cdots,\mu_p \\ \text{pairwise distincts}}} \sum_{\substack{i_1,\cdots,i_p \\ \text{pairwise distincts}}} F_{\mu_1 i_1} F_{\mu_1 i_2} F_{\mu_2 i_2} \cdots F_{\mu_p i_p} F_{\mu_p i_1} - \Gamma_p(\alpha,\rho_D) \right|^2 &= 0.  
  \end{align}
  \item\label{result:non_eulerian} Any diagram $G$ that is not \emph{Eulerian} will have a vanishing first and second moment as $N \to \infty$:
  \begin{align}\label{eq:vanishing_G2}
    \lim_{N \to \infty} \EE \, G^2 &= 0.
  \end{align}
  \item\label{result:irreducible} Any diagram $G$ that is \emph{strongly irreducible} (that is it can not be disconnected by removing a single vertex) but not a simple cycle, as in Fig.~\ref{fig:nonsquare_diagram}, 
  will also have a vanishing first and second moment.
  \item\label{result:cactus} If $G$ is a \emph{cactus} (a tree made of simple cycles joining at vertices \cite{parisi1995mean}) made of $r$ simple cycles of size $(2p_1,\cdots,2p_r)$, we have:
  \begin{align}
  \lim_{N \to \infty} \EE\, \left|G - \prod_{l=1}^r \Gamma_{p_l}(\alpha,\rho_D)\right|^2 &= 0.     
  \end{align}
\end{enumerate}
Since every argument to show points \ref{result:simple_cycle} to \ref{result:cactus} is straightforwardly given by slightly modifying what we already did in Sec.~\ref{sec:diagrammatics}, 
the rest of Sec.~\ref{subsubsec:generalization_non_square} will be devoted to show point \ref{result:simple_cycle}, and we leave the remaining points for the reader.

\paragraph{Justifying \ref{result:simple_cycle}}
In order to show eq.~\eqref{eq:convergence_L2_cycles_nonsquare}, we proceed as in Sec.~\ref{subsec:freecum_expectation} and 
begin by showing:
\begin{align}\label{eq:simple_cycle_nonsquare_first_moment}
\lim_{N \to \infty} \EE \left[\frac{1}{N} \sum_{\substack{\mu_1,\cdots,\mu_p \\ \text{pairwise distincts}}} \sum_{\substack{i_1,\cdots,i_p \\ \text{pairwise distincts}}} F_{\mu_1 i_1} F_{\mu_1 i_2} F_{\mu_2 i_2} \cdots F_{\mu_p i_p} F_{\mu_p i_1}\right] &= \Gamma_p(\alpha,\rho_D) + \smallO_N(1).
\end{align}
By rotation invariance of the indices we can replace the left-hand side of eq.~\eqref{eq:simple_cycle_nonsquare_first_moment} by a term without summation on the indices,
and as in Sec.~\ref{subsec:freecum_expectation} we obtain at leading order in $N$:
\begin{align*}
 \alpha^p N^{2p-1} \EE \left[F_{1 1} F_{1 2} F_{2 2} \cdots F_{p p} F_{p 1} \right] &= \frac{1}{N} \frac{\partial^{2p}}{\partial b_1 \cdots \partial b_p \partial c_1 \cdots \partial c_{p}} \left[\int \mathcal{D}U \mathcal{D}V e^{\sqrt{\alpha}N \mathrm{Tr}\,\left[M(\bb,\bc)^\intercal U \Sigma V^\intercal \right]}\right]_{\bb, \bc = 0},
\end{align*}
with $M(\bb,\bc)$ a block matrix of rank $p$ defined as:
\begin{align*}
 M(\bb,\bc) &= \begin{pmatrix}
   M_1(\bb,\bc) & 0 \\
   0 & 0  
 \end{pmatrix}, \\
 M_1(\bb,\bc) &\equiv \begin{pmatrix}
  b_1 & c_{1} & 0 & \cdots & 0 & 0 \\
  0 & b_2 & c_{2} & \cdots & 0 & 0 \\
  0 & 0 & b_3 & \cdots & 0 & 0 \\
  \vdots & \vdots & \vdots & \ddots & \vdots & \vdots \\
  0 & 0 & 0 & \cdots & b_{p-1} & c_{p-1} \\
  c_{p} & 0 & 0 & \cdots & 0 & b_p  
 \end{pmatrix}.
\end{align*}
Using eq.~\eqref{eq:guionnet_bipartite}, we obtain:
\begin{align}
  \label{eq:expansion_nonsquare}
\alpha^p N^{2p-1} &\EE \left[F_{1 1} F_{1 2} F_{2 2} \cdots F_{p p} F_{p 1} \right] = \\
 & \frac{1}{N} \frac{\partial^{2p}}{\partial b_1 \cdots \partial b_{p}\partial c_{1} \cdots \partial c_{p}} \left[\exp\left\{N \sum_{n=1}^\infty \frac{\Gamma_n(\alpha,\rho_D)}{2n} \mathrm{Tr}\,\left[(M(\bb,\bc)^\intercal M(\bb,\bc))^n\right]\right\} \right]_{\bb,\bc = 0} + \smallO_N(1), \nonumber
\end{align}
We define the elementary matrices $(T_{ab})_{ll'} = \delta_{al} \delta_{bl'}$ and the symmetric elementary matrices $E_{ab} = T_{ab} + T_{ba}$.
One easily derives that $\frac{\partial^2}{\partial b_1 \partial c_{1}} M(\bb,\bc)^\intercal M(\bb,\bc) = E_{12}$. In a very similar way to what was done 
in Sec.~\ref{subsec:freecum_expectation}, the dominant terms in eq.~\eqref{eq:expansion_nonsquare} will be given 
by the maximum number of differentiations of the exponential term. However, one can see that the exponential can only be differentiated once: since $M(0,0) = 0$, one would need
to create cycles with the matrices $E_{ab}$, and such a cycle can only appear if one derives a single time the exponential term. As in Sec.~\ref{subsec:freecum_expectation}, there are two cycles
that are created  by the successive derivatives: $E_{12} E_{23} \cdots E_{p1}$ and $E_{21} E_{1p} \cdots E_{32}$. These two cycles yield the dominant contribution:
\begin{align*}
 \alpha^p N^{2p-1} &\EE \left[F_{1 1} F_{1 2} F_{2 2} \cdots F_{p p} F_{p 1} \right] \\
 &= \sum_{n=p}^\infty \frac{\Gamma_n(\alpha,\rho_D)}{2} \mathrm{Tr}\, \left[\left(E_{12} E_{23} \cdots E_{p1} + E_{21} E_{1p} \cdots E_{32}\right) (M(0,0)^\intercal M(0,0))^{n-p}\right] + \smallO_N(1) , \\
 &= \frac{1}{2} \Gamma_p(\alpha,\rho_D)  \mathrm{Tr}\, \left(E_{12} E_{23} \cdots E_{p1} + E_{21} E_{1p} \cdots E_{32}\right) + \smallO_N(1) ,\\
 &= \Gamma_p(\alpha,\rho_D) + \smallO_N(1).
\end{align*}
This shows eq.~\eqref{eq:simple_cycle_nonsquare_first_moment}. The exact same arguments as the ones used in Sec.~\ref{subsec:concentration_J} 
show that we have $L^2$ concentration, which means:
\begin{align*}
\lim_{N \to \infty} \EE \left|\frac{1}{N} \sum_{\substack{\mu_1,\cdots,\mu_p \\ \text{pairwise distincts}}} \sum_{\substack{i_1,\cdots,i_p \\ \text{pairwise distincts}}} F_{\mu_1 i_1} F_{\mu_1 i_2} F_{\mu_2 i_2} \cdots F_{\mu_p i_p} F_{\mu_p i_1} - \Gamma_p(\alpha,\rho_D) \right|^2 &= 0,
\end{align*}
which is the point \ref{result:simple_cycle} we wanted to show.

\subsubsection{The higher order moments and their influence on the diagrammatics}\label{subsubsec:higher_moments}

In Sec.~\ref{subsec:plefka_bipartite_models}, we deal with diagrams which have additional factors coming 
from the higher order moments of the fields $x_i$ and $h_\mu$ at $\beta=0$, while all the results \ref{result:simple_cycle} to \ref{result:cactus} that we derived for the diagrammatics of the Plefka expansion in this context
were made solely out of the matrix elements $\{F_{\mu i}\}$, without any additional factors. We adopt the notation of Sec.~\ref{subsec:higher_order_symmetric}
for the higher order cumulants.
Exactly as in Sec.~\ref{subsec:higher_order_symmetric}, when considering the diagram of Fig.~\ref{fig:nonsquare_diagram},
two possible contributions to the free entropy at order $8$ would be:
\begin{align}
  \label{eq:ex_diag_1_bip}
 &\frac{1}{N} \sum_{\substack{\mu_1, \mu_2, \mu_3 \\ \text{pairwise distincts}}} \sum_{\substack{i_1, i_2,i_3 \\ \text{pairwise distincts}}}  F_{\mu_1 i_1} F_{\mu_1 i_2} F_{\mu_2 i_2} F_{\mu_2 i_3} F_{\mu_3 i_3} F_{\mu_3 i_1} F_{\mu_1 i_3}^2 \, (v^h_{\mu_1})^2 v^h_{\mu_2} v^h_{\mu_3} v^x_{i_1} v^x_{i_2} (v^x_{i_3})^2, \\
  \label{eq:ex_diag_2_bip}
 &\frac{1}{N} \sum_{\substack{\mu_1, \mu_2, \mu_3 \\ \text{pairwise distincts}}} \sum_{\substack{i_1, i_2,i_3 \\ \text{pairwise distincts}}}  F_{\mu_1 i_1} F_{\mu_1 i_2} F_{\mu_2 i_2} F_{\mu_2 i_3} F_{\mu_3 i_3} F_{\mu_3 i_1} F_{\mu_1 i_3}^2\,  \kappa^{(4,h)}_{\mu_1} v^h_{\mu_2} v^h_{\mu_3} v^x_{i_1} v^x_{i_2} \kappa^{(4,x)}_{i_3}.
\end{align}
The assumptions we need to make in order to deal with these diagrams are very similar to
\ref{assumption:non_eulerian} and \ref{assumption:irreducible}, and we state them here for completeness:
\begin{enumerate}[label=\textbf{B.\arabic*},ref=B.\arabic*]
  \item\label{assumption:non_eulerian_bip} From the construction of the diagrams, odd moments of order greater or equal to $3$, like $\kappa^{(3,x)}$, 
  only appear in \emph{non-Eulerian} graphs. By \ref{result:non_eulerian} we know that these diagrams (without the moments as factors) 
  are negligible. 
  We assume that the possible correlations of the higher order moments with the matrix elements $\{F_{\mu i}\}$
  are not strong enough to yield thermodynamically relevant corrections to the free entropy.
  \item\label{assumption:irreducible_bip} Eulerian strongly irreducible diagrams that are not simple cycles are negligible by \ref{result:irreducible}. We assume 
  that the higher order (even) moments that appear as additional factors do not change their scaling, so that they remain negligible in the thermodynamic limit.
\end{enumerate}

\subsection{A note on i.i.d.\ matrices}\label{subsec:iid_matrices}

We make here a side comment on i.i.d.\ rectangular matrices. We consider a random matrix $F \in \bbR^{M \times N}$ whose elements $\{F_{\mu i}\}$
are taken i.i.d., such that $\sqrt{N} F_{\mu i}$ is drawn from a given probability measure $\rho$. We assume that $\rho$ has zero mean and finite moments of all orders.
These matrices appear in our study of the GAMP algorithm in Sec.~\ref{subsec:gamp}.
Except if $\rho$ is a Gaussian probability measure, the matrix $F$ is not rotationally invariant, in the sense that it does not satisfy Model~\ref{model:nsym_rot_inv}. 
However, one can still derive strong results on the diagrammatics of $F$.
We still assume \ref{assumption:non_eulerian_bip} and \ref{assumption:irreducible_bip}, that is we assume that the additional factors in the 
diagrams do not change the scaling of a negligible diagram enough to make it thermodynamically relevant.
It is then easy to see that because the $\{F_{\mu i}\}$ are uncorrelated, \emph{all diagrams with order $p \geq 3$ are negligible in the $N \to \infty$ limit}.
The only diagram that remains in the $N \to \infty$ limit is:
\begin{align}
\begin{tikzpicture}[baseline={([yshift=-.5ex]current bounding box.center)}, scale=1.]
\node (i0) at (0,0) {$\bullet$};
\node (i1) at (1,0) {$\bullet$};
\draw [line width = 0.5mm] (i0.center) [in=135,out=45] to (i1.center);
\draw [line width = 0.5mm] (i0.center) [in=-135,out=-45] to (i1.center);
\end{tikzpicture}
&= \frac{1}{N} \sum_{\mu,i} F_{\mu i}^2 v^h_\mu v^x_i.
\end{align}
In particular, we can only retain this diagram and apply the results of our Plefka expansions in this case 
as well, despite the fact that $F$ is not rotationally invariant.

\section*{Acknowledgments}

The authors would like to thank Yoshiyuki Kabashima, Marylou Gabri\'e, Bertrand Eynard and Jorge Kurchan for many insightful discussions.  
This work is supported by ``Investissements d'Avenir" LabEx PALM
(ANR-10-LABX-0039-PALM) (EquiDystant project, L. Foini), as well as by the
French Agence Nationale de la Recherche under grant
ANR-17-CE23-0023-01 PAIL, the European Union’s Horizon 2020 Research and Innovation Program 714608-SMiLe, and the ERC 307087 SPARCS.
Additional funding is acknowledged by AM from `Chaire de recherche sur les mod\`eles et sciences des donn\'ees', Fondation CFM pour la Recherche-ENS.

\bibliographystyle{alpha}
\bibliography{refs}

\newpage
\appendix
\section{The Georges-Yedidia formalism} \label{sec:appendix_operator_U}

In this section we recall the formalism of \cite{georges1991expand} which allows to systematically
expand the free entropy around $\beta=0$, at fixed values of the first and second moments of the variables.
We consider here a generic Hamiltonian $H_J(\bx)$, with variables $(x_1,\cdots,x_N)$. 
We fix the first and second moments $\braket{x_i}_\beta = m_i$ and $\braket{(x_i - m_i)^2}_\beta = v_i$, using Lagrange parameters 
respectivelly denoted $\lambda_i(\beta)$ and $\gamma_i(\beta)$. Recall that $\braket{\cdot}_\beta$ stands for the expectation over the Gibbs 
measure at inverse temperature $\beta$, constrained by the Lagrange multipliers $\lambda_i$ and $\gamma_i$.
From now on, we will drop the $\beta$ supbscript.

We introduce the operator $U$ from Appendix~A of \cite{georges1991expand}:
\begin{equation}
U(\beta,J) = H_{J} - \braket{H_{J}} + \sum_{i=1}^N \partial_\beta\lambda_i(\beta) (x_i - m_i) + \frac{1}{2} \sum_{i=1}^N\partial_\beta \gamma_i(\beta) \left[x_i^2 - v_i - m_i^2 \right].
\end{equation}
Then the derivative of the thermal average of any observable $O$ is given by
\begin{equation}
\frac{\partial \langle O\rangle}{\partial \beta} = \left\langle \frac{\partial O}{\partial \beta} \right\rangle - \langle O U \rangle.
\end{equation}
As the Lagrange multipliers $\lambda_i$ and $\gamma_i$ have been introduced to
fix the average of $x_i$ and its variance one has the following easy identity, valid at any $\beta$:
\begin{equation}
 \langle U \rangle = 0.
\end{equation}
Moreover, given that the magnetizations $\{m_i\}$ and the variances $\{v_i\}$ do not depend on $\beta$ one has:
\begin{equation}
0 = \frac{\partial \langle x_i\rangle}{\partial \beta} =  - \langle x_i U \rangle = - \langle (x_i-m_i) U \rangle.
\end{equation}
\begin{equation}
0 = \frac{\partial \langle x_i^2\rangle}{\partial \beta} =  - \langle x_i^2 U \rangle = -  \langle (x_i^2-v_i-m_i^2) U \rangle.
\end{equation}
Considering the previous results one can compute the derivative of $U$:
\begin{align}
\frac{\partial U}{\partial \beta} &=  \langle H_J U \rangle + \sum_{i=1}^N \partial_\beta^2\lambda_i(\beta) (x_i - m_i) + \frac{1}{2} \sum_{i=1}^N\partial_\beta^2 \gamma_i(\beta) \left[x_i^2 - v_i - m_i^2 \right], \nonumber  \\ 
&=  \langle U^2 \rangle + \sum_{i=1}^N \partial_\beta^2\lambda_i(\beta) (x_i - m_i) + \frac{1}{2} \sum_{i=1}^N\partial_\beta^2 \gamma_i(\beta) \left[x_i^2 - v_i - m_i^2 \right].
\end{align}
Equipped with these relations one can compute the derivatives of the free entropy up to fourth order.
Recall that $\Phi_J$ is the \emph{intensive} free entropy of the system. We obtain its derivatives:
\begin{align}
\frac{\partial N \Phi_J}{\partial \beta} &=  - \braket{H_{J}} + \sum_{i=1}^N \partial_\beta\lambda_i(\beta) \langle(x_i - m_i)\rangle + \frac{1}{2} \sum_{i=1}^N\partial_\beta \gamma_i(\beta) \langle\left[x_i^2 - v_i - m_i^2 \right] \rangle = - \langle H_J \rangle, \\
\label{Eq_U2}
\frac{\partial^2 N \Phi_J}{\partial \beta^2} &=  \braket{H_{J} U} =  \braket{U^2}, \\
\frac{\partial^3 N \Phi_J}{\partial \beta^3} &= -  \braket{U^3} + 2 \braket{U \frac{\partial U}{\partial\beta}} = -  \braket{U^3}, \\
\label{Eq_U4}
\frac{\partial^4 N \Phi_J}{\partial \beta^4} &=  \braket{U^4} - 3  \braket{U^2 \frac{\partial U}{\partial\beta}} =  \braket{U^4} - 3  \braket{U^2}^2
 - 3 \sum_{i=1}^N \partial_\beta^2\lambda_i(\beta) \braket{U^2 (x_i - m_i)}  \nonumber \\
 &- \frac{3}{2} \sum_{i=1}^N\partial_\beta^2 \gamma_i(\beta) 
 \braket{U^2 \left[x_i^2 - v_i - m_i^2 \right] }.
\end{align}
These relations are valid at any inverse temperature $\beta$ ! In the main sections we derive the explicit
expression of the operator $U$ for our particular choice of Hamiltonian, and we will use these relations (and show how to conjecture their higher order counterparts)
 to compute the expansion of the free entropy around $\beta=0$.
\section{Order \texorpdfstring{$4$}{4} of the Plefka expansion for Sec.~\ref{subsec:sym_spherical_model}.} \label{sec:appendix_order4_spherical}

We start from eq.~\eqref{Eq_U4} in Appendix~\ref{sec:appendix_operator_U}, that we consider at $\beta=0$ :
\begin{align}\label{eq:Phi4_spherical}
 N \frac{\partial^4 \Phi_J(\beta)}{\partial \beta^4} &= \braket{U^4}_0 - 3 \braket{U^2}_0^2 - 3 \sum_{i=1}^N \partial^2_\beta \lambda_i \braket{U^2(x_i-m_i)}_0 - \frac{3}{2} \sum_{i=1}^N \partial^2_\beta \gamma_i \braket{U^2(x_i^2-m_i^2-v_i)}_0.
\end{align}
For simplicity we will denote $\tilde{x}_i \equiv (x_i-m_i)$, so that at $\beta=0$ the $\{\tilde{x_i}\}$ variables are Gaussian variables with mean $\braket{\tilde{x}_i} = 0$ and covariance $\braket{\tilde{x_i} \tilde{x_j}} = \delta_{ij} v_i$. 
In particular eq.~\eqref{eq:U_spherical} becomes:
\begin{align}
   U = - \frac{1}{2} \sum_{i \neq j} J_{ij} \tilde{x}_i \tilde{x}_j.
\end{align}
From the calculation at order $2$ we obtain the following relation that we can represent diagrammatically:
\begin{align}\label{eq:order4_spherical_1}
   -3\braket{U^2}^2_0 &= - \frac{3}{4} \left[\sum_{i \neq j} J_{ij}^2 v_i v_j\right]^2 = -\frac{3N}{4} 
\left[\begin{tikzpicture}[baseline={([yshift=-.5ex]current bounding box.center)}, scale=1.]
\node (i0) at (0,0) {$\bullet$};
\node (i1) at (1,0) {$\bullet$};
\draw [line width = 0.5mm] (i0.center) [in=135,out=45] to (i1.center);
\draw [line width = 0.5mm] (i0.center) [in=-135,out=-45] to (i1.center);
\end{tikzpicture}
\right]^2.
\end{align}
We now turn to the next term:
\begin{align}
   &- \frac{3}{2} \sum_{i=1}^N \partial^2_\beta \gamma_i \braket{U^2(x_i^2-m_i^2-v_i)}_0 - 3 \sum_{i=1}^N \partial^2_\beta \lambda_i \braket{U^2(x_i-m_i)}_0 \nonumber \\
   &=  - \frac{3}{2} \sum_{i=1}^N \partial^2_\beta \gamma_i \braket{U^2(\tilde{x}_i^2-v_i)}_0 - 3 \sum_{i=1}^N \braket{U^2 \tilde{x}_i \left(\partial^2_\beta \lambda_i + m_i \partial^2_\beta \gamma_i \right)}_0 , \nonumber \\
   &\overset{(a)}{=} - \frac{3}{2} \sum_{i=1}^N \partial^2_\beta \gamma_i \braket{U^2(\tilde{x}_i^2-v_i)}_0 - 3 N\sum_{i=1}^N \braket{U^2 \tilde{x}_i  \frac{\partial}{\partial m_i} \left(\partial^2_\beta \Phi_J \right) }_0 , \nonumber \\ 
   &\overset{(b)}{=} - \frac{3}{2} \sum_{i=1}^N \partial^2_\beta \gamma_i \braket{U^2(\tilde{x}_i^2-v_i)}_0 + \smallO_N(1).
\end{align}
In $(a)$ we used the Maxwell equation eq.~\eqref{eq:maxwell_spherical_2}, while in $(b)$ we made use of the fact that the order $2$ of the free entropy does 
not depend on the $m_i$ variables.
We obtain:
\begin{align*}
  -\frac{3}{2} \sum_{i=1}^N \partial^2_\beta \gamma_i \braket{U^2(\tilde{x}_i^2-v_i)}_0 &=  \frac{3}{2} \left[\sum_{i \neq j} J_{ij}^2 v_i v_j\right]^2 - 3 \sum_{i \neq j} J_{ij}^2 v_j \braket{U^2 \tilde{x}_i^2}_0, 
\end{align*}
in which we used the Maxwell relation eq.~\eqref{eq:maxwell_spherical_1} to compute $\partial^2_\beta \gamma_i$. 
To compute $\braket{U^2 \tilde{x}_i^2}_0$, we expand:
\begin{align*}
   \braket{U^2 \tilde{x}_i^2}_0 &= \frac{1}{4} \sum_{i_1 \neq j_1} \sum_{i_2 \neq j_2} J_{i_1 j_1} J_{i_2 j_2} \braket{\tilde{x}_i^2\tilde{x}_{i_1}\tilde{x}_{j_1}\tilde{x}_{i_2}\tilde{x}_{j_2}}_0.
\end{align*}
We can then use Wick's theorem to simplify the average. There are two types of contractions (or pairings):
\begin{itemize}
   \item Contractions that do not mix indices $i_1,j_1,i_2,j_2$ with $i$. There are $2$ such possible pairings and in $\frac{\partial^4 \Phi_J}{\partial \beta^4}$ they give rise to the diagram $
\left[N\begin{tikzpicture}[baseline={([yshift=-.5ex]current bounding box.center)}, scale=1.]
\node (i0) at (0,0) {$\bullet$};
\node (i1) at (1,0) {$\bullet$};
\draw [line width = 0.5mm] (i0.center) [in=135,out=45] to (i1.center);
\draw [line width = 0.5mm] (i0.center) [in=-135,out=-45] to (i1.center);
\end{tikzpicture}
\right]^2$.
   \item Contractions that mix these indices with $i$. There are all equivalent and there are $8$ of them, which gives rise to the diagram:
$N\begin{tikzpicture}[baseline={([yshift=-.5ex]current bounding box.center)}, scale=1.]
\node (i0) at (0,0) {$\bullet$};
\node (i1) at (1,0) {$\bullet$};
\node (i2) at (-1,0) {$\bullet$};
\draw [line width = 0.5mm] (i0.center) [in=135,out=45] to (i1.center);
\draw [line width = 0.5mm] (i0.center) [in=-135,out=-45] to (i1.center);
\draw [line width = 0.5mm] (i2.center) [in=135,out=45] to (i0.center);
\draw [line width = 0.5mm] (i2.center) [in=-135,out=-45] to (i0.center);
\end{tikzpicture}$.
\end{itemize}
In the end, we reach:
\begin{align*}
   \braket{U^2 \tilde{x}_i^2}_0 &= \frac{v_i}{2} \sum_{k\neq l} J_{k l}^2 v_k v_l + 2 \sum_{k (\neq i)} J_{ik}^2 v_i^2 v_k .
\end{align*}
We can finally compute the term we were seeking:
\begin{align}\label{eq:order4_spherical_2}
 - \frac{3}{2} \sum_{i=1}^N \partial^2_\beta \gamma_i \braket{U^2(\tilde{x}_i^2-v_i)}_0 &= - 6 N\hspace{-0.5cm} \sum_{\substack{i,j,k \\ \text{pairwise distincts}}} J_{ij}^2 J_{ik}^2 v_i^2 v_j v_k = -6 
\begin{tikzpicture}[baseline={([yshift=-.5ex]current bounding box.center)}, scale=1.]
\node (i0) at (0,0) {$\bullet$};
\node (i1) at (1,0) {$\bullet$};
\node (i2) at (-1,0) {$\bullet$};
\draw [line width = 0.5mm] (i0.center) [in=135,out=45] to (i1.center);
\draw [line width = 0.5mm] (i0.center) [in=-135,out=-45] to (i1.center);
\draw [line width = 0.5mm] (i2.center) [in=135,out=45] to (i0.center);
\draw [line width = 0.5mm] (i2.center) [in=-135,out=-45] to (i0.center);
\end{tikzpicture}.
\end{align}
Note that in this last equation we could add the hypothesis that $j \neq k$. Indeed the term $j=k$ would give rise to the diagram $N
\begin{tikzpicture}[baseline={([yshift=-.5ex]current bounding box.center)}, scale=1.]
\node (i0) at (0,0) {$\bullet$};
\node (i1) at (1,0) {$\bullet$};
\draw [line width = 0.5mm] (i0.center) [in=125,out=55] to (i1.center);
\draw [line width = 0.5mm] (i0.center) [in=165,out=15] to (i1.center);
\draw [line width = 0.5mm] (i0.center) [in=-125,out=-55] to (i1.center);
\draw [line width = 0.5mm] (i0.center) [in=-165,out=-15] to (i1.center);
\end{tikzpicture}
$, which is negligible since for every $i \neq j$
one has $J_{ij} = \mathcal{O}(\frac{1}{\sqrt{N}})$ as a consequence of rotational invariance (Model~\ref{model:sym_rot_inv}).
 We finally turn to the computation of $\braket{U^4}_0$:
\begin{align*}
   \braket{U^4}_0 &= \frac{1}{16} \prod_{\alpha=0}^3 \left[\sum_{i_\alpha \neq j_{\alpha}} J_{i_\alpha j_\alpha}\right] \, \Big\langle\prod_{\alpha=0}^3 \tilde{x}_{i_\alpha} \tilde{x}_{j_\alpha} \Big\rangle_0.
\end{align*}
The possible contractions arising from Wick's theorem yield several contributions, that we can represent by diagrams. 
Note that these diagrams are very different from the diagrams that we described for instance in Fig.~\ref{fig:diagrams}, and are merely 
a way to visualize the contractions in Wick's theorem.
The first column contains the $i_\alpha$ indices and the second contains the $j_\alpha$.
Note that we always have $i_\alpha \neq j_\alpha$. The two different types of contractions are represented as Fig.~\ref{fig:first_diag} and Fig.~\ref{fig:second_diag}. They are $12$ possible contractions of the type of Fig.~\ref{fig:first_diag}
and $48$ of Fig.~\ref{fig:second_diag}. We also take into account that in the pairings of Fig.~\ref{fig:second_diag} indices are not all necessarily pairwise distinct. 
Discarding terms that are $\smallO_N(N)$, we finally reach:
   \begin{figure}[t]
      \centering
      \begin{subfigure}[b]{0.22\textwidth}
         \centering
\begin{tikzpicture}
\draw (0,0) node {$\bullet$} ;
\draw (0.5,0) node {$\bullet$} ;
\draw (0,-0.5) node {$\bullet$} ;
\draw (0.5,-0.5) node {$\bullet$} ;
\draw (0,-1) node {$\bullet$} ;
\draw (0.5,-1) node {$\bullet$} ;
\draw (0,-1.5) node {$\bullet$} ;
\draw (0.5,-1.5) node {$\bullet$} ;
\draw [line width = 0.5mm](0,0) -- (0,-0.5);
\draw [line width = 0.5mm](0.5,0) -- (0.5,-0.5);
\draw [line width = 0.5mm](0,-1) -- (0,-1.5);
\draw [line width = 0.5mm](0.5,-1) -- (0.5,-1.5);
\end{tikzpicture}
         \caption{}\label{fig:first_diag}
      \end{subfigure}
      \begin{subfigure}[b]{0.22\textwidth}
         \centering
\begin{tikzpicture}
\draw (0,0) node {$\bullet$} ;
\draw (0.5,0) node {$\bullet$} ;
\draw (0,-0.5) node {$\bullet$} ;
\draw (0.5,-0.5) node {$\bullet$} ;
\draw (0,-1) node {$\bullet$} ;
\draw (0.5,-1) node {$\bullet$} ;
\draw (0,-1.5) node {$\bullet$} ;
\draw (0.5,-1.5) node {$\bullet$} ;
\draw [line width = 0.5mm](0,0) -- (0,-0.5);
\draw [line width = 0.5mm](0.5,0) -- (0,-1);
\draw [line width = 0.5mm](0.5,-0.5) -- (0,-1.5);
\draw [line width = 0.5mm](0.5,-1) -- (0.5,-1.5);
\end{tikzpicture}
         \caption{}\label{fig:second_diag}
      \end{subfigure}
\caption{Different types of diagrams of indices appearing in $\braket{U^4}_0$}
   \end{figure}

   \begin{align}
     \braket{U^4}_0 &= \frac{3}{4} \left[\sum_{i \neq j}J_{ij}^2 v_i v_j\right]^2 + 6 \hspace{-0.5cm}\sum_{\substack{i,j,k \\ \text{pairwise distincts }}} J_{ij}^2 J_{ik}^2 v_i^2 v_j v_k + 3 \hspace{-0.5cm}\sum_{\substack{i_0,i_1,i_2,i_3 \\ \text{pairwise distincts }}} J_{i_0 i_1} J_{i_1 i_2} J_{i_2 i_3} J_{i_3 i_0} v_{i_0} v_{i_1}v_{i_2}v_{i_3}, \nonumber\\
                    \label{eq:order4_spherical_3}
                    &= \frac{3}{4} 
\left[N\begin{tikzpicture}[baseline={([yshift=-.5ex]current bounding box.center)}, scale=1.]
\node (i0) at (0,0) {$\bullet$};
\node (i1) at (1,0) {$\bullet$};
\draw [line width = 0.5mm] (i0.center) [in=135,out=45] to (i1.center);
\draw [line width = 0.5mm] (i0.center) [in=-135,out=-45] to (i1.center);
\end{tikzpicture}
\right]^2 + 6 N
\begin{tikzpicture}[baseline={([yshift=-.5ex]current bounding box.center)}, scale=1.]
\node (i0) at (0,0) {$\bullet$};
\node (i1) at (1,0) {$\bullet$};
\node (i2) at (-1,0) {$\bullet$};
\draw [line width = 0.5mm] (i0.center) [in=135,out=45] to (i1.center);
\draw [line width = 0.5mm] (i0.center) [in=-135,out=-45] to (i1.center);
\draw [line width = 0.5mm] (i2.center) [in=135,out=45] to (i0.center);
\draw [line width = 0.5mm] (i2.center) [in=-135,out=-45] to (i0.center);
\end{tikzpicture}.
  + 3 N
\begin{tikzpicture}[baseline={([yshift=-.5ex]current bounding box.center)}, scale=0.5]
\node (i0) at (1,0) {$\bullet$};
\node (i1) at (-1,0) {$\bullet$};
\node (i2) at (0,-1) {$\bullet$};
\node (i3) at (0,1) {$\bullet$};
\draw [line width = 0.5mm] (0,0) circle (1cm);
\end{tikzpicture}.
   \end{align}
   Finally, combining eq.~\eqref{eq:order4_spherical_1}, eq.~\eqref{eq:order4_spherical_2}, and eq.~\eqref{eq:order4_spherical_3} to plug them into eq.~\eqref{eq:Phi4_spherical}, we reach:
   \begin{align}
     \frac{1}{4!} \frac{\partial^4 \Phi_J}{\partial \beta^4} &= \frac{1}{8 N} \hspace{-0.5cm}\sum_{\substack{i_0,i_1,i_2,i_3 \\ \text{pairwise distincts }}} J_{i_0 i_1} J_{i_1 i_2} J_{i_2 i_3} J_{i_3 i_0} v_{i_0} v_{i_1}v_{i_2}v_{i_3} + \smallO_N(1), \\
     &= 
\frac{1}{8}\begin{tikzpicture}[baseline={([yshift=-.5ex]current bounding box.center)}, scale=0.5]
\node (i0) at (1,0) {$\bullet$};
\node (i1) at (-1,0) {$\bullet$};
\node (i2) at (0,-1) {$\bullet$};
\node (i3) at (0,1) {$\bullet$};
\draw [line width = 0.5mm] (0,0) circle (1cm);
\end{tikzpicture},
   \end{align}
   which is what we wanted to show !
\section{Some definitions and reminders of random matrix theory}\label{sec:appendix_rmt}

For a complete mathematical introduction to random matrix theory, the reader can refer to \cite{mehta2004random, anderson2010introduction}, while a more practical approach is carried out in \cite{tulino2004random}.
Let us consider a compactly supported probability measure $\mu$ on $\bbR$. We denote $\lambda_{\rm max} \equiv \max \mathrm{supp}(\mu)$ and $\lambda_{\rm min} \equiv \min \mathrm{supp}(\mu)$.
One can introduce the \emph{Stieltjes} transform of $\mu$ as:
\begin{align}\label{eq:def_stieltjes}
\mathcal{S}_\mu (z) \equiv \mathbb{E} \left[\frac{1}{X-z}\right] = \int_{\mathbb{R}} \mu(\mathrm{d}\lambda) \frac{1}{\lambda - z}.
\end{align}
On $(\lambda_{\rm max},+\infty)$, $\mathcal{S}_\mu$ induces a strictly increasing $\mathcal{C}^\infty$ diffeomorphism ${\cal S}_\mu : (\lambda_{\rm max},\infty) \hookrightarrow (-\infty,0)$, and 
we denote its inverse $\mathcal{S}^{-1}_\mu$. 
One can then introduce the \emph{${\cal R}$-transform} of $\mu$ as: 
\begin{align}
\forall z > 0, \quad \mathcal{R}_\mu(z) &\equiv \mathcal{S}_\mu^{-1}(-z) - \frac{1}{z}.
\end{align}
$\mathcal{R}_\mu(z)$ is \emph{a priori} defined for $-z \in \mathcal{S}_\mu\left[(\lambda_{\rm min},\lambda_{\rm max})^c\right]$
and admits an analytical expansion around $z = 0$.
We can write this expansion as:
\begin{align}\label{eq:expansion_R_freecum}
\mathcal{R}_\mu(z) = \sum_{k=0}^\infty c_{k+1}(\mu) \, z^k.
\end{align} 
The elements of the sequence $\{c_k(\mu)\}_{k \in \bbN^\star}$ are called the \emph{free cumulants} of $\mu$.
In particular, one can show that $c_1(\mu) = \EE_\mu(X)$ and $c_2(\mu) = \EE_\mu(X^2) - (\EE_\mu X)^2$.
The free cumulants can be recursively computed from the moments of the measure using the so-called \emph{free cumulant equation}: 
\begin{align}\label{eq:free_cum_formula}
\forall k \in \mathbb{N}^*, \quad \mathbb{E}_\mu X^k = \sum_{m = 1}^k c_m(\mu) \sum_{\substack{\{k_i\}_{i \in [|1,m|]} \\ \text{s.t } \sum_i k_i = k}} \prod_{i=1}^m \mathbb{E}_\mu X^{k_i -1}.
\end{align}
For practical purposes, for all $x \in (-\mathcal{S}_\mu(\lambda_{\rm min}),-\mathcal{S}_\mu(\lambda_{\rm max}))$ we can define:
\begin{align}\label{eq:def_G}
   G_\mu(x) \equiv \frac{1}{2} \int_0^x {\rm d}u \, \mathcal{R}_\mu(u). 
\end{align}

\section{Technical derivations and generalizations of the diagrammatics}\label{sec:generalizations_expansions}

We detail here some extensions of the results of Sec.~\ref{sec:diagrammatics}.
In Sec.~\ref{subsec:complex}, we explain how to transpose these results to Hermitian matrix models, and in
Sec.~\ref{subsec:diverging} we show how to extend some of them to diagrams of diverging size (as $N \to \infty$).

\subsection{Hermitian matrix model}\label{subsec:complex} 
\usetikzlibrary{decorations.markings}

One can generalize the results of Sec.~\ref{sec:diagrammatics} to the following
Hermitian matrix model, similar to Model~\ref{model:sym_rot_inv}:
\begin{model}\label{model:complex}
   Let $N \geq 1$ and $\mathcal{U}(N)$ be the unitary group.
Let $J \in \bbC^{N \times N}$ be a random matrix generated as $J = U D U^\dagger$ with $U \in \mathcal{U}(N)$ drawn uniformly and independently from $D$. 
$D$ is a real diagonal matrix such that its empirical spectral distribution 
   $\rho^{(N)}_D \equiv \frac{1}{N} \sum_{i=1}^N \delta_{d_i}$ converges (almost surely) as $N \to \infty$ a.s. to a probability distribution $\rho_D$ with compact support.
   The smallest and largest eigenvalue of $D$ are assumed to converge almost surely to the infimum and supremum of the support of $\rho_D$.
\end{model}
Note that the diagrams are now directed, as $J_{ij} = \overline{J_{ji}}$. We describe such diagrams in Fig.~\ref{fig:diagrams_hermitian}.
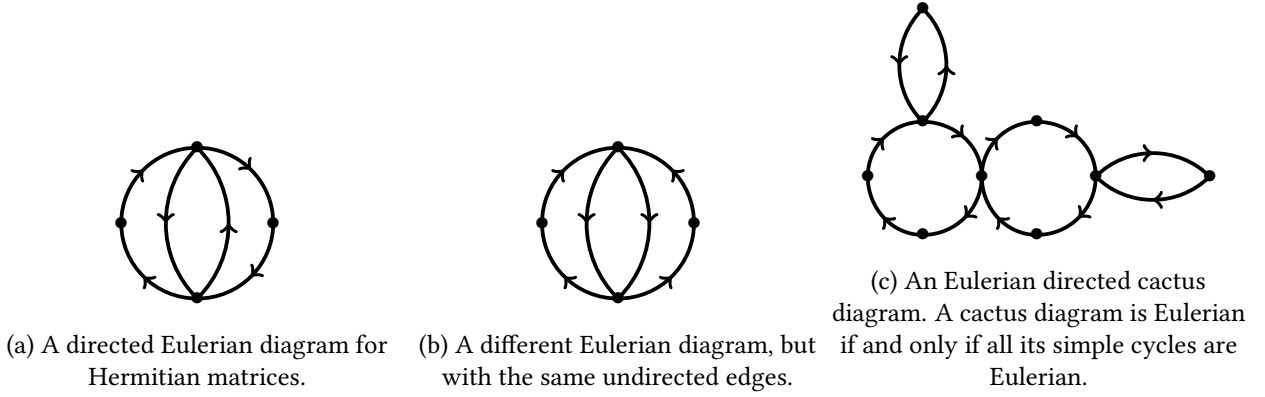
\begin{figure}[t]
 \centering
\captionsetup{justification=centering}
\begin{subfigure}[b]{0.32\textwidth}
   \centering
\begin{tikzpicture}[decoration={markings, mark=at position 0.5 with {\arrow{>}}},scale=1.]
\node (i1) at (0,2) {$\bullet$};
\node (i2) at (1,3) {$\bullet$};
\node (i3) at (2,2) {$\bullet$};
\node (i4) at (1,1) {$\bullet$};
\draw [postaction={decorate},line width = 0.5mm] (i1.center) to [out=90,in=-180] (i2.center);
\draw [postaction={decorate},line width = 0.5mm] (i2.center) to [out=-0,in=90] (i3.center);
\draw [postaction={decorate},line width = 0.5mm] (i3.center) to [out=-90,in=0] (i4.center);
\draw [postaction={decorate},line width = 0.5mm] (i4.center) to [out=180,in=-90] (i1.center);
\draw [postaction={decorate},line width = 0.5mm] (i4.center) to [out=45,in=-45] (i2.center);
\draw [postaction={decorate},line width = 0.5mm] (i2.center) to [in=135,out=-135] (i4.center);
\end{tikzpicture}
\caption{A directed Eulerian diagram for Hermitian matrices.}\label{fig:hermitian_1}
\end{subfigure}
\begin{subfigure}[b]{0.32\textwidth}
   \centering
\begin{tikzpicture}[decoration={markings, mark=at position 0.5 with {\arrow{>}}},scale=1.]
\node (i1) at (0,2) {$\bullet$};
\node (i2) at (1,3) {$\bullet$};
\node (i3) at (2,2) {$\bullet$};
\node (i4) at (1,1) {$\bullet$};
\draw [postaction={decorate},line width = 0.5mm] (i1.center) to [out=90,in=-180] (i2.center);
\draw [postaction={decorate},line width = 0.5mm] (i3.center) to [in=-0,out=90] (i2.center);
\draw [postaction={decorate},line width = 0.5mm] (i4.center) to [in=-90,out=0] (i3.center);
\draw [postaction={decorate},line width = 0.5mm] (i4.center) to [out=180,in=-90] (i1.center);
\draw [postaction={decorate},line width = 0.5mm] (i2.center) to [in=45,out=-45] (i4.center);
\draw [postaction={decorate},line width = 0.5mm] (i2.center) to [in=135,out=-135] (i4.center);
\end{tikzpicture}
\caption{A different Eulerian diagram, but with the same undirected edges.}\label{fig:hermitian_2}
\end{subfigure}
\begin{subfigure}[b]{0.32\textwidth}
   \centering
\begin{tikzpicture}[decoration={markings, mark=at position 0.5 with {\arrow{>}}},scale=1.]
\node (i0) at (0,3) {$\bullet$};
\node (i1) at (1.5,3) {$\bullet$};
\node (i2) at (3,3) {$\bullet$};
\node (i3) at (0.725,3.725) {$\bullet$};
\node (i4) at (0.725,2.225) {$\bullet$};
\node (i5) at (2.225,3.725) {$\bullet$};
\node (i6) at (2.225,2.225) {$\bullet$};
\node (i7) at (4.5,3) {$\bullet$};
\node (i8) at (0.725,5.225) {$\bullet$};
\draw [postaction={decorate},line width = 0.5mm] (i0.center) to [out=90,in=-180] (i3.center);
\draw [postaction={decorate},line width = 0.5mm] (i3.center) to [out=-0,in=90] (i1.center);
\draw [postaction={decorate},line width = 0.5mm] (i1.center) to [out=-90,in=0] (i4.center);
\draw [postaction={decorate},line width = 0.5mm] (i4.center) to [out=180,in=-90] (i0.center);
\draw [postaction={decorate},line width = 0.5mm] (i1.center) to [out=90,in=-180] (i5.center);
\draw [postaction={decorate},line width = 0.5mm] (i5.center) to [out=-0,in=90] (i2.center);
\draw [postaction={decorate},line width = 0.5mm] (i2.center) to [out=-90,in=0] (i6.center);
\draw [postaction={decorate},line width = 0.5mm] (i6.center) to [out=180,in=-90] (i1.center);
\draw [postaction={decorate},line width = 0.5mm] (i2.center) to [out=45,in=135] (i7.center);
\draw [postaction={decorate},line width = 0.5mm] (i7.center) to [in=-45,out=-135] (i2.center);
\draw [postaction={decorate},line width = 0.5mm] (i3.center) to [out=45,in=-45] (i8.center);
\draw [postaction={decorate},line width = 0.5mm] (i8.center) to [in=135,out=-135] (i3.center);
\end{tikzpicture}
\caption{An Eulerian directed cactus diagram. A cactus diagram is Eulerian if and only if all its 
simple cycles are Eulerian.}\label{fig:hermitian_3}
\end{subfigure}
\caption{Diagrams similar to the ones of Fig.~\ref{fig:diagrams_cactus_ncactus}, but for Hermitian matrices. Note that the diagrams
of Fig.~\ref{fig:hermitian_1} and Fig.~\ref{fig:hermitian_2} are different because of the different directions of the edges, but that both are Eulerian.}\label{fig:diagrams_hermitian}
\end{figure}
 For instance, the diagram of Fig.~\ref{fig:hermitian_1} is equal to:
 \begin{align}
   \frac{1}{N} \sum_{\substack{i_1,\cdots,i_4 \\ \text{pairwise distincts}}} J_{i_1 i_2}J_{i_2 i_3}J_{i_3 i_4}J_{i_4 i_1} |J_{i_2 i_4}|^2,
 \end{align}
 while the diagram of Fig.~\ref{fig:hermitian_2} represents the quantity:
 \begin{align}
   \frac{1}{N} \sum_{\substack{i_1,\cdots,i_4 \\ \text{pairwise distincts}}} J_{i_1 i_2} \overline{J_{i_2 i_3}} \overline{J_{i_3 i_4}}J_{i_4 i_1} J_{i_2 i_4}^2.
 \end{align}
 In the complex case, an \emph{Eulerian} graph is similarly defined as a graph in which one can construct 
 a cyclic path (following the directions of the edges) that visits each edge exactly once.
 Note that a \emph{simple cycle} is defined such that the arrows on its edges themselves form a cycle, like the 
 constituent cycles of Fig.~\ref{fig:hermitian_3}.
 We describe the main results we get, using the same kind of techniques as used in Sec.~\ref{subsec:freecum_expectation}:
 \begin{enumerate}
   \item[$(i)$] Only Eulerian diagrams contribute in the $N \to \infty$ limit.
   \item[$(ii)$] Consider a simple cycle ${\cal C}_p$ with $p$ vertices. Then this diagram converges in the $N \to \infty$
   limit to the free cumulant $c_p(\rho_D)$ in $L^2$ norm, as in the real case. More precisely:
   \begin{align}
       \lim_{N\to \infty} \EE\, \left| \frac{1}{N} \sum_{\substack{i_1,\cdots,i_p \\ \text{pairwise distincts}}} \left(U D U^\dagger\right)_{i_1 i_2} \left(U D U^\dagger\right)_{i_2 i_3} \cdots \left(U D U^\dagger\right)_{i_p i_1} -  c_p(\rho_D)\right|^2 &= 0.
   \end{align}
   \item[$(iii)$] Any Eulerian strongly irreducible diagram that is not a simple cycle will be negligible 
   in the $N \to \infty$ limit (in $L^2$ norm).
   \item[$(iv)$] Any Eulerian cactus diagram (like in Fig.~\ref{fig:hermitian_3}) will converge in $L^2$ to the products of the free 
   cumulants of $\rho_D$ corresponding to each one of its constituent simple cycles.
 \end{enumerate}
 These results are straightforward generalizations of the ones obtained for real matrices in Sec.~\ref{sec:diagrammatics}.
  For completeness, we describe how to show a weaker version of $(ii)$, and leave other statements as easy generalizations of Sec.~\ref{sec:diagrammatics}.
 Let us now show that the limit of the expectation of the term in $(ii)$ is the free cumulant, as in Sec.~\ref{subsec:freecum_expectation}.
 As before, by unitary invariance we can assume that $(i_1,\cdots,i_p) = (1,,\cdots,p)$, and we can 
 apply the results of \cite{guionnet2005fourier} to obtain a similar equation to eq.~\eqref{eq:Lp_guionnet}:

 \begin{align}\label{eq:guionnet_complex}
 L_p &\equiv \lim_{N \to \infty} N^{p-1} \EE \, \left[(UDU^\dagger)_{12} \cdots (UDU^\dagger)_{p1} \right], \nonumber \\
 &= \lim_{N \to \infty}  \frac{1}{N} \prod_{l=1}^p \left[\frac{\partial}{\partial b_i} + i \frac{\partial}{\partial c_i}\right]\left[\exp\left\{N\sum_{n=1}^\infty \frac{c_n(\rho_D)}{n} \mathrm{Tr} \, [M(\bb,\bc)^n] \right\}\right]_{\bb,\bc=0},
 \end{align}
 with now the matrix $M(\bb,\bc)$ defined as:
\begin{align}
 M(\bb,\bc) \equiv \frac{1}{2} \begin{pmatrix}
 0 & b_1+i c_1 & 0 & \cdots & 0 & b_p -i c_p\\
 b_1 -i c_1& 0 & b_2 +i c_2& \cdots & 0& 0 \\
 0 & b_2 -i c_2& 0 & \cdots & 0 &0 \\
 \vdots & \vdots & \vdots & \ddots & \vdots & \vdots \\
 0 & 0 & 0 & \cdots & 0 & b_{p-1} +i c_{p-1}\\
 b_p +i c_p& 0 & 0 & \cdots & b_{p-1} -i c_{p-1} & 0
 \end{pmatrix}.
 \end{align}
 Now, we have $\left[\frac{\partial}{\partial b_i} + i \frac{\partial}{\partial c_i}\right] M(\bb,\bc) = F_{i+1,i}$, 
 in which $(F_{a,b})_{ll'} \equiv \delta_{al} \delta_{bl'}$ are elementary non-symmetric matrices.
 In the exact same way as in Sec.~\ref{subsec:freecum_expectation}, the dominant contribution in eq.~\eqref{eq:guionnet_complex}
 will be given by differentiating a single time the exponential term, and creating a cycle with the matrices $F_{i+1,i}$.
 Note that contrary to the symmetric case of Sec.~\ref{subsec:freecum_expectation}, here \emph{only the directed cycle will contribute}, 
 whereas both possible directions of the cycle contributed in eq.~\eqref{eq:cycles_sym}.
 Indeed, the cycles in terms of the matrices $\{F_{a,b}\}$ have to be directed in order to yield a non-zero contribution:
\begin{align*}
  \mathrm{Tr} \, \left[F_{1,2} F_{2,1} F_{1,3}F_{3,2}F_{2,1}\right] &\neq 0 ,\\
  \mathrm{Tr} \, \left[F_{1,2} F_{2,1} F_{2,3}F_{3,2}F_{2,1}\right] &= 0 .
\end{align*}
Thus we have:
 \begin{align*}
   L_p &= \sum_{n=p}^\infty c_n(\rho_D) \mathrm{Tr} \, \left[\left(F_{1,p} F_{p,p-1} \cdots F_{2,1}\right) \, M(0,0)^{n-p}\right], \\
   &= c_p(\rho_D).
 \end{align*}
 In order to get $L^2$ concentration of the simple cycle on the free cumulant, one can exactly repeat the arguments of Sec.~\ref{subsec:concentration_J}.

\subsection{A note on the expectation of diagrams of diverging size}\label{subsec:diverging}

Although it is not directly useful in our Plefka expansions, another side question one can ask on the behavior of these diagrams is: how do diagrams that have a number of edges that
diverge with $N$ behave in the $N \to \infty$ limit ? In all of Sec.~\ref{sec:diagrammatics}
we only considered diagrams of finite size. The behavior of the HCIZ-type integrals with a matrix with diverging rank (as opposed to the finite-rank case)
has been rigorously treated in \cite{guionnet2005fourier} and then generalized in \cite{collins2007new} as soon as the rank of the matrix diverges sub linearly in $N$.
We recall the main result of \cite{collins2007new}:
\begin{theorem}[Collins-\'Sniadyc]\label{thm:collins} Let $A_N,B_N$ be diagonal real matrices of size $N$. Assume that the rank $M(N)$
  of $A_N$ is such that $M(N) = \smallO(N)$, and denote $a_{1,N} \geq \cdots \geq a_{M,N}$ the eigenvalues of $A_N$. Assume that the spectral measure of $B_N$
  converges a.s.\ and in the weak sense to a probability measure $\rho_B$, and that all elements of $A_N$ are bounded by a constant independent of $N$. Then one has:
  \begin{align}
    \frac{1}{N M(N)} \log \int_{\mathcal{U}(N)} \mathcal{D}U \, e^{N \mathrm{Tr}\left[A_N U B_N U^\dagger\right]} &= \frac{2}{M(N)} \mathrm{Tr}\left[G_{\rho_B}\left(A_N\right)\right] + \smallO_N(1). 
  \end{align}
A similar result holds for real orthogonal matrices:
\begin{align}
    \frac{1}{N M(N)} \log \int_{\mathcal{O}(N)} \mathcal{D}O \, e^{\frac{N}{2} \mathrm{Tr}\left[A_N O B_N O^\intercal\right]} &= \frac{1}{M(N)} \mathrm{Tr}\left[G_{\rho_B}\left(A_N\right)\right] + \smallO_N(1). 
\end{align}
\end{theorem}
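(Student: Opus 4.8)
The plan is to bootstrap Theorem~\ref{thm:collins} from the fixed‑rank Harish‑Chandra--Itzykson--Zuber asymptotics of \cite{guionnet2005fourier} (the ``Theorem~2'' already used throughout Sec.~\ref{sec:diagrammatics}) by a \emph{peeling} argument. Write the rank‑$M(N)$ matrix $A_N$ in its spectral decomposition $A_N = \sum_{k=1}^{M(N)} a_{k,N}\, e_k e_k^\intercal$, and let $I_N$ denote the orthogonal spherical integral $\int_{\mathcal{O}(N)} \mathcal{D}O\, e^{\frac{N}{2}\Tr[A_N O B_N O^\intercal]}$. Setting $u_k = O^\intercal e_k$, one has $\Tr[A_N O B_N O^\intercal] = \sum_k a_{k,N}\langle u_k, B_N u_k\rangle$ with $(u_k)$ orthonormal; conditioning on $u_1$, the remaining $(u_k)_{k\ge 2}$ are Haar on an orthonormal frame of $u_1^\perp$, so $I_N$ equals an expectation over $u_1$ of $e^{\frac{N}{2}a_{1,N}\langle u_1,B_N u_1\rangle}$ times a rank‑$(M-1)$ spherical integral in dimension $N-1$ with $B_N$ replaced by its compression to $u_1^\perp$. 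The key structural point is that this effective $B$‑matrix differs from $B_N$ only by deleting one dimension plus a rank‑one additive perturbation, hence has the \emph{same} limiting spectral measure $\rho_B$; iterating, the same holds after $j$ steps provided $j = \smallO(N)$. Peeling $M(N)$ times and invoking the rank‑one case — which, by the Laplace/saddle‑point computation already carried out in Sec.~\ref{subsubsec:direct_symmetric} (cf.\ eq.~\eqref{eq:introduction_gamma}), gives $\frac1N\log\!\int_{\mathbb{S}^{N-1}} e^{\frac{N}{2}a\langle u,B_N u\rangle}\mathrm{d}\sigma(u) = G_{\rho_B}(a) + \smallO_N(1)$ — would yield $\frac{1}{NM(N)}\log I_N = \frac{1}{M(N)}\sum_{k} G_{\rho_B}(a_{k,N}) + \smallO_N(1) = \frac{1}{M(N)}\Tr[G_{\rho_B}(A_N)] + \smallO_N(1)$, where $G_{\rho_B}(0)=0$ lets us discard the null part of $A_N$. (The factor $2$ in the unitary statement appears automatically because the rank‑one complex spherical integral converges to $2G_{\rho_B}(a)$, the Gaussian normalisation $\int e^{-c|z|^2}\mathrm{d}^2z = \pi/c$ replacing $\sqrt{\pi/c}$.)

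Making this rigorous requires controlling the error budget. First I would prove the rank‑one asymptotics with uniformity: for $a$ ranging over a fixed compact set, $\frac1N\log\!\int_{\mathbb{S}^{N-1}} e^{\frac{N}{2}a\langle u,B_N u\rangle}\mathrm{d}\sigma(u) = G_{\rho_B}(a) + \varepsilon_N(a)$ with $\sup_a|\varepsilon_N(a)|\to 0$; this follows from the one‑dimensional Laplace analysis of Sec.~\ref{subsubsec:direct_symmetric} together with weak convergence of the spectrum of $B_N$, the uniformity being a consequence of the boundedness hypothesis on the eigenvalues of $A_N$ and equicontinuity of $G_{\rho_B}$. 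Second, along the peeling, after removing $j \le M(N) = \smallO(N)$ directions the relevant matrix is the $(N-j)\times(N-j)$ compression of $B_N$ plus a rank‑$\leq j$ perturbation; by Weyl interlacing its empirical spectral measure still converges to $\rho_B$, and the discrepancy between $\frac{1}{N-j}\log(\text{step }j{+}1\text{ integral})$ and $G_{\rho_B}(a_{j+1,N})$ is $\smallO_N(1)$ uniformly in $j$. Summing $M(N)$ such errors and dividing by $M(N)$ keeps the total $\smallO_N(1)$.

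The main obstacle I expect is precisely this \emph{uniform, non‑compounding} control over $M(N)\to\infty$ peeling steps: one must show not only that each step is correct to leading exponential order but that the ``saddle'' optimising the $a_{k,N}$‑term decouples from the constraints imposed by the other $M-1$ directions — i.e.\ that the cross‑interaction is genuinely $\smallO(NM(N))$ — which is exactly where $M(N)=\smallO(N)$ enters, since the optimal directions for distinct weights live in subspaces that are asymptotically orthogonal and of negligible total dimension. A clean way to package this is to derive matching upper and lower bounds by a sub/super‑additivity estimate for $\log I_N$, combined with the log‑Sobolev concentration of measure on $\mathcal{O}(N)$ (resp.\ $\mathcal{U}(N)$ in the Hermitian variant) to pass from the expectation to its typical value with exponentially small fluctuations; the unitary case is handled verbatim after replacing real spherical integrals by their complex analogues.

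As a cross‑check, I note that the route actually taken in \cite{collins2007new} is representation‑theoretic: expand the HCIZ integral over partitions in zonal (resp.\ Schur) polynomials, $I_N \propto \sum_\lambda c_\lambda(N)\, Z_\lambda(A_N)Z_\lambda(B_N)/Z_\lambda(\mathrm{Id})$, then use the asymptotics of Weingarten functions together with the fact that $A_N$ has rank $\smallO(N)$ to argue that only partitions with $\smallO(N)$ boxes contribute, reducing the sum to the free‑cumulant generating function $G_{\rho_B}$. For the purposes of the present paper the peeling argument is preferable, being closer in spirit both to the diagrammatic manipulations of Sec.~\ref{sec:diagrammatics} and to the finite‑rank result of \cite{guionnet2005fourier} on which we already rely.
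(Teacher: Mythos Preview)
The paper does not prove this theorem. In Appendix~\ref{subsec:diverging} it is introduced with the sentence ``We recall the main result of \cite{collins2007new}'' and then stated without argument, to be used as a black-box input for extending the diagrammatic results of Sec.~\ref{sec:diagrammatics} to diagrams whose size diverges with $N$. There is therefore no proof in the paper to compare your proposal against.

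Your peeling/bootstrap sketch is nonetheless a plausible route, close in spirit to how \cite{guionnet2005fourier} handles the fixed-rank case, and you correctly identify the crux: obtaining error bounds in the rank-one step that are uniform over the $M(N)\to\infty$ peeling iterations, so that their Ces\`aro average remains $\smallO_N(1)$; this is exactly where the hypothesis $M(N)=\smallO(N)$ must do work. You are also right that the actual argument in \cite{collins2007new} is representation-theoretic (character/Weingarten expansion) rather than analytic. But for the purposes of the present paper none of this is needed: the result is simply cited.
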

The techniques of Sec.~\ref{sec:diagrammatics} thus generalize to this case.
We consider real symmetric matrices under Model~\ref{model:sym_rot_inv} (in the Hermitian case, the results also generalize following the line of Appendix~\ref{subsec:complex}). 
We say that a sequence $\{p(N)\}$ satisfies the \emph{bounded free cumulant property} if it satisfies the following:
\begin{property}\label{prop:freecum_bounded}
  There exists $C > 0$ such that for all $N \in \bbN$, $|c_{p(N)}(\rho_D)| < C$. 
\end{property}
We state two of the results of Sec.~\ref{sec:diagrammatics} that can be easily generalized to the diverging size case without changing any of the arguments:
\begin{enumerate}
  \item[$(a)$] Consider a sequence $p(N) = \smallO_N(N)$ that satisfies the bounded free cumulant property. 
   Then one obtains the generalization of eq.~\eqref{eq:conjecture}:
  \begin{align}
  N^{p(N)-1}  \int_{\mathcal{O}(N)} \mathcal{D}O \left[\left(O D O^\intercal\right)_{1 2} \left(O D O^\intercal\right)_{2 3} \cdots \left(O D O^\intercal\right)_{p(N) 1} \right]&= c_{p(N)}(\rho_D) + \smallO_N\left(1\right).
  \end{align}
  \item[$(c)$] Consider a cactus diagram $G$ composed of $P(N)$ simple cycles of size $(r_1(N),\cdots,r_P(N))$, joining at vertices. 
  Assume that $\sum_{i=1}^{P(N)} r_i(N) = \smallO_N(N)$ and that all the sequences $r_i(N)$ satisfy the bounded free cumulant property.
  Then one has:
  \begin{align}
    \EE\, G &= \left[\prod_{i=1}^{P(N)} c_{r_i(N)}(\rho_D)\right] \left(1 + \smallO_N(1)\right).
  \end{align}
\end{enumerate}
Other results obtained in Sec.~\ref{sec:diagrammatics} for finite-size diagrams might also be applicable to the diverging size case, and we leave them for future work.

\end{document}